\newif\ifarxiv
\newtheorem{observation}{Observation}
\newtheorem{proposition}{Proposition}
\newtheorem{clm}{Claim}
\newtheorem{fact}{Fact}
\newtheorem{definition}{Definition}
\newtheorem{theorem}{Theorem}
\newtheorem{remark}{Remark}
\crefname{table}{Table}{Tables}
\crefname{figure}{Figure}{Figures}
\crefname{theorem}{Theorem}{Theorems}
\crefname{corollary}{Corollary}{Corollaries}
\crefname{observation}{Observation}{Observations}
\crefname{lemma}{Lemma}{Lemmas}
\crefname{example}{Example}{Examples}
\crefname{reduction}{Reduction}{Reductions}
\crefname{construction}{Construction}{Constructions}
\crefname{subsection}{Section}{Sections}
\crefname{section}{Section}{Sections}
\crefname{claim}{Claim}{Claims}
\crefname{clm}{Claim}{Claims}
\crefname{proposition}{Proposition}{Propositions}
\crefname{definition}{Definition}{Definitions}
\crefname{remark}{Remark}{Remarks}
\crefname{fact}{Fact}{Facts}
\newcommand{\appendixtitle}{Supplementary Material for the Paper ``Multidimensional Manhattan Preferences''}
\newcommand{\close}{$\ast$}
\definecolor{colorV}{rgb}{0.01,0.6,0.1}
\definecolor{colorU}{rgb}{0.75,0.48,0.07}
\definecolor{colorW}{RGB}{16,95,231}
\definecolor{colorA}{RGB}{53,53,233}
\definecolor{colorB}{RGB}{255,20,7}
\definecolor{colorC}{HTML}{035900}
\definecolor{colorD}{RGB}{113,0,85}
\definecolor{colorE}{RGB}{233,134,20}
\definecolor{colorF}{RGB}{33,172,197}
\definecolor{alterC}{RGB}{35,143,35}
\newcommand{\wred}[1]{{\color{red!40!black}#1}}
\newcommand{\db}[1]{{\color{blue!40!black}#1}}
\tikzstyle{voter} = [draw=blue!70!black, circle, fill=blue!70!black, inner sep=.8pt]
\tikzstyle{vA} = [draw=colorA, circle, fill=colorA, inner sep=.8pt]
\tikzstyle{vB} = [draw=colorB, circle, fill=colorB, inner sep=.8pt]
\tikzstyle{vC} = [draw=colorC, circle, fill=colorC, inner sep=.8pt]
\tikzstyle{vD} = [draw=colorD, circle, fill=colorD, inner sep=.8pt]
\tikzstyle{vE} = [draw=colorE, circle, fill=colorE, inner sep=.8pt]
\tikzstyle{vF} = [draw=colorF, circle, fill=colorF, inner sep=.8pt]
\tikzstyle{voterU} = [draw=colorU, circle, fill=colorU, inner sep=.8pt]
\tikzstyle{voterV} = [draw=colorV, circle, fill=colorV, inner sep=.8pt]
\tikzstyle{voterW} = [draw=colorW, circle, fill=colorW, inner sep=.8pt]
\tikzstyle{alter} = [draw=alterC, rectangle, fill=alterC, inner sep=.8pt]
\tikzstyle{reg} = [text=gray]
\tikzstyle{lines} = [draw=gray!50]
\tikzstyle{linesdark} = [draw=black!50]
\tikzstyle{lineA} = [thick, blue!60]
\tikzstyle{lineB} = [dashed, red!70!black]
\tikzstyle{bistyle} = [thick,green!80!black]
\tikzstyle{BBstyle} = [draw, darkgreen,opacity=0.9]
\tikzstyle{bisectorstyle} = [draw, darkgreen]
\tikzstyle{redpattern} = [pattern={Lines[
\tikzstyle{redpatterntilt} = [pattern={Lines[
\tikzstyle{hide} = [draw=none, inner sep =0pt]
\newcommand{\gettikzxy}[3]{%
  \tikz@scan@one@point\pgfutil@firstofone#1\relax
  \edef#2{\the\pgf@x}%
  \edef#3{\the\pgf@y}%
}
\tikzset{
    hlines/.style={%
      label cells,
      initialise hlines,
      append after command={%
        \pgfextra{\pgfmathtruncatemacro{\hline@cols}{\pgfmatrixcurrentcolumn - 1}}%
        \ifx\hline@rows\pgfutil@empty
        \else
        \foreach \hline@row in \hline@rows {
          \pgfextra{\edef\hline@temp{
          (\tikzlastnode-cell-\hline@row-1.north west) edge[\csname hline@row@\hline@row\endcsname] (\tikzlastnode-cell-\hline@row-\hline@cols.north east)}}
          \hline@temp
        }
        \fi
      }
    },
    initialise hlines/.code={
      \global\let\hline@rows=\pgfutil@empty
      \let\hline=\hline@inmatrix
    }
}
\newcommand\hline@inmatrix[1][]{%
    \ifx\hline@rows\pgfutil@empty
     \xdef\hline@rows{\the\pgfmatrixcurrentrow}%
    \else
     \xdef\hline@rows{\hline@rows,\the\pgfmatrixcurrentrow}%
    \fi
    \expandafter\xdef\csname hline@row@\the\pgfmatrixcurrentrow\endcsname{#1}%
}
\def \sx {.32}
\def \sy {.28}
\def \ofs {40}
\newcommand{\drawgridA}{
  \def \sx {.25}
  \def \sy {.28}
  \def \ofs {4}
    \foreach \x / \i in {
      0/1,3/2,6/3,9/4,12/5%
    }{
      \foreach \y / \j in {
        0/1,1.5/2,3/3,4.5/4,6/5%
    } {
        \node at (\x*\sx,\y*\sy) (\i\j) {};
      }
    }
}
\newcommand{\drawgridSP}{
  \def \sx {.25}
  \def \sy {.28}
  \def \ofs {4}
    \foreach \x / \i in {
      0/1,3/2,6/3,9/4,12/5,15/6%
    }{
      \foreach \y / \j in {
        0/1,1.5/2,3/3,4.5/4,6/5,7.5/6%
    } {
        \node at (\x*\sx,\y*\sy) (\i\j) {};
      }
    }
}
\newcommand{\drawgridB}{
    \foreach \x / \i in {
      -1/1,2/2,4/3,7/4,9/5%
    }{
      \foreach \y / \j in {
        0/1,1.5/2,4/3,5.5/4,8.5/5%
      } {
        \node at (\x*\sx,\y*\sy) (\i\j) {};
      }
    }
  }
  \newcommand{\drawgridU}{
    \foreach \x / \i in {
      -1/1,1/2,3.8/3,5.5/4,7/5,10/6,11/7%
    }{
      \foreach \y / \j in {
        -.5/1,1/2,3/3,5/4,7/5,9/6,11/7%
      } {
        \node[text=red,scale=0.7] at (\x*\sx,\y*\sy) (\i\j) {}; %
      }
    }
  }
  \newcommand{\drawregU}{
    
  \begin{pgfonlayer}{background}
    \foreach \s / \t in {12/72,13/73,14/74,15/75,16/76,21/27,31/37,41/47,51/57,61/67} {
      \path[draw,lines] (\s) edge (\t);
    }
  \end{pgfonlayer}
}
\newcommand{\drawregNN}{
    
  \begin{pgfonlayer}{background}
    \foreach \s / \t in {12/52,13/53,14/54,21/25,31/35,41/45} {
      \path[draw,lines] (\s) edge (\t);
    }
  \end{pgfonlayer}
}
\newcommand{\addN}{
    \foreach \r / \d in {5/4, 4/3,3/2,2/1} {
      \foreach \i / \j in {1/2,2/3,3/4,4/5} {
        \gettikzxy{(\i\r)}{\px}{\py}
        \gettikzxy{(\j\r)}{\qx}{\qy}
        \gettikzxy{(\i\d)}{\rx}{\ry}
        \node[reg, below right = -4pt and -4pt of \i\r] {
          \pgfmathparse{\i+4*(-\r+5)}%
          \footnotesize \pgfmathprintnumber{\pgfmathresult}};
      }
    }
}
\newcommand{\drawreg}{
    
  \begin{pgfonlayer}{background}
    \foreach \s / \t in {12/52,13/53,14/54,21/25,31/35,41/45} {
      \path[draw,lines] (\s) edge (\t);
    }

  \end{pgfonlayer}
}
\newcommand{\drawcircle}[3]{
  \gettikzxy{(#1)}{\xx}{\xy}
  \gettikzxy{(#2)}{\vx}{\vy}
  
  \pgfmathsetlengthmacro\disRX{abs(\vx-\xx)+abs(\vy-\xy)}
  \begin{scope}[xshift=\vx,yshift=\vy]
    \draw[#3] (-\disRX,0) -- (0,-\disRX) -- (\disRX,0) -- (0,\disRX) -- (-\disRX,0);
  \end{scope}
}
\newcommand{\drawcirclefill}[3]{
  \gettikzxy{(#1)}{\xx}{\xy}
  \gettikzxy{(#2)}{\vx}{\vy}
  
  \pgfmathsetlengthmacro\disRX{abs(\vx-\xx)+abs(\vy-\xy)}
  \begin{scope}[xshift=\vx,yshift=\vy]
    \filldraw[#3] (-\disRX,0) -- (0,-\disRX) -- (\disRX,0) -- (0,\disRX) -- (-\disRX,0);
  \end{scope}
}
\newcommand{\drawcircleD}[7]{
  \gettikzxy{(#1)}{\xx}{\xy}
  \gettikzxy{(#2)}{\vx}{\vy}
  \gettikzxy{(#4)}{\ux}{\uy}
  \gettikzxy{(#5)}{\wx}{\wy}
  \gettikzxy{(#6)}{\ax}{\ay}
  \gettikzxy{(#7)}{\bx}{\by}
  
  \pgfmathsetlengthmacro\disRX{abs(\vx-\xx)+abs(\vy-\xy)} %
  \pgfmathsetlengthmacro\lx{min(\ax,\disRX)}%
  \pgfmathsetlengthmacro\rx{\disRX+\wy}%

  \begin{scope}[xshift=\vx,yshift=\vy]
    \ifthenelse{\lengthtest{\ux > -\disRX}}{}{
      \ifthenelse{\lengthtest{\wy > -\disRX}}{%
        \draw[#3] (-\disRX, 0) -- (-\disRX-\wy, \wy);}{
        \draw[#3] (-\disRX, 0) -- (0, -\disRX);}}
    \ifthenelse{\lengthtest{\rx < \ax}}{
      \ifthenelse{\lengthtest{\wy > -\disRX}}{\draw[#3] (\disRX+\wy,\wy) -- (\lx,0);}
      {\draw[#3] (0,-\disRX) -- (\lx,0);}
    }
    {}
    \ifthenelse{\lengthtest{\disRX < \ax}}
    {\ifthenelse{\lengthtest{\by > \disRX}}
      {\draw[#3] (\disRX, 0) -- (0,\disRX);}
      {\draw[#3] (\disRX, 0) -- (\disRX-\by,\by);}
    }
    {\draw[#3] (\ax, \disRX-\ax) -- (0,\disRX);}
    \ifthenelse{\lengthtest{\by > \disRX}}
    {%
      \ifthenelse{\lengthtest{\ux > -\disRX}}
      {\draw[#3] (0,\disRX) -- (\ux,\ux+\disRX);}
      {\draw[#3] (0,\disRX) -- (-\disRX,0);}%
    }
    {}
  \end{scope}
}
\newcommand{\exPosi}{
  \foreach \x / \y / \n / \nn / \p / \dx / \dy / \t  / \c in
  {6/0/a1/{1}/{above right}/0/0/alter/black,
    0/6/a2/{2}/{above right}/0/0/alter/black, %
    0/0/a3/{3}/{above right}/0/0/alter/black, %
    6/5/v1/{v_1}/{below right}/0.1/0.1/vA/colorA, %
    6/1/v2/{v_2}/{below right}/0/0/vB/colorB, %
    5/6/v3/{v_3}/{below left}/0/0/vC/colorC, %
    1/6/v4/{v_4}/{below right}/0/0/vD/colorD, %
    2/1/v5/{v_5}/{above right}/0/0/vE/colorE, %
    1/2/v6/{v_6}/{above right}/0/-0.1/vF/colorF%
  } {
    \node[\t] at (\x,\y) (\n) {};
    \node[draw=none,\p = \dx and \dy of \n,fill=white, inner sep=.5pt, text=\c] {$\nn$};
    \node[\t] at (\x,\y) (\n) {};
  }    
}
\newcommand{\exPosiTmod}{
  \foreach \x / \y / \n / \nn / \p / \dx / \dy / \t  / \c in
  {%
    -2/2/a1/{1}/{above left}/-0.1/0.05/alter/black, %
    -3/5/a2/{2}/{above left}/0/0/alter/black, %
    1/3/a3/{3}/{above right}/0.1/-0.1/alter/black, 
    3/3/a4/{4}/{above right}/-0.1/0/alter/black, %
    1/7/a5/{5}/{above right}/0/0/alter/black,
    -5/0/v1/{v_1}/{below left}/0.1/0/vA/colorA, %
    5/-0/v2/{v_2}/{below right}/0/0/vB/colorB%
  } {
    \node[\t] at (\x,\y) (\n) {};
    \node[draw=none,circle,\p = \dx and \dy of \n,fill=white, inner sep=.1pt, text=\c] {$\nn$};
    \node[\t] at (\x,\y) (\n) {};
  }
}
\newtheorem{example}{Example}
\newtheorem{lemma}{Lemma}
\crefname{table}{Table}{Tables}
\crefname{figure}{Figure}{Figures}
\crefname{theorem}{Theorem}{Theorems}
\crefname{corollary}{Corollary}{Corollaries}
\crefname{observation}{Observation}{Observations}
\crefname{lemma}{Lemma}{Lemmas}
\crefname{example}{Example}{Examples}
\crefname{reduction}{Reduction}{Reductions}
\crefname{construction}{Construction}{Constructions}
\crefname{subsection}{Section}{Sections}
\crefname{section}{Section}{Sections}
\crefname{claim}{Claim}{Claims}
\crefname{clm}{Claim}{Claims}
\crefname{proposition}{Proposition}{Propositions}
\crefname{appendix}{Appendix}{Appendices}
\crefname{definition}{Definition}{Definitions}
\newcommand{\rz}{\ensuremath{\mathds{R}}}
\newcommand{\dspace}{\ensuremath{\mathds{R}^{d}}}
\newcommand{\ppp}{{\cal P}}
\newcommand{\vvv}{{\cal V}}
\newcommand{\aaa}{{\cal A}}
\newcommand{\rrr}{{\cal R}}
\newcommand{\sss}{{\cal S}}
\newcommand{\RR}{\mathds{R}}
\newcommand{\pa}{\ensuremath{\vect{a}}}
\newcommand{\pb}{\ensuremath{\vect{b}}}
\newcommand{\pc}{{\ensuremath{\color{blue}\vect{c}}}}
\newcommand{\pd}{\ensuremath{\vect{d}}}
\newcommand{\px}{\ensuremath{\vect{x}}}
\newcommand{\py}{\ensuremath{\vect{y}}}
\newcommand{\pz}{\ensuremath{\vect{z}}}
\newcommand{\ps}{\ensuremath{\vect{s}}}
\newcommand{\pt}{\ensuremath{\vect{t}}}
\newcommand{\pr}{\ensuremath{\vect{r}}}
\newcommand{\pp}{\ensuremath{\vect{p}}}
\newcommand{\pq}{\ensuremath{\vect{q}}}
\newcommand{\pu}{\ensuremath{\vect{u}}}
\newcommand{\pv}{{\ensuremath{\color{red!50!black}\vect{v}}}}
\newcommand{\pw}{\ensuremath{\vect{w}}}
\newcommand{\dNE}{\ensuremath{\mathsf{NE}}}
\newcommand{\dSE}{\ensuremath{\mathsf{SE}}}
\newcommand{\dNW}{\ensuremath{\mathsf{NW}}}
\newcommand{\dSW}{\ensuremath{\mathsf{SW}}}
\newcommand{\myemph}[1]{{\color{green!30!black}\emph{#1}}}
\definecolor{darkgreen}{HTML}{035900}
\definecolor{darkblue}{rgb}{0,0,0.4}
\definecolor{winered}{rgb}{0.6,0.1,0.1}
\definecolor{lightblue}{rgb}{0.527,0.805,0.977}
\newcommand{\worstinconsistentconfig}{all-triples worst-diverse configuration\xspace}
\newcommand{\Worstinconsistentconfig}{All-triples worst-diverse configuration\xspace}
\newcommand{\dde}[1][]{{\ifthenelse{\equal{#1}{}}{$d$}{$#1$}-dimensional Euclidean}\xspace}
\newcommand{\dder}[1][]{{\ifthenelse{\equal{#1}{}}{$d$}{$#1$}-dimensional Euclidean representation}\xspace}
\newcommand{\dEuclid}[1][]{{\ifthenelse{\equal{#1}{}}{$d$}{$#1$}-{\color{blue!40!black}Euclidean}}\xspace}
\newcommand{\dManhattan}[1][]{{\ifthenelse{\equal{#1}{}}{$d$}{$#1$}-{\color{red!40!black}Manhattan}}\xspace}
\newcommand{\dMax}[1][]{{\ifthenelse{\equal{#1}{}}{$d$}{$#1$}-{\color{green!40!black}Max}}\xspace}
\newcommand{\dEuclidness}[1][]{{\ifthenelse{\equal{#1}{}}{$d$}{$#1$}-{\color{blue!40!black}Euclideanness}}\xspace}
\newcommand{\dManhattanness}[1][]{{\ifthenelse{\equal{#1}{}}{$d$}{$#1$}-{\color{red!40!black}Manhattanness}}\xspace}
\newcommand{\dMaxness}[1][]{{\ifthenelse{\equal{#1}{}}{$d$}{$#1$}-{\color{green!40!black}Maxness}}\xspace}
\newcommand{\dSP}[1][]{{\ifthenelse{\equal{#1}{}}{$d$}{$#1$}-{\color{orange!40!black}dimensional single-peaked}}\xspace}
\newcommand{\SP}[1][]{{{\color{orange!40!black}single-peaked}}\xspace}
\newcommand{\SC}[1][]{{{\color{orange!40!black}single-crossing}}\xspace}
\newcommand{\dSPh}[1][]{{\ifthenelse{\equal{#1}{}}{$d$}{$#1$}-{\color{orange!40!black}dimensional single-peaked}}\xspace}
\newcommand{\SPness}[1][]{{{\color{orange!40!black}single-peakedness}}\xspace}
\newcommand{\cSPness}[1][]{{{\color{orange!40!black}Single-peakedness}}\xspace}
\newcommand{\SCness}[1][]{{{\color{orange!40!black}single-crossingness}}\xspace}
\newcommand{\dSPhness}[1][]{{\ifthenelse{\equal{#1}{}}{$d$}{$#1$}-{\color{orange!40!black}dimensional single-peakedness}}\xspace}
\newcommand{\ax}{\triangleright}
\newcommand{\axb}{\overrightarrow{\ax}}
\newcommand{\dis}[1]{\ensuremath{\|#1 \|}}
\newcommand{\Edis}[1]{\ensuremath{\|#1 \|_2}}
\newcommand{\Mdis}[1]{\ensuremath{\|#1 \|_1}}
\newcommand{\Maxdis}[1]{\ensuremath{\|#1 \|_{\infty}}}
\newcommand{\pref}{\ensuremath{\succ}}
\newcommand{\prefeq}{\ensuremath{\succeq}}
\newcommand{\ETR}{\ensuremath{\exists\mathds{R}}}
\newcommand{\rank}{\ensuremath{\mathsf{rk}}}
\DeclareMathOperator*{\argmax}{arg\,max}
\DeclareMathOperator*{\argmin}{arg\,min}
\newcommand{\Mhalfspace}{\ensuremath{\mathsf{H}}}
\newcommand{\vect}[1]{\ensuremath{\boldsymbol{#1}}}
\newcommand{\BB}{\ensuremath{\mathsf{BB}}}
\newcommand{\bet}{\textsf{BE}}
\newcommand{\ext}{\textsf{EX}}
\newcounter{myprofilecounter}
\newcommand{\mypar}[1]{\smallskip
  \noindent\textbf{#1}}
\newcounter{vtwocounter}
\newcounter{vthreecounter}
\newcounter{vfourcounter}
\newcounter{vfivecounter}
\newcounter{vsixcounter}
\newcounter{vsevencounter}
\newcounter{veightcounter}
\newcounter{betcounter}
\newcounter{extcounter}
\newcommand{\basicfigfivepoints}{
	\foreach \x / \y / \n in {0/0/r,3/1/w} {
	\node[voter,color=red] at (\x , \y) (\n) {};
	\node[above left = -2 pt and -2 pt of \n] {$\n$};
}

}
\newcommand{\appsymb}{$\star$}
\newcommand{\toappendix}[1]{%
  \gappto{\appendixtext}{
    {#1}
   }
}
\newcommand{\appendixproofwithstatement}[3]{%
  \gappto{\appendixtext}{
    \subsection{Proof of \cref{#1}}\label{proof:#1}
    #2 #3
  }
}
\newcommand{\appendixsection}[1]{%
  \gappto{\appendixtext}{
    \section{Additional Material for Section~\ref{#1}}
    \label{appsec:#1}
  }
}
\newcommand{\clmmManhattan}{%
  For each voter~$v_i\in \vvv$ and each alternative~$j\in \aaa$, we have
    \begin{align*}
      \Mdis{E(v_i)-E(j)} = \begin{cases}
        \Mdis{E(v_i)} + 2(m-E(v_i)[j]), & \text{ if } j \neq m,\\
        \Mdis{E(v_i)}, & \text{ otherwise. }
        \end{cases}
    \end{align*}
  }
\begin{document}

\title[mode=title]{Multidimensional Manhattan Preferences}
\shorttitle{Multidimensional Manhattan Preferences}

\author[l1]{Jiehua Chen}
\ead{jiehua.chen@tuwien.ac.at}
\author[l1]{Martin N{\"o}llenburg}
\ead{martin.noellenburg@tuwien.ac.at}
\author[l1]{Sofia Simola}
\ead{sofia.simola@tuwien.ac.at}
\author[l1]{Ana{\"i}s Villedieu}
\ead{anais.villedieu@tuwien.ac.at}
\author[l1]{Markus Wallinger}
\ead{markus.wallinger@tuwien.ac.at}

\shortauthors{Chen et~al.}

\affiliation[l1]{organization={TU Wien},%
            city={Vienna},
            country={Austria}}

\maketitle

\begin{abstract}
  A preference profile (i.e., a collection of linear preference orders of the voters over a set of alternatives) with $m$~alternatives and $n$~voters is \dManhattan (resp.\ \dEuclid) 
  if both the alternatives and the voters can be placed into a $d$-dimensional space 
  such that between each pair of alternatives, every voter prefers the one 
  which has a shorter Manhattan (resp.\ Euclidean) distance to the voter.

  We study how \dManhattan preference profiles depend on the values~$m$ and $n$.
  First, we provide explicit constructions to show that each preference profile with $m$ alternatives and $n$ voters is \dManhattan{} whenever $d \ge \min(n, m-1)$.
  We further extend this positive result for other $p$-norms with $p \in \mathds{R}_{\geq 1} \cup \{\infty\}$.

  Second, for $d=2$, we develop \emph{forbidden substructures}---preference patterns among small sets of voters that constrain any \dManhattan[2] embedding---and use them to
  show that the smallest non-\dManhattan[2] preference profile has either $3$ voters and $6$ alternatives, or $4$ voters and $5$ alternatives, or $5$ voters and $4$ alternatives.
  This is more complex than the case with \dEuclid{} preferences~(see (Bogomolnaia and Laslier, 2007) and (Bulteau and Chen, 2022)).
 
 We also show that \dManhattan{} preferences imply $2^{d-1}$-dimensional single-peakedness, while \dManhattanness[2] is incomparable with single-peakedness and single-crossingness.
  
\end{abstract}

\section{Introduction}\label{sec:intro}
Modeling voters' linear preferences (aka.\ rankings) over a set of alternatives as geometric distances is an approach popular in many research fields such as economics~\citep{Hotelling1929,Downs1957,Eguia2011},
political and social sciences~\citep{Sto1963,Poo1989,Ene1990,BoLa2006},
and psychology~\citep{Coombs1964,BorGroMai2018}.
The idea is to consider the alternatives and voters as points in a $d$-dimensional space such that 
\noindent\begin{align*}%
  \text{for each two alternatives, each voter prefers the one that is \emph{closer} to her.}\tag{\close}\label{eq:closeness}
\end{align*}%
If the proximity is measured via the Euclidean distance, then \myemph{preference profiles} (i.e., a collection of distinct linear preference orders specifying voters' preferences) obeying \eqref{eq:closeness} are called \emph{\dEuclid}.
While the \dEuclid{} model seems to be canonical, in real life the shortest path between two points may be Manhattan rather than Euclidean.
For instance, in urban geography, the alternatives (e.g., a shop or a supermarket) and the voters (e.g., individuals) are often located on grid-like streets.
That is, the distance between an alternative and a voter is more likely to be measured according to the Manhattan distance (aka.\ Taxicab distance or $1$-norm-distance), i.e., the sum of the absolute differences of the coordinates of the alternative and the voter.
Similarly to the Euclidean preference notion, we call a preference profile \emph{\dManhattan} if there exists an embedding of the voters and the alternatives which satisfies condition~\eqref{eq:closeness} under the Manhattan distance. %
Indeed, Manhattan preferences have been studied for a wide range of applications such as facility location~\citep{LarSad1983,SuiBoutilier2015adt}, group decision making~\citep{SSL2007}, and voting and committee elections~\citep{EckKla2010}. Many voting advice applications, such as the German Wahl-O-Mat~\citep{wahlomat} and Finnish Ylen vaalikone~\citep{finylen} use Manhattan distances to measure the distance between a voter and alternative, indicating that such distances may be perceived as more natural in human decision making.

Despite their practical relevance, Manhattan preferences have attracted far less attention than their close relative Euclidean preferences. Bogomolnaia and Laslier~\cite{BoLa2006} studied how restrictive the assumption of Euclidean preferences is.
They showed that for every~$n$, $m$, and $d$,
  every preference profile with $m$ alternatives and $n$ voters, and with possibly indifferent preferences, is \dEuclid{} if and only if $d\ge \min(n,m-1)$.\footnote{In fact, their proof for showing that every profile with $n$ voters is \dEuclid[n] does not work for indifferent preferences; see \cref{ex:BoLaBug}.
    However, it is fairly straightforward to fix it.
    We provide such a fix in \cref{prop:BoLaFix} since it also works for preferences under other $p$-norms.}
For $d=1$, their smallest non-\dEuclid[1] preference profile with strict preferences consists of either $3$ voters and $3$ alternatives or $2$ voters and $4$~alternatives, which is tight according to Chen and Grottke~\cite{ChenGrottke2021}. 
For $d=2$, their smallest non-\dEuclid[2] profile consists of either $4$ voters and $4$ alternatives or $3$ voters and $8$~alternatives, which is also tight by \citet{BulChe2022}.
To the best of our knowledge, 
no analogous characterization of \dManhattan{} preferences exists.

\citet{Bennett_Hays_1960},\cite{Hays_Bennett_1961} study maximally \dEuclid{} profiles.
They show that a \dEuclid[2] preference profile with four alternatives can contain up to $18$ distinct preference orders and offer a general result for $d$ dimensions and $m$ alternatives.
Recently, 
Escoffier et al.~\cite{EST2021Euclidlp}\footnote{The work of Escoffier et al.\ and ours were carried out independently and concurrently. A conference version of our paper appeared at LATIN 2022, and our preprint~\cite{ourArxiv} appeared on arXiv in January 2022; their preprint appeared on arXiv in February 2022.} show that a
\dManhattan[2] preference profile for four alternatives can contain up to $19$ distinct preference orders.

From the computational point of view, it is known that for $d=1$,
deciding whether a given preference profile is Euclidean (and hence Manhattan) can be done in polynomial time~\citep{DoiFal1994,Knoblauch2010,ElkFal2014}.
For each fixed~$d \ge 2$, however, 
testing Euclidean preferences is complete for the complexity class \myemph{existential theory of the reals~$\ETR$},
while it is straightforward to see that the problem for the Manhattan case is contained in NP~\citep{Peters2017}; note that NP{\,}$\subseteq\!\ETR$. See \cite{Schaefer2010ETR} for more information on the complexity class~$\ETR$.
Nothing about the complexity lower bound is known for Manhattan preferences.

\paragraph{Our contribution.}
In this paper, we study how to find a \dManhattan{} embedding for a given preference profile and what is the smallest dimension for such an embedding.

First, we prove that, similarly to the Euclidean case, every preference profile with $m$ alternatives and $n$ voters is \dManhattan\ if $d\ge \min(m-1,n)$ (\cref{thm:d=n->dManhattan,thm:d=m-1->Manhattan}).
This extends for other $p$-norms as well for $p \in \mathds{R}_{\geq 1} \cup \{\infty\}$; see appendix.

Our main technical contribution lies in developing \emph{forbidden substructures} for \dManhattan[2] embeddings.
Specifically, we introduce the \bet-configuration (\cref{def:3voters-forbidden-profiles-B}) and the \ext-configuration (\cref{def:3voters-forbidden-profiles-E}), which describe preference patterns among three voters that restrict how voters can be placed relative to one another in any \dManhattan[2] embedding.
We prove that all \dManhattan[2] embeddings must respect these constraints (\cref{lem:bet-property,lem:ext-property}), and we additionally identify the \worstinconsistentconfig (\cref{def:threeworst}), which interacts with the geometry of bounding boxes in two dimensions to preclude \dMax[2] embeddability.
These forbidden substructures are the key tool in our proofs of the non-embeddability results below, and we believe they are of independent interest:
they constitute the first forbidden subprofile characterization results for \dManhattan[2] preferences, and they may serve as building blocks for future complexity-theoretic results, e.g., for constructing gadgets in NP-hardness reductions.

Using these forbidden substructures, we determine tight bounds on the smallest non-\dManhattan[2] profile.
We show that an arbitrary preference profile with $n$ voters and $m$ alternatives is \dManhattan[2] if and only if
either $m\le 3$ (\cref{thm:d=m-1->Manhattan,thm:no-n5-m4}),
or $n\le 2$ (\cref{thm:d=n->dManhattan}), 
or $n \le 3$ and $m\le 5$ (\cref{thm:no-n3-m6} and \cref{prop:n3-m5+n4-m4}),
or $n\le 4$ and $m\le 4$ (\cref{thm:no-n4-m5} and \cref{prop:n3-m5+n4-m4}).
Note that this is considerably different from the Euclidean case:
There exists a non-\dEuclid[2] preference profile with $n=4$ and $m=4$,
while every preference profile with $n\le 3$ and $m\le 7$ is \dEuclid[2]. 
The ``if'' part is verified computationally.
See \cref{fig:overview} for a summary for~$d=2$.

We also study the relationship between \dManhattan[2] preferences and \SP and/or \SC preferences. 
\cSPness\ and \SCness\ are well-studied restricted preference structures, see \cref{def:SP,def:SC} from \cref{sec:other_domains}.
Our finding is that \dManhattan[2] preferences and the other two preference structures are in general incomparable.

\begin{figure}
  \centering
  \begin{tikzpicture}[black, draw=black!70, xscale=0.25, yscale=-0.25]
    \node[text width=1.4cm] at (-0.5,0.5){No\\ instances}; 
    \foreach \nc in {3,...,8} \node[] at (\nc*2,1.5) {$\nc$};
    \node at (4.5,1.5) {$\le$};
    \foreach \nv in {2,...,5} \node[] at (2,\nv*2+0.5) {$\nv$};
    \node at (0.9,4+0.5) {$\le$};
    \node[] at (11,-0.5){\#alternatives $(m)$};
    \node[] at (2,12.2) {$\vdots$}; 
    \node[] at (18,1.5) {$\cdots$};
    \node[rotate=90] at (-1.7,8) {\#voters $(n)$};
    \draw (3,-1.8)--(3,13);
    \draw (-3.4,3)--(19,3);

    \node[] at (8,10) {\wred{$\times$}};
    \node[] at (10,8) {\wred{$\times$}};
    \node[] at (12,6) {\wred{$\times$}};

    \tikzset{ every path/.style={draw=red!50!black,  line width=1pt}}
     \draw (7,13) -- (7,9) -- (9,9) -- (9,7) -- (11, 7) -- (11,5) -- (19,5);

    \node[] at (8,8) {\db{$\bullet$}};
    \node[] at (16,6) {\db{$\bullet$}};

    \tikzset{ every path/.style={draw=blue!40!black,dashed, line width=1.4pt}}
     \draw (7,13) -- (7,7) -- (15,7) -- (15,5) -- (19, 5);

    \begin{pgfonlayer}{background}
      \filldraw[fill=blue!20,draw=none] (3,13) -- (7,13) -- (7,7) -- (15,7) -- (15,5) -- (19, 5) -- (19,3) -- (3,3) -- cycle;
    \filldraw[draw=none,pattern={Lines[
      distance=2mm,
      angle=45,
      line width=0.1mm]}, pattern color=red!90!black] (3,13) -- (7,13) -- (7,9) -- (9,9) -- (9,7) -- (11, 7) -- (11,5) -- (19,5) -- (19,3) -- (3,3) -- cycle;
     \end{pgfonlayer}

\end{tikzpicture} 
\caption{Boundaries of non-\dEuclid[2] (resp.\ non-\dManhattan[2]) profiles with a given number of voters and alternatives. Each \db{blue} bullet (resp.\ \wred{red} cross) represents the existence of such a non-\dEuclid[2] (resp. non-\dManhattan[2]) profile.}\label{fig:overview}
\end{figure}

\paragraph{Paper structure.}
The paper is organized as follows:
\cref{sec:defi} introduces necessary definitions and notations. 
In \cref{sec:Manhattan-positive} we show that every profile with $m$ alternatives and $n$ voters is \dManhattan{} whenever $d \ge \min(n, m - 1)$.
These results extend for an arbitrary $\ell_p$ norm for every $p \in \mathds{R}_{\geq 1} \cup \{\infty\}$.
In \cref{sec:vconfigs} we develop our forbidden substructures---the \bet-configuration, the \ext-configuration, and the \worstinconsistentconfig---and prove that they constrain \dManhattan[2] embeddings. These are the central technical tools of the paper and we believe them to be of independent interest for future research on recognizing \dManhattan[2] profiles.
In \cref{sec:Manhattan-negative},
we apply these forbidden substructures to prove that our smallest non-\dManhattan[2] profiles are indeed not \dManhattan[2], and we verify via a computer program that all strictly smaller profiles are \dManhattan[2], yielding a tight characterization.
In \cref{sec:other_domains} we discuss the relation between \dManhattan{} preferences and other restricted preference structures.
We conclude with future research directions in \cref{sec:conclude}.
For a better presentation, proofs of the results and additional materials marked with (\appsymb) are deferred to the appendix.

\section{Preliminaries}\label{sec:defi}%

Given a non-negative integer~$t$, we use \myemph{$[t]$} to denote the set~$\{1,\dots,t\}$.
Let~$\px$ denote a vector of length~$d$ or a point in a $d$-dimensional space, and
let $i$ denote an index $i\in [d]$.
We use \myemph{$\px[i]$} to refer to the $i^{\text{th}}$~value in~$\px$.

Let $\aaa\coloneqq [m]$ be a set of alternatives. %
A \myemph{preference order}~$\pref$ of $\aaa$ is a linear order (a.k.a.\ permutation or ranking) of~$\aaa$; a linear order is a binary relation which is total, irreflexive, and transitive.
For two distinct alternatives~$a$ and $b$, the relation \myemph{$a\pref b$} means that $a$ is preferred to (or in other words, ranked higher than) $b$ in~$\pref$.
An alternative~$c$ is the \myemph{most-preferred} alternative in~$\pref$
if for each alternative~$b\in \aaa \setminus \{c\}$ it holds that $c \pref b$.
Let $\pref$ be a preference order over~$\aaa$. 
For a subset~$B\subseteq \aaa$ of alternatives and an alternative~$c$ not in $B$, 
we use \myemph{$B\pref c$} (resp.\ \myemph{$c \pref B$}) to denote that
for each $b\in B$ it holds that $b\pref c$ (resp.\ $c\pref b$).
A \myemph{preference profile} (or \myemph{profile} in short) $\ppp$ specifies the preference orders of a number of voters over a set of alternatives.
Formally, \myemph{$\ppp \coloneqq (\aaa, \vvv, \rrr)$},
where $\aaa$ denotes the set of $m$ alternatives,
$\vvv$ denotes the set of $n$~voters,
and $\rrr \coloneqq (\pref_1, \dots, \pref_n)$ is a collection of $n$ preference orders
such that each voter~$v_i\in \vvv$ ranks the alternatives according to the preference order~$\pref_i$ on~$\aaa$.
We may omit the subscript~$i$ from~$\pref_i$ if it is clear from the context. 
Throughout the paper, if not explicitly stated otherwise, we assume~$\ppp$ is a preference profile of the form~$(\aaa,\vvv,\rrr)$.
For notational convenience, for each alternative~$a\in \aaa$ and each voter~$v_i\in \vvv$,
let \myemph{$\rank_{i}(a)$} denote the rank of alternative~$a$ in the preference order~$\pref_i$,
which is the number of alternatives that are preferred to~$a$ by voter~$v_i$, i.e.,
$\rank_{i}(a)=|\{b \in \aaa \mid b\pref_i a\}|$.
For instance, if voter~$v_i$ has preference order~$2 \succ_i 3 \succ_i 1 \succ_i 4$, then $\rank_i(3) = 1$.

Given a $d$-dimensional vector~$\px \in \dspace$ and an $p$-norm with $p \in \mathds{R}_{\geq 1}$,
let \myemph{$\dis{\px}_{p}$} denote the $p$-norm of~$\px$,
i.e., $\dis{\px}_{p} = (| \px[1]| ^p+\dots+| \px[d]| ^p)^{1/p}$,
and let \myemph{$\dis{\px}_{\infty}$} denote the $\infty$-norm of~$\px$,
i.e., $\dis{\px}_{p} = \max\{\px[i]\}_{i\in [d]}$.
Given two points~$\pu, \pw$ in~$\dspace$ and $p \in \mathds{R}_{\geq 1} \cup \{\infty\}$, we use the $p$-norm of $\pu-\pw$, i.e., \myemph{$\dis{\pu-\pw}_{p}$}, to denote the $\ell_p$-distance of $\pu$ and $\pw$.
By convention, we use \myemph{Manhattan}, \myemph{Euclidean}, and \myemph{Max} distances to refer to $\ell_1$-, $\ell_2$-, and $\ell_{\infty}$-distances, respectively.

\begin{figure}
	\def \xs {.3}
	\def \xy {.2}
	\centering
	\begin{tikzpicture}[black]
		\foreach \x / \y / \n / \nn / \typ / \p / \dx / \dy in {-4/0/u/u/voterU/{below left}/0/-4, 6/0/v/v/voterW/{above right}/-1/-1} {
			\node[\typ] at (\x*\xs,\y*\xy) (\n) {};
			\node[\p = \dx pt and \dy pt of \n] {$\nn$};
		}

		\coordinate (y) at (1*\xs,2*\xy);
		\coordinate (x) at (1*\xs,-2*\xy);

		\gettikzxy{(v)}{\vx}{\vy}
		\gettikzxy{(u)}{\ux}{\uy}
		\gettikzxy{(x)}{\xx}{\xy}
		\gettikzxy{(y)}{\yx}{\yy}

		\draw[gray,dashed] (v) -- (\ux, \vy);
		\draw[gray,dashed] (u) -- (\ux, \vy);
		\draw[gray,dashed] (v) -- (\vx, \uy);
		\draw[gray,dashed] (u) -- (\vx, \uy);

		\draw[bistyle] (\xx,\xy-20) -- (x) -- (y) -- (\yx,\yy+20);
		
	\end{tikzpicture}
	\quad~\quad
	\begin{tikzpicture}[black]
		\foreach \x / \y / \n / \nn / \typ / \p / \dx / \dy in {-4/-2/u/u/voterU/{below left}/-1/-4, 6/2/v/v/voterW/{above right}/-1/-1} {
			\node[\typ] at (\x*\xs,\y*\xy) (\n) {};
			\node[\p = \dx pt and \dy pt of \n] {$\nn$};
		}

		\coordinate (y) at (-1*\xs,2*\xy);
		\coordinate (x) at (3*\xs,-2*\xy);

		\gettikzxy{(v)}{\vx}{\vy}
		\gettikzxy{(u)}{\ux}{\uy}
		\gettikzxy{(x)}{\xx}{\xy}
		\gettikzxy{(y)}{\yx}{\yy}

		\draw[gray,dashed] (v) -- (\ux, \vy);
		\draw[gray,dashed] (u) -- (\ux, \vy);
		\draw[gray,dashed] (v) -- (\vx, \uy);
		\draw[gray,dashed] (u) -- (\vx, \uy);

		\draw[bistyle] (\xx,\xy-20) -- (x) -- (y) -- (\yx,\yy+20);
		
	\end{tikzpicture}
	\quad~\quad
	\begin{tikzpicture}[black]
		\def \xs {.2}
		\def \xy {.2}
		\foreach \x / \y / \n / \nn / \typ / \p / \dx / \dy in {-4/-2/u/u/voterU/{below left}/-1/-4, 0/2/v/v/voterW/{above right}/-1/-1} {
			\node[\typ] at (\x*\xs,\y*\xy) (\n) {};
			\node[\p = \dx pt and \dy pt of \n] {$\nn$};
		}

		\coordinate (y) at (-4*\xs,2*\xy);
		\coordinate (x) at (0*\xs,-2*\xy);

		\gettikzxy{(v)}{\vx}{\vy}
		\gettikzxy{(u)}{\ux}{\uy}
		\gettikzxy{(x)}{\xx}{\xy}
		\gettikzxy{(y)}{\yx}{\yy}

		\draw[gray,dashed] (v) -- (\ux, \vy);
		\draw[gray,dashed] (u) -- (\ux, \vy);
		\draw[gray,dashed] (v) -- (\vx, \uy);
		\draw[gray,dashed] (u) -- (\vx, \uy);

		\draw[bistyle] (\xx+20,\xy) -- (x);
		\draw[bistyle] (\yx-20,\yy) -- (y);
		\draw[bistyle] (\xx,\xy-20) -- (x) -- (y) -- (\yx,\yy+20);
		
		\draw[fill=darkgreen!50,draw=none] (\xx+20,\xy) -- (x) -- (\xx,\xy-20) -- (\xx+20,\xy-20) -- cycle;
		\draw[fill=darkgreen!50,draw=none] (\yx-20,\yy) -- (y) -- (\yx,\yy+20) -- (\yx-20,\yy+20) -- cycle;
	\end{tikzpicture}
	\caption{The bisector (in green) between points~$u$ and $v$ under the Manhattan distance. The green lines and areas extend to infinity.
	We also see the bounding box $\BB(u,v)$ in the middle figure.}
	\label{fig:L1bisect}
\end{figure}

  \begin{figure}
    \centering
    \def \xs {.1}
    \def \xy {.1}
    \begin{tikzpicture}[black]
      \foreach \x / \y / \n / \nn / \typ / \p / \dx / \dy in {-4/-2/u/u/voterU/{below left}/-1/-4, 6/2/v/v/voterW/{above right}/-1/-1} {
        \node[\typ] at (\x*\xs,\y*\xy) (\n) {};
        \node[\p = \dx pt and \dy pt of \n] {$\nn$};
      }

      \node[alter,red,fill=red] at (2*\xs,-6*\xy) (x) {};
      \node[alter,red,fill=red] at (-2*\xs,6*\xy) (y) {};

      \drawcircle{x}{u}{colorV}
      \drawcircle{x}{v}{colorW}

      \node[alter,red,fill=red] at (x) {};
      \node[alter,red,fill=red] at (y) {};

    \end{tikzpicture}\quad
    \begin{tikzpicture}[black]

      \foreach \x / \y / \n / \nn / \typ / \p / \dx / \dy in {-4/-2/u/u/voterU/{below left}/-4/-3, 6/2/v/v/voterW/{above right}/-1/-1} {
        \node[\typ] at (\x*\xs,\y*\xy) (\n) {};
        \node[\p = \dx pt and \dy pt of \n] {$\nn$};
      }

      \coordinate (x) at (0*\xs,-4*\xy);
      \coordinate (y) at (-4*\xs,4*\xy);

      \drawcircle{x}{u}{colorV}
      \drawcircle{x}{v}{colorW}

      \gettikzxy{(v)}{\vx}{\vy}
      \gettikzxy{(x)}{\xx}{\xy}
      \node[alter,red,fill=red] at (x) {};
      \pgfmathsetlengthmacro\disRX{abs(\xx-\vx)+abs(\xy-\vy)}
      \path[draw,red,line width=2pt] (y) -- (\vx-\disRX,\vy);
    \end{tikzpicture}\quad
    \begin{tikzpicture}[black]
      \def \xs {.12}
       \def \xy {.12}

      \foreach \x / \y / \n / \nn / \typ / \p / \dx / \dy in {-4/-2/u/u/voterU/{below left}/-4/-3, 6/2/v/v/voterW/{above right}/-1/-1} {
        \node[\typ] at (\x*\xs,\y*\xy) (\n) {};
        \node[\p = \dx pt and \dy pt of \n] {$\nn$};
      }

      \coordinate (x) at (0*\xs,-2*\xy);
      \coordinate (y) at (-4*\xs,2*\xy);

      \drawcircle{x}{u}{colorV}
      \drawcircle{x}{v}{colorW}
      \path[draw,red,line width=2pt] (y) -- (x);
    \end{tikzpicture}\quad
    \begin{tikzpicture}[black]
       \def \xs {.12}
       \def \xy {.12}

      \foreach \x / \y / \n / \nn / \typ / \p / \dx / \dy in {-4/-2/u/u/voterU/{below left}/-4/-3, 2/-2/v/v/voterW/{above right}/-1/-1} {
        \node[\typ] at (\x*\xs,\y*\xy) (\n) {};
        \node[\p = \dx pt and \dy pt of \n] {$\nn$};
      }

      \coordinate (x) at (-4*\xs,-6*\xy);
      \coordinate (y) at (-4*\xs, 2*\xy);

      \drawcircle{x}{u}{colorV}
      \drawcircle{x}{v}{colorW}

     \gettikzxy{(v)}{\vx}{\vy}
      \pgfmathsetlengthmacro\disRX{abs(\xx-\vx)+abs(\xy-\vy)}
      \path[draw,red,line width=2pt] (y) -- (\vx-\disRX,\vy);
      \path[draw,red,line width=2pt] (x) -- (\vx-\disRX,\vy);
    \end{tikzpicture}

    \caption{The intersection  (in red) of two circles under the Manhattan distance in \ensuremath{\mathds{R}^{2}} can be two points, one point and one line segment, one line segment, or two line segments.}
    \label{fig:L1unit}
\end{figure}

\paragraph{Basic geometric notation.}
Throughout this paper, 
we use lower case letters in boldface to denote points in a space. 
Given two points~$\pq$ and $\pr$, we introduce the following notions:
Let \myemph{$\BB(\pq,\pr)$} denote the set of points which are contained in the (smallest) rectilinear bounding box of points~$\pq$ and $\pr$, i.e.,
$\BB(\pq,\pr)\coloneqq \{\px\in \dspace \mid \min\{\pq[i],\pr[i]\} \le \px[i] \le \max\{\pq[i],\pr[i]\} \text{ for all } i\in [d]\}$. %
See \cref{fig:L1bisect} for illustration.
The \myemph{perpendicular bisector} (bisector in short) between two points $\pq$ and $\pr$ wrt.\ a $p$-norm is a set~$\Mhalfspace_p(\pq,\pr)$ of points which each have the same distance to both~$\pq$ and $\pr$.
Formally, $\Mhalfspace_{p}(\pq,\pr)\coloneqq \{\px \in \dspace \mid \dis{\px-\pq}_p = \dis{\px-\pr}_p\}$.
In a $d$-dimensional space, a bisector of two points under the Manhattan distance (i.e., $1$-norm) can itself be a $d$-dimensional object, while a bisector under Euclidean distances is always $(d-1)$-dimensional; see e.g., \cref{fig:L1bisect} (right).

A \myemph{sphere} around $\pq$ of distance $s \geq 0$ is a set consisting of all points of distance $s$ to $\pq$.
Formally, it is the set $\{\px \in \dspace \mid \dis{\px - \pq}_p = s\}$. In two dimensions, we call a sphere a \myemph{circle}.
For $d=2$, the Manhattan distance of two points is equal to the length of a shortest path between them on a rectilinear grid.
Hence, under Manhattan distances, a circle is a square rotated at a $45^{\circ}$ angle from the coordinate axes. %
The intersection of two Manhattan-circles can range from two points to two segments as depicted in \cref{fig:L1unit}. 

\paragraph{The two-dimensional case.}
In a two-dimensional space, the vertical line and the horizontal line crossing any point divide the space into four non-disjoint quadrants: the north-east, south-east, north-west, and south-west quadrants.
Given a point~$\pq$, we use $\dNE(\pq)$, $\dSE(\pq)$, $\dNW(\pq)$, and $\dSW(\pq)$ to denote these four quadrants.
Formally,
\myemph{$\dNE(\pq)\coloneqq\{\pz \in \rz^2\mid \pz[1]\ge \pq[1] \wedge \pz[2]\ge \pq[2]\}$},
\myemph{$\dSE(\pq)\coloneqq\{\pz \in \rz^2\mid \pz[1]\ge \pq[1] \wedge \pz[2]\le \pq[2]\}$},
\myemph{$\dNW(\pq)\coloneqq\{\pz \in \rz^2\mid \pz[1]\le \pq[1] \wedge \pz[2]\ge \pq[2]\}$}, and
\myemph{$\dSW(\pq)\coloneqq\{\pz \in \rz^2\mid \pz[1]\le \pq[1] \wedge \pz[2]\le \pq[2]\}$}. %

\paragraph{Embeddings.}
The $d$-dimensional geometric representation under $p$-norm
 models the preferences of the voters over the alternatives 
using the %
$\ell_p$-distance.
We recall that a shorter distance indicates a stronger preference.

\newcommand{\cyclefig}{
	\begin{scope}[shift={(-1,0)}]
		\node[voter](v) at (2,2) {};
		\node[above right = -2pt and -1pt of v] {$v$};
		\node[alter](a) at (3,3) {};
		\node[above right = -2pt and -1pt of a] {$a$};
		\draw[colorV]  (2,0) -- (4,2) -- (2,4) -- (0,2) -- cycle;
	\end{scope}
}

\newcommand{\cyclefigeuc}{
	\begin{scope}[shift={(3,0)}]
		\node[voter](v) at (2,2) {};
		\node[above right = -2pt and -1pt of v] {$v$};
		\node[alter](a) at (3,3) {};
		\node[above right = -2pt and -1pt of a] {$a$};
		\draw[colorV] (v) circle[radius=1.41];
	\end{scope}
}

\newcommand{\cyclefigmax}{
	\begin{scope}[shift={(6,0)}]
		\node[voter](v) at (2,2) {};
		\node[above right = -2pt and -1pt of v] {$v$};
		\node[alter](a) at (3,3) {};
		\node[above right = -2pt and -1pt of a] {$a$};
		\draw[colorV]  (3,3) -- (3,1) -- (1,1) -- (1,3) -- cycle;
\end{scope}
}

\begin{figure}
\centering
\begin{tikzpicture}[scale=1, black]

		\cyclefig
		\cyclefigeuc
		\cyclefigmax
\end{tikzpicture}\label{fig:circles}
\caption{A circle around $v$ whose radius is the distance of $v$ and $a$ in \dManhattan[2], \dEuclid[2], and \dMax[2] spaces, respectively.}
\end{figure}

\begin{definition}[$d$-dimensional geometric embeddings under $p$-norm]\label{def:embeddings}
  Let $\ppp \coloneqq (\aaa, \vvv\coloneqq\{v_1, \dots,  v_n\}, \rrr\coloneqq(\pref_1, \dots, \pref_n))$ be a profile.   
  Let~$E\colon \aaa \cup \vvv \to \dspace$ be an embedding of the alternatives and the voters into a $d$-dimensional space. %
  Given $p \in \mathds{R}_{\geq 1} \cup \{\infty\}$, we say $\ppp$ is \myemph{$d$-dimensional geometric under $p$-norm} if there is an embedding $E$ such that for every voter $v_i \in V$,  for each two alternatives~$a,b\in \aaa$, it holds that
  \begin{align*}
    a \pref_i  b \text{ if and only if } \dis{E(a)-E(v_i)}_p < \dis{E(b)-E(v_i)}_p.
  \end{align*}
  In this case, we say $E$ is an embedding under $p$-norm. 
If $p = 1$ (resp.\ $p = 2$, $p = \infty$), we say $E$ is a \myemph{\dManhattan{}} (resp.\ \myemph{\dEuclid{}}, \myemph{\dMax{}}) embedding and the profile~$\ppp$ is \myemph{\dManhattan} (resp.\ \myemph{\dEuclid}, \myemph{\dMax{}}). %
\end{definition}

The following proposition allows us to extend any result we obtain of the (non-)existence of \dManhattan[2] embeddings to \dMax[2] embeddings and vice versa. The same claim has been made by \citet[Proposition~2]{EST2021Euclidlp}.

\begin{proposition}[\cite{Lee1980Voronoui}]\label{prop:max_man_eq}
  There is a natural isometry between $\mathds{R}^{2}$ under $1$-norm and $\mathds{R}^{2}$ under $\infty$-norm.
\end{proposition}

For intuition, observe that a circle in \dManhattan[2] space is a rotated and scaled version of a circle in \dMax[2] space, see \cref{fig:circles} for an illustration.

  The definition of embeddings can also be extended to the case where the preference orders in $\rrr$ are not necessarily strict.
  
\begin{remark}  
  We may also allow the preference orders to be weak orders, in which case the preferences may contain \myemph{indifferences} and we will use $\prefeq$ to refer to preference orders with indifferences.
  More formally, we write $a\prefeq b$ to refer to the case that $a$ is weakly prefer to~$b$.
  We use \myemph{$\pref$} to refer to the asymmetric part, that is, $a\pref b$ means $a$ is strictly preferred to~$b$,
  and \myemph{$\sim$} to refer to the symmetric part, that is, $a\sim b$ means that $a$ and $b$ are considered indifferent.
  The definition of rank function~$\rank(a)$ stays the same, i.e., it refers to the number of alternatives that are strictly preferred to alternative~$a$.

  The definition of $d$-dimensional geometric profiles will be extended as follows:
  Given $p \in \mathds{R}_{\geq 1} \cup \{\infty\}$, we say $\ppp$ is \myemph{$d$-dimensional geometric under $p$-norm} if there is an embedding $E$ such that for every voter $v_i \in V$,  for each two alternatives~$a,b\in \aaa$, it holds that
  \begin{align*}
    a \prefeq_i  b \text{ if and only if } \dis{E(a)-E(v_i)}_p \le \dis{E(b)-E(v_i)}_p.
  \end{align*}
If the preference orders do not contain indifferences, we say the profile and the preference orders are \myemph{strict}.
\end{remark}

In this paper we focus primarily on strict preferences. Unless stated otherwise, all of our results assume the preferences to be strict.
In \cref{appsec:sec:Manhattan-positive} we provide some positive results for the case with indifferent preferences and $p$-norms with $p > 1$.

\section{Manhattan Embedding Existence for Large Dimensions}\label{sec:Manhattan-positive}\appendixsection{sec:Manhattan-positive}

In this section, we show that for sufficiently high dimension~$d$, i.e., $d\ge \min(n, m-1)$, every profile with $n$ voters and $m$~alternatives is \dManhattan, even if the voters may have indifferent preferences. The same result holds for \dEuclid profiles by Bogomolnaia and Laslier~\cite{BoLa2006}.
The idea behind our proof for $d =n$ is similar to the one for \dEuclid[n] preferences by~\citet{BoLa2006}.
The proof for $d=m-1$ is however different from the \dEuclid[(m-1)] case. While the proof for the \dEuclid case relies on abstract geometric properties, it is relatively straightforward to give a full concrete construction of the \dManhattan case.
In the appendix of this section we show that for every $p \in \mathds{R}_{> 1} \cup \{\infty\}$ (the case $p=1$ being the Manhattan results of this section), every profile with $n$ voters is $n$-dimensional geometric under $p$-norm and every profile with~$m$ alternatives is $(m-1)$-dimensional geometric under $p$-norm.

\begin{figure}
  \centering
\def\mx{-14}
\def\mmx{14}
\def\my{-2}
\def\mmy{13}
  \begin{tikzpicture}[scale=0.3, black]
    \tkzInit[xmax=\mmx,ymax=\mmy,xmin=\mx,ymin=\my]
    \exPosiTmod

    \begin{pgfonlayer}{background}  
      \tkzGrid[color=gray!20]
      \path[draw, thick] (\mx,0) -- (\mmx,0);
      \path[draw, thick] (0,\my) -- (0,\mmy);
    \end{pgfonlayer}
    
    \coordinate (start) at (-9,2);
    \coordinate (end) at (-20,-2);
    \coordinate (ss) at (25,0);
    \coordinate (ee) at (0,23);

    \begin{pgfonlayer}{background}

    \foreach \v / \co in {v1/lineA}  {
        \foreach \c in {a1,a2,a3,a4,a5} {
         \drawcircleD{\c}{\v}{\co}{start}{end}{ss}{ee}
        }
      }
    \end{pgfonlayer}
    
    \coordinate (start) at (-23,-5);
    \coordinate (end) at (0,-2);
    \coordinate (ss) at (9,0);
    \coordinate (ee) at (0,17);

    \begin{pgfonlayer}{background}
      \foreach \v / \co in {v2/lineB}  {
        \foreach \c in {a1,a2,a3,a4,a5} {
          \drawcircleD{\c}{\v}{\co}{start}{end}{ss}{ee}
        }
      }
    \end{pgfonlayer}
  \end{tikzpicture}
  \caption{Illustration for \cref{ex:d=n-Manhattan}.}\label{fig:d=n-Manhattan}
\end{figure}

\paragraph{Embedding with $d = n$ voters.}
\begin{theorem}\label{thm:d=n->dManhattan}
Every profile with $n$~voters is \dManhattan[n], even when the preference orders may contain indifferences.
\end{theorem}
\begin{proof}

 Let $\ppp=(\aaa, \vvv, (\pref_i)_{i\in [n]})$ be a profile with~$m$ alternatives $\aaa=[m]$ and $n$~voters~$\vvv$.

Conceptually, the proof is similar to that of \citet{BoLa2006}.
By embedding the voters correctly, we can find $m$ spheres of increasing sizes around each alternative so that all of these spheres intersect with all the spheres of the other voters.
These intersection points correspond to different combinations of voter ranks for alternatives.
See \cref{ex:d=n-Manhattan} and \cref{fig:d=n-Manhattan} for an example: The smallest blue solid circle corresponds to alternatives for which $v_1$ has rank~$0$, the second smallest blue circle to alternatives for which $v_1$ has rank~$1$ and so on.
Similarly, the smallest red dashed circle corresponds to alternatives for which $v_2$ has rank~$0$, the second smallest red circle to alternatives for which $v_1$ has rank $1$ and so on.
Consider for example the alternative 1, which in this example satisfies $\rank_{1}(1) = 0$ and $\rank_{2}(1) = 2$.
Hence it is placed in the intersection of the blue circle closest to $v_1$ and the red circle third closest to $v_2$.

Now let us describe our formal construction.
First we embed the $n$ voters in $n - 1$ dimensions so that for every voter $v_i \in \vvv$, the first~$i - 1$ coordinates are $m$, the coordinates from $i$ to $n - 1$ are $-m$ and the last coordinate is $0$. More formally, for every $v_i \in \vvv$:
\[E(v_i)[z] \coloneqq \begin{cases}
m, &\text{ if } z \leq i - 1,\\
-m, &\text{ if } i \leq z \leq n -1, \\
0, &\text{ if }z = n.
\end{cases}\] 

We embed the alternatives in the following points:
\[E(j)[z] \coloneqq \begin{cases}
\rank_z(j) - \rank_{z+1}(j), &\text{ if } z \neq  n,\\
\rank_1(j) + \rank_n(j), &\text{ if }  z = n.
\end{cases}\]
These are the intersections of the spheres of radius $m(n - 1) + 2 \rank_i(j)$ around $E(v_i)$, $v_i \in \vvv$.
We can show this by computing the distance $\Mdis{E(v_i) - E(j)}$ for every $v_i \in \vvv, j \in \aaa$:
\begin{align}
\Mdis{E(v_i) - E(j)}  &= \sum_{z =1}^n|E(v_i)[z] - E(j)[z]| \label{eq:ndim1}\\
& = \sum_{z = 1}^{i - 1}|m - (\rank_z(j) - \rank_{z+1}(j))| + \sum_{z = i}^{n - 1}|- m - (\rank_z(j) - \rank_{z+1}(j))|   + |0 - (\rank_1(j) + \rank_n(j))| \label{eq:ndim3}\\
& = \left( \sum_{z = 1}^{i - 1}m - \rank_z(j) + \rank_{z+1}(j) \right) + \left(\sum_{z = i}^{n - 1} m + \rank_z(j) - \rank_{z+1}(j)\right)  + \rank_1(j) + \rank_n(j) \label{eq:ndim4}\\
& = (n - 1) m + \left( \sum_{z = 1}^{i - 1}-\rank_z(j) + \rank_{z+1}(j)\right) + \left(\sum_{z = i}^{n - 1} \rank_z(j) - \rank_{z+1}(j)\right) + \rank_1(j) + \rank_n(j) \label{eq:ndim6}\\
&= (n - 1)m + 2 \rank_i(j).\label{eq:ndim5}
\end{align}

For Step~\eqref{eq:ndim5}, observe that if $2 \leq i \leq n - 1$, then the terms in the first sum cancel each other so that in the end we have $-\rank_1(j) + \rank_i(j)$. Similarly, the second sum simplifies to the form $\rank_i(j) - \rank_n(j)$. %

Since $m(n - 1) + 2 \rank_i(j)$ is linear in the ranks, this proves the statement. This clearly holds even if preferences may contain indifferences.

\end{proof}

By \cref{thm:d=n->dManhattan}, we obtain that any profile with two voters is \dManhattan[2].
The following example provides an illustration.
\begin{example}\label{ex:d=n-Manhattan}
  \stepcounter{myprofilecounter}
  Consider profile~$\ppp_{\themyprofilecounter}$ with $2$ voters and $5$ alternatives:
  \begin{align*}
    v_1\colon  1 \succ 2 \succ 3 \succ 4 \succ 5, \qquad  v_2\colon  4 \succ 3 \succ 1 \succ 5 \succ 2.
  \end{align*}
  By the proof of \cref{thm:d=n->dManhattan}, $E(v_1) = (-m, 0)$ and $E(v_2) = (m, 0)$. For every alternative $j \in \aaa$, the $E(j) = (\rank_1(j) - \rank_2(j), \rank_1(j) + \rank_2(j))$. Also see \cref{fig:d=n-Manhattan} for an illustration.
  \begin{center}
    \begin{tabular}{l|cc|cccccc}
      $x\in \vvv\cup \aaa$ & $v_1$ & $v_2$ & $1$ & $2$ & $3$ & $4$ & $5$ \\\hline\\[-2ex]
      $E(x)[1]$ & $-5$ & $0$ & $-2$ & $-3$ & $1$ & $3$ & $1$\\
      $E(x)[2]$ & $5$ & $0$ & $2$ & $5$ & $3$ & $3$ & $7$\\
      \end{tabular}
    \end{center}
    \stepcounter{myprofilecounter}
  \end{example}

\paragraph{Embedding with $d = m-1$ alternatives.}
\begin{theorem}\label{thm:d=m-1->Manhattan}
Every profile with $m$ alternatives is \dManhattan[(m-1)].
\end{theorem}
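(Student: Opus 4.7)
\medskip

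\noindent\textbf{Proof plan for \cref{thm:d=m-1->Manhattan}.} The plan is to construct an explicit embedding in $\rz^{m}$ that mirrors, in spirit, the construction of \cref{thm:d=n->dManhattan} but with the roles of ``anchor'' and ``free'' objects swapped: we pin down the $m+1$~alternatives in a rigid, highly symmetric position in $\rz^m$ and then let each voter's position encode her own preference order.

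Concretely, I would fix a large constant $L$ (e.g.\ $L \coloneqq 2(m+1)$) and set
\begin{equation*}
  E(j) \coloneqq L \cdot e_j \quad \text{for } j \in [m], \qquad E(m+1) \coloneqq \vect{0},
\end{equation*}
where $e_j$ is the $j$-th standard basis vector. For any point $\vect{v} \in [0,L]^m$, one immediately computes
\begin{equation*}
  \Mdis{\vect{v}-E(j)} = (L - \vect{v}[j]) + \sum_{k \neq j}\vect{v}[k] \quad (j\le m), \qquad
  \Mdis{\vect{v}-E(m+1)} = \sum_{k\in[m]} \vect{v}[k],
\end{equation*}
so differences collapse to the clean formulas
\begin{equation*}
  \Mdis{\vect{v}-E(j)} - \Mdis{\vect{v}-E(k)} = 2(\vect{v}[k]-\vect{v}[j]) \text{ for } j,k\le m, \qquad
  \Mdis{\vect{v}-E(j)} - \Mdis{\vect{v}-E(m+1)} = L - 2\vect{v}[j].
\end{equation*}

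Now, for each voter~$v_i$ with rank function $\rank_i$, let $r_i \coloneqq \rank_i(m+1)$ and define the embedding coordinate-wise by
\begin{equation*}
  E(v_i)[j] \coloneqq
  \begin{cases}
    L - \rank_i(j), & \text{if } \rank_i(j) < r_i, \\[2pt]
    L/2 + r_i - \rank_i(j), & \text{if } \rank_i(j) > r_i,
  \end{cases} \qquad j \in [m].
\end{equation*}
A direct check (using $L = 2(m+1)$) shows that all values lie strictly in $(0,L)$, that $E(v_i)[j] > L/2$ precisely when $\rank_i(j) < r_i$, and that among $j \in [m]$ the ordering of the $E(v_i)[j]$'s is the reverse of the ordering of the $\rank_i(j)$'s (the first case produces values in the upper half interval, the second case in the lower half, and within each case the values are strictly monotone in the rank). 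Plugging into the two distance-difference formulas above then gives $\Mdis{E(v_i)-E(j)} < \Mdis{E(v_i)-E(k)} \iff \rank_i(j) < \rank_i(k)$ in all three remaining sub-cases ($j,k\le m$; one of them equals $m+1$), which is exactly what \cref{def:embeddings} demands.

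The only place where care is needed is choosing the formula for $E(v_i)[j]$ so that all three ordering constraints hold simultaneously and strictly: the values in the two cases must not collide at $L/2$, must stay in $[0,L]$ so the absolute values in the distance formula unfold with fixed signs, and must respect the within-case monotonicity. Once $L$ is taken sufficiently large (the choice $L = 2(m+1)$ above suffices), these are routine to verify, so I expect no conceptual obstacle beyond finding the right algebraic set-up.
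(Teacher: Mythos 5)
Your construction is essentially the paper's own proof: the paper also places alternative $m+1$ at the origin and alternative $j$ at a scaled basis vector ($2m\,e_j$ rather than your $(2m+2)\,e_j$), encodes each voter's ranks in her coordinates with a half-scale threshold separating the alternatives she prefers to $m+1$ from the rest, and verifies the same affine distance-difference identities. The argument is correct; the only nit is that your claim that all voter coordinates lie strictly in $(0,L)$ fails for a rank-$0$ alternative (which lands exactly at $L$), but this is harmless since your distance formulas are stated for the closed box $[0,L]^m$.
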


\begin{proof}
We may assume that $m \geq 2$, because otherwise we have one alternative and the profile can trivially be embedded in a point.

 Let $\ppp=(\aaa, \vvv, (\pref_i)_{i\in [n]})$ be a profile with~$m \geq 2$ alternatives $\aaa=[m]$ and $n$~voters~$\vvv$. 
The idea is to place every alternative except $m$ on its own axis.
Then it is straightforward to choose the placement of a voter so that any possible preference order of the first $m - 1 $ is embedded: As the alternatives are on their own axes, we can move closer to one alternative without changing how close we are to a different alternative.
Finally, we embed $m$ to the origin.
By choosing how far an alternative is from the origin, we can respect the voter's preferences regarding $m$.

  More precisely, define an embedding $E\colon \aaa\cup \vvv \to \mathds{N}_0$ such that
  alternative~$m$ is embedded in the origin coordinate, i.e., $E(m)[z]=0$ for all $z\in [m - 1]$.
  For each alternative~$j\in [m - 1]$ and each coordinate~$z\in [m - 1]$, we have $E(j)[z]\coloneqq 2m$ if $z=j$, and $E(j)[z]\coloneqq 0$ otherwise.

  Then, the embedding of each voter~$v_i\in \vvv$ is defined as follows: $\forall~z\in [m - 1]\colon $
  \begin{align*}
   E(v_i)[z]  & 
    \coloneqq
    \begin{cases}
      2m  - \rank_i(z), & \text{ if } \rank_i(z) < \rank_i(m),\\
      m -\rank_i(z), & \text{ if } \rank_i(z) > \rank_i(m).
    \end{cases}
  \end{align*}
  Observe that $0\le E(v_i)[j] \le 2m$. 
  Before we show that $E$ is \dManhattan[2] for~$\ppp$, let us establish a simple formula for the distance between a voter and an alternative.
 
  \begin{restatable}{clm}{clmmManhattanr}\label{clm:m-Manhattan}
    \clmmManhattan
  \end{restatable}

  \begin{proof}[Proof of
    \cref{clm:m-Manhattan}]
    \renewcommand{\qed}{~\hfill~$\diamond$}
    The case with $j=m$ is straightforward since alternative~$m$ is embedded at the origin.
    The proof for $j\neq m$ is also straightforward by a direct application of the definition:
    \begin{align*}
      \Mdis{E(v_i)-E(j)}
      & = \sum_{z\in [m - 1]} | E(v_i)[z]-E(j)[z]| = \left(\sum_{z\in [m -1]\setminus \{j\}}| E(v_i)[z]| \right) + | E(v_i)[j]-E(j)[j]| \\
      & = \left(\sum_{z\in [m - 1]\setminus \{j\}}| E(v_i)[z]| \right) + (2m-E(v_i)[j])  = \Mdis{E(v_i)} + 2(m-E(v_i)[j]). 
    \end{align*} This concludes the proof.\end{proof}    

  Now, we proceed with the proof.  Consider an arbitrary voter~$v_i\in \vvv$ and let $j,k\in [m]$ be two consecutive alternatives in the preference order~$\succ_i$ such that  $\rank_i(j) = \rank_i(k)-1$.

It is clear from \cref{clm:m-Manhattan} and the voter embedding that if $j \succ_i k \succ_i m$ or $m \succ_i j \succ_i k$, then $\Mdis{E(v_i) - E(j)} < \Mdis{E(v_i) - E(k)}$.
    It remains to consider the cases when $m \in \{j,k\}$.
   \begin{description}

    \item[Case 1:] $k=m$ and thus $E(v_i)[j]=2m-\rank_i(j)$.
    Then,  by \cref{clm:m-Manhattan} and by definition, it follows that
    \begin{align*}
      \Mdis{E(v_i)-E(j)} &- \Mdis{E(v_i)-E(k)}  = 2(m - E(v_i)[j]) =2\rank_i(j) - 2m < 0.
    \end{align*}
    Note that the last inequality holds since $\rank_i(j)=\rank_i(k)-1 < m$.

    \item[Case 2:] $j=m$ and thus $E(v_i)[k]=m-\rank_i(k)$.
    Then,  by \cref{clm:m-Manhattan} and by definition, it follows that
    \begin{align*}
      \Mdis{E(v_i)-E(j)} &- \Mdis{E(v_i)-E(k)}  = -2(m - E(v_i)[k]) = - 2\rank_i(k) < 0.
    \end{align*}
  \end{description}

  Since in all cases, we show that $\Mdis{E(v_i)-E(j)} - \Mdis{E(v_i)-E(k)} < 0$,
  embedding~$E$ is indeed \dManhattan[(m-1)] for~$\ppp$.
\end{proof}

  We can extend  \cref{thm:d=m-1->Manhattan} to profiles where the voters preferences are not necessarily strict. The construction requires an additional case and is deferred to appendix.
  \begin{restatable}[\appsymb]{proposition}{statementmwithties}\label{rm:mwithties}
    Every profile with $m$ alternatives and with possibly indifferent preferences is \dManhattan[(m-1)].
  \end{restatable}
  
  \appendixproofwithstatement{rm:mwithties}{\statementmwithties*}{
  	\begin{proof}
  		We amend the embedding of the voters as follows:
  		The embedding of each voter~$v_i\in \vvv$ is defined as follows: $\forall~z\in [m - 1]\colon $
  		\begin{align*}
  			E(v_i)[z]  & 
  			\coloneqq
  			\begin{cases}
  				2m  - \rank_i(z), & \text{ if } \rank_i(z) < \rank_i(m),\\
  				\myemph{$m$} & \myemph{\text { if }} \myemph{$\rank_i(z) = \rank_i(m)$},\\
  				m -\rank_i(z), & \text{ if } \rank_i(z) > \rank_i(m).
  			\end{cases}
  		\end{align*}
  		
  		\cref{clm:m-Manhattan} and its proof remain unchanged. Recall that \cref{clm:m-Manhattan} states the following:
  		\clmmManhattan
  		
  		We need to show that for every pair of alternatives $j,k \in \aaa$, a voter $v_i \in \vvv$ (i) if  $j \sim_i k$, then $\Mdis{E(v_i) - E(j)} = \Mdis{E(v_i) - E(k)}$, and (ii) if $j \succ_i k$, then $\Mdis{E(v_i) - E(j)} < \Mdis{E(v_i) - E(k)}$.
  		
  		We start by showing Property (i).
  		It is clear that if $j \sim_i k$ and $m \notin \{j,k\}$, then $\Mdis{E(v_i) - E(j)}$ and $\Mdis{E(v_i) - E(k)}$ depend only on how $v_i$ ranks them, and thus the distances must be equal. 
  		If $m \in \{j,k\}$, assume without loss of generality that $k = m$, then $\Mdis{E(v_i) - E(j)} = \Mdis{E(v_i)} + 2(m - E(v_i)[j]) = \Mdis{E(v_i)} + 2 (m - m) = \Mdis{E(v_i) - E(m)}$, as required.
  		
  		The proof of Property (ii) is mainly shown in \cref{thm:d=m-1->Manhattan}. The only case that is not covered is when either $j$ or $k$ is ranked the same as $m$. 
  		If $\rank_i(j) = \rank_i(m)$, then $m \succ_i k$. We obtain that $\Mdis{E(v_i) - E(j)} = \Mdis{E(v_i)} + 2(m - E(v_i)[j]) = \Mdis{E(v_i)} + 2 (m - m) = \Mdis{E(v_i)}$, and $\Mdis{E(v_i) - E(k)} > \Mdis{E(v_i)}$, since $E(v_i)[k] < m$.
  		
  		Otherwise if $\rank_i(k) = \rank_i(m)$, then $j \succ_i m$. We obtain that $\Mdis{E(v_i) - E(k)} = \Mdis{E(v_i)}$, and $\Mdis{E(v_i) - E(j)} < \Mdis{E(v_i)}$, since $E(v_i)[j] > m$.
  		
  		This concludes the proof.
  		
  	\end{proof}
  }

\noindent \cref{thm:d=m-1->Manhattan} implies that every profile with $3$~alternatives is \dManhattan[2].
The following example illustrates a corresponding Manhattan embedding. %
\begin{example}  \label{ex:yes-v6-m3}
  The following profile~$\ppp_{\themyprofilecounter}$ with $6$~voters and $3$~alternatives is \dManhattan[2].
  \begin{align*}
  v_1 \colon & 1 \succ 2 \succ 3, & v_3 \colon & 2 \succ 1 \succ 3, & v_5 \colon & 3 \succ 1 \succ 2,\\
  v_2 \colon & 1 \succ 3 \succ 2, & v_4 \colon & 2 \succ 3 \succ 1, & v_6 \colon & 3 \succ 2 \succ 1. 
  \end{align*}
  \stepcounter{myprofilecounter}
  One can check that the embedding~$E$ given in \cref{fig:d=m+1-Manhattan} is \dManhattan[2] for~$\ppp_{\themyprofilecounter}$.
\end{example}

\begin{figure}
  \centering
\begin{subfigure}[c]{0.45\textwidth}\centering
  \begin{tikzpicture}[scale=0.6, black]
    \begin{pgfonlayer}{background}
      \tkzInit[xmax=7,ymax=7,xmin=-2,ymin=-2]
      \tkzGrid[color=gray!20]
    \end{pgfonlayer}
    \exPosi
    \begin{pgfonlayer}{background}  
    	\path[draw, blue!50!white, dashed] (7,0) -- (-2,0);
    	\path[draw, blue!50!white, dashed] (0,7) -- (0,-2);
    \end{pgfonlayer}
    \coordinate (start) at (-5,2);
    \coordinate (end) at (-4,-5);
    \coordinate (ss) at (6,0);
    \coordinate (ee) at (0,7);

    \begin{pgfonlayer}{background}
    
    \draw[bistyle, draw=orange!70!black] (7, 3) to (-2,3);
    \draw[bistyle, draw=red!50!blue] (3, -2) to (3,7);
    \draw[bistyle] (0,0) to (6,6);
    \draw[fill=green!80!black,draw=none] (0,0) -- (-2,0) -- (-2,-2) -- (0,-2) -- cycle;
    
    \draw[fill=green!80!black,draw=none] (7,7) -- (7,6) -- (6,6) -- (6,7) -- cycle;

    \end{pgfonlayer}

    \foreach \c / \d / \x / \y / \colo in {
    3/1/4.5/-1.6/{red!50!blue},
    2/3/0.1/3.3/{orange!70!black},
    2/1/0.1/-1.1/{green!70!black}} {
    \node[bistyle, fill=white, inner sep = 1pt,text=\colo] at (\x,\y) {\footnotesize Bisector of \c\ and \d};    
  }
  
    \coordinate (start) at (-9,0);
    \coordinate (end) at (-4,-5); 
    \coordinate (ss) at (8,0); %
    \coordinate (ee) at (0,9); %
  \end{tikzpicture}\end{subfigure}
\begin{subfigure}[c]{0.5\textwidth}
    \begin{tabular}{l|cccccc|ccc}
      $x\in \vvv\cup \aaa$ & $v_1$ & $v_2$& $v_3$& $v_4$& $v_5$& $v_6$  & $1$ & $2$ & $3$ \\\hline\\[-2ex]
      $E(x)[1]$ & $6$ & $6$  &$5$ & $1$ & $2$ & $1$ & $6$ & $0$ & $0$\\
      $E(x)[2]$ & $5$ & $1$  & $6$ & $6$ & $1$ & $2$ & $0$ & $6$ & $0$\\
      \end{tabular}\end{subfigure}

 \caption{Illustration for \cref{ex:yes-v6-m3}; %
the solid colored lines and areas are the bisectors of pairs of alternatives whereas the dashed blue lines are the coordinate axes.} \label{fig:d=m+1-Manhattan}
\end{figure}

In Appendix~\ref{appsec:sec:Manhattan-positive} we extend our previous results for an arbitrary $p \in \mathds{R}_{> 1} \cup \{\infty\}$.
We also show that the proof from \citet[Proposition 4]{BoLa2006} which attempts to construct a \dEuclid\ embedding when $d = n$ does not work when there are indifferent preferences and provide a fixed construction.

Having established that every profile admits a \dManhattan{} embedding when $d \ge \min(n, m-1)$, we now turn to the more challenging question: what prevents profiles from being embeddable in low dimensions, specifically $d=2$?

\toappendix{
  \citet[Proposition 4]{BoLa2006} provided a construction to show that every preference profile with $n$ voters and possibly indifferent preferences is \dEuclid[n]. 
  Their idea was to embed each voter on a distinct and private axis and then embed each alternative such that the distance between the voter and the alternative on the private axis respects the rank of the alternative for the voter.  
  Unfortunately, this does not work for the case with indifferent preferences, as we will see in the next example.
  In  \cref{prop:BoLaFix}, we show how to fix it. 

  \begin{example}\label{ex:BoLaBug}
    \setcounter{vtwocounter}{\themyprofilecounter}
    \citeauthor{BoLa2006} proposed the following embedding~$E$ for showing that every preference profile with $n$ voters and possibly indifferent preferences is \dEuclid[n]. 
    For every $v_i \in \vvv$ and dimension $z \in [n]$,
    let \[E(v_i)[z] \coloneqq \begin{cases}
      M, &\text{ if } z = i,\\
      0, &\text{ otherwise}.
    \end{cases}\] 
  Here $M$ is some sufficiently large positive real value. 

  For every alternative $j \in \aaa$ and dimension~$z \in [n]$ (which also corresponds to a voter),
    let
    $E(j)[z] \coloneqq -|\{k \in \aaa  \mid k \succeq_z j\}|$.\footnote{Note that they had a typo and used ``$j \succeq_z k$'' in their original definition. See the proof of Proposition 4 in their paper~\cite{BoLa2006}.}

    Unfortunately, the embedding of the alternatives is problematic as the overall distance from an alternative to a voter does not only depend on its rank in the voter's preferences. To see this, consider the following profile with $2$ voters and $2$ alternatives.
    \stepcounter{myprofilecounter}
    \begin{align*}
      \ppp_{\thevtwocounter}   \colon \quad v_1\colon  1 \sim 2, \qquad   v_2\colon 1 \succ 2.
    \end{align*}
    
  By their construction, the embeddings are as follows:
  \begin{center}
    \begin{tabular}{ccccc}
      $v_i \in \vvv$ & $E(v_i)$ & & $j \in \aaa$ & $E(j)$\\\midrule
      $v_1$ & $(M,0)$ & & $1$ & $(-2,-1)$\\
      $v_2$ & $(0,M)$ & & $2$ & $(-2,-2)$
    \end{tabular}
  \end{center}
  As one can see, $v_1$'s distance to~$1$ is not the same as his distance to~$2$ since 
  $\Edis{E(v_1) - E(1)} = \sqrt{(M + 2)^2 + 1}$ and $\Edis{E(v_1) - E(2)} = \sqrt{(M + 2)^2 + 2^2}$.
  This contradicts $v_1 \colon 1 \sim 2$.
\end{example}

\newcommand{\veri}[1]{[#1]}
\newcommand{\Nzero}{\mathds{N} \cup \{0\}}
\newcommand{\LL}{\ensuremath{\textsf{L}}}

To fix their problem \cite{BoLa2006} (see \cref{ex:BoLaBug}), we provide a new proof that also works for indifferent preferences and for all $p$-norms with $p \in \mathds{R}_{>1} \cup \{\infty\}$.
The key idea is to seek a point for each alternative~$j\in \aaa$ whose distance to each voter~$v_i$ corresponds to $\LL+\rank_i(j)$ (for a large enough~$\LL$) so that the preferences are represented exactly by distance comparisons.
To show that such point exists, we need to show that the $\ell_p$-spheres centered at the $E(v_i)$ with radii $\LL+\rank_i(j)$ have non-empty intersection. 
We accomplish this by rewriting the sphere-intersection problem as a continuous self-map on a compact convex box and applying Brouwer's fixed point theorem~\cite{Border1985}.

\begin{restatable}{proposition}{propBoLaFix}\label{prop:BoLaFix}
  For every $p \in \mathds{R}_{> 1} \cup \{\infty\}$, every profile with $n$ voters and with possibly indifferent preferences is $n$-dimensional geometric under $p$-norm.
  In particular, this implies that it is \dEuclid[n]. 
\end{restatable}

\begin{proof}
  Let $\LL > 1$ be some large positive number (chosen later).
  Similarly to Bogomolnaia and Laslier~\cite{BoLa2006}, we embed the voters on coordinate axes. Formally, 
  For every $v_i \in \vvv$
  and dimension $z \in [n]$,
  let \[E(v_i)\veri{z} \coloneqq
    \begin{cases}
      \LL, &\text{ if } z = i,\\
      0, &\text{ otherwise}.
    \end{cases}\] 
  In the following, we aim to find a point~$\px_j\in \RR^{n}$ such that for every voter~$v_i\in \vvv$: %
  \begin{align}
    \dis{E(v_i) - \px_j}_p = \LL + \rank_i(j). \label{eq:sphere-system}
  \end{align}
  If such a point~$\px$ exists, then the distance comparisons from~$E(v_i)$ exactly coincide with rank comparison, including ties.
  As already discussed, we will apply Brouwer's fixed point theorem by defining a continuous function~$f_j\colon K \to K$ on a non-empty, compact, and convex domain~$K$.

  \begin{description}
    \item[A compact and convex domain.] Define the box~$K\coloneqq [-m, 1]^{n}\subseteq \RR^n$. This set is non-empty, compact (closed and bounded), and convex.
    \item[A continuous function.] Define $f_j\colon K \to \RR^n$ as follows. For each~$\px \in K$ and coordinate~$i \in [n]$,
    let 
    \begin{align}
      f_j(\px)[i] \coloneqq -\rank_i(j) + \left(\dis{E(v_i)-\px}_p - (\LL - \px[i])\right).\label{def:f_j}
    \end{align}
  \end{description}
  If we can show that
  $f_j$ is continuous and
  $f_j(K) \subseteq K$,
  then we can apply Brouwer's fixed point theorem and conclude that there exists a point~$\px^{\star}\in K$ with $f_j(\px^{\star}) = \px^{\star}$.
  Then, for each~$i\in [n]$, we have by \eqref{def:f_j} that
  \[
    \px^{\star}[i] = -\rank_i(j) + \left(\dis{E(v_i)-\px^{\star}}_p - (\LL - \px^{\star}[i])\right).
  \]
  Rearranging yields
  \[
    \dis{E(v_i)-\px^\star}_p = \LL + \rank_i(j) \text{ for all voters~}v_i\in \vvv, 
  \]
  which is exactly \eqref{eq:sphere-system}.

  Before we show the two properties, let us recall one fact from calculus.
  \begin{fact}\label{fact:concave-function-approx}
    Let $p \in \mathds{R}_{> 1}$ and $g(y) = \sqrt[p]{y}$ for $y\ge 0$.
    Then, for all $y,s \ge 0$, it holds that 
    \begin{align}
      g(y+s) \le g(y) + g'(y)\cdot s = y^{\frac{1}{p}} + \frac{1}{p}y^{\frac{1}{p}-1}\cdot s.
    \end{align}
  \end{fact}
   \begin{proof}[Proof of
    \cref{fact:concave-function-approx}]
    \renewcommand{\qed}{~\hfill~$\diamond$}
    Since $p > 1$ and $y \ge 0$, we have
    $g''(y) = \frac{1}{p}(\frac{1}{p}-1)y^{\frac{1}{p}-2} \le 0$.
    So $g$ is concave on~$(0,\infty)$.
    Hence, $g$ lies below its tangent line. That is, for all $y\ge 0$ and $s\ge 0$, 
    $g(y+s) \le g(y) + g'(y)\cdot s = y^{\frac{1}{p}} + \frac{1}{p}y^{\frac{1}{p}-1}\cdot s$, as desired.
  \end{proof}

  Now, we are ready to show the two properties. We note that the following claim uses \cref{fact:concave-function-approx} for finite~$p > 1$; for $p = \infty$ and large~$\LL$, one has $\dis{E(v_i)-\px}_{\infty} = \LL - \px[i]$ directly, so $f_j$ is constant and the claim holds trivially.
  \begin{clm}\label{clm:n-voters-n-geometric-f}
    The function~$f_j$ from \eqref{def:f_j} is continuous and $f_j(K) \subseteq K$ for some~$\LL$.
  \end{clm}
  \begin{proof}[Proof of
    \cref{clm:n-voters-n-geometric-f}]
    \renewcommand{\qed}{~\hfill~$\diamond$}
    For each~$i\in [n]$, each coordinate map~$\px \mapsto \px[i]$ is continuous, and the function~$\px \mapsto E(v_i)-\px$ is affine and hence continuous. 
    The $p$-norm is continuous as a composition of continuous operations.
    Therefore, $f_j$ is continuous.

    It remains to show the second part of the statement.
    Fix a point~$\px\in K$ and a coordinate~$i\in [n]$.
    We aim to show that $0\le \dis{E(v_i)-\px}_p - (\LL-\px[i]) \le 1$.
    By definition, $\px[i] \le 1$, so $\LL-\px[i] \ge \LL-1 > 0$, and
    \begin{align}
      \label{eq:nvoters-distance}
      \dis{E(v_i)-\px}_p^p = |\LL-\px[i]|^p + \sum_{z \in [n]\setminus \{i\}}|\px[z]|^p = (\LL-\px[i])^p + \sum_{z \in [n]\setminus \{i\}}|\px[z]|^p.  
    \end{align}
    Apply \cref{fact:concave-function-approx} with $y =  (\LL-\px[i])^p$ and $s=|\px[z]|^p$, we get
    \begin{align}
      (\LL-\px[i]) \le \dis{E(v_i)-\px}_p &=\left((\LL-\px[i])^p+\sum_{z\in [n]\setminus \{i\}}|\px[z]|^p\right)^{1/p}\nonumber\\
      & \le (\LL-\px[i]) + \frac{1}{p}(\LL-\px[i])^{1-p}\sum_{z\in[n]\setminus \{i\}}|\px[z]|^p.
    \end{align}
    Subtracting~$(\LL-\px[i])$ gives
    \begin{align}
      0\le \dis{E(v_i)-\px}_p - (\LL-\px[i])&\le \frac{1}{p}(\LL-\px[i])^{1-p}\sum_{z\in[n]\setminus \{i\}}|\px[z]|^p
                                              \le \frac{(n-1)m^p}{p(\LL-1)^{p-1}} \eqqcolon \delta(\LL).\label{def:delta}
    \end{align}
    The last inequality holds since $\px[i] \in [-m, 1]$.
    Clearly, we can choose~$\LL >1$ so that $\delta(\LL) \le 1$.
    For instance, it suffices to take~$\LL \ge 1 + \big(\frac{(n-1)m^p}{p}\big)^{\frac{1}{p-1}}$.
    Then, for all~$\px\in K$, using~\eqref{def:delta},
    we get
    \begin{align*}
      f_j(\px)[i] = -\rank_i(j) + \left(\dis{E(v_i)-\px}_p - (\LL - \px[i])\right) \le -\rank_i(j) + \delta(\LL) \le 0+1=1,
    \end{align*}
    and
    \begin{align*}
      f_j(\px)[i] = -\rank_i(j) + \left(\dis{E(v_i)-\px}_p - (\LL - \px[i])\right) \ge -\rank_i(j) + 0 \ge -m.
    \end{align*}
    Hence, $f_j(\px)\subseteq K$.
  \end{proof}
  By \cref{clm:n-voters-n-geometric-f}, we infer that there exists a point~$\px^{\star}$ such that $f_j(\px^{\star}) = \px^{\star}$, as desired.

  Repeating the above construction independently for each alternative~$j \in \aaa$ defines $E$ on all alternatives in~$\aaa$.
  By construction, for every voter $v_i$ and alternatives $j, k\in \aaa$,
  \[
    \dis{E(v_i)-E(j)}_p - \dis{E(v_i)-E(k)}_p = \rank_{i}(j) - \rank_i(k),
  \]
  so
  \[
    \dis{E(v_i)-E(j)}_p \le \dis{E(v_i)-E(k)}_p \text{ if and only if }
     \rank_{i}(j) \le \rank_i(k) \text{ if and only if }
     j\prefeq_i k.
\]
In particular, if $j\sim_i k$, then $ \rank_{i}(j) = \rank_i(k)$, and the two distances are equal.
\end{proof}

Finally, we extend our result for $m$ alternatives to arbitrary $p \in \mathds{R}_{>1} \cup \{\infty\}$.
In the proof we first observe that our construction from \cref{thm:d=m-1->Manhattan} works for every $p \in  \mathds{R}_{\geq 1}$.
However, the case where $p = \infty$  requires a new construction, which we also present.
\begin{restatable}[\appsymb]{proposition}{promminusonevotersallp}
For every $p \in \mathds{R}_{> 1} \cup \{\infty\}$, every profile with $m$ alternatives is $(m-1)$-dimensional geometric under $p$-norm.\label{pro:mminusonevotersallp}
\end{restatable}

\begin{proof}

We first show the proof for $p \in \mathds{R}_{\geq 1}$.
As we mentioned, we will show that the construction from \cref{thm:d=m-1->Manhattan} also works for $p \in \mathds{R}_{\geq 1}$.
To do this, we show that the location of the bisectors between alternatives does not depend on our choice of $p$ as long as $p \geq 1$.
Since $p = 1$ corresponds to \dManhattan{} preferences and we have shown that our construction works for \dManhattan{} preferences, the result follows.

\begin{restatable}{clm}{statementbisector}\label{clm:eqbisector}
Let $E$ be the embedding constructed in the proof of \cref{thm:d=m-1->Manhattan}.
For every pair of alternatives $i, j \in \aaa \setminus \{m \}$, for every $p \in \mathds{R}_{\geq 1}$, the bisector of $E(i)$ and $E(j)$ restricted to $[0,2m]^{m-1}$ is $ \{ \vect{x} \in [0,2m]^{m-1} \mid \vect{x}[i] = \vect{x}[j]\}$. Additionally, the bisector of $E(i)$ and $E(m )$ restricted to $[0,2m]^{m-1}$ is  $ \{ \vect{x} \in [0,2m]^{m-1} \mid \vect{x}[i] = m\}$. 
\end{restatable}

\begin{proof}\renewcommand{\qedsymbol}{$\diamond$}
	
We first compute the bisector of $E(i)$ and $E(j)$ restricted to $[0,2m]^{m-1}$. This is the set of points $\vect{x} \in [0,2m]^{m-1}$ such that:
\begin{align*}
	\dis{E(i) - \vect{x}}_p &= \dis{E(j) - \vect{x}}_p &\iff \\
	\left(\sum_{z = 1}^{m-1} |E(i)[z] - \vect{x}[z]|^p\right)^{\frac{1}{p}} &= \left(\sum_{z = 1}^{m-1} |E(j)[z] - \vect{x}[z]|^p\right)^{\frac{1}{p}} &\iff \\
	\sum_{z = 1}^{m-1} |E(i)[z] - \vect{x}[z]|^p &= \sum_{z = 1}^{m-1} |E(j)[z] - \vect{x}[z]|^p &\iff \\
	\left(\sum_{z \in [m-1] \setminus \{i\}} |\vect{x}[z]|^p\right) + |2m - \vect{x}[i]|^p &= \left(\sum_{z \in [m-1] \setminus \{j\}} |\vect{x}[z]|^p\right) + |2m - \vect{x}[j]|^p &\iff \\
	\left(\sum_{z \in [m-1] \setminus \{i\}} \vect{x}[z]^p\right) + (2m - \vect{x}[i])^p &= \left(\sum_{z \in [m-1] \setminus \{j\}} \vect{x}[z]^p\right) + (2m - \vect{x}[j])^p &\iff \\
	\vect{x}[j]^p + (2m - \vect{x}[i])^p &= \vect{x}[i]^p + (2m - \vect{x}[j])^p & \iff \\
	\vect{x}[i]^p - (2m - \vect{x}[i])^p &=  \vect{x}[j]^p - (2m - \vect{x}[j])^p.
\end{align*}  

Let us assume, without loss of generality, that $\vect{x}[i] \leq \vect{x}[j]$. Let $\delta = \vect{x}[j] - \vect{x}[i]$. We know that $\delta \in [0, 2m - \vect{x}[i]]$, because $ \vect{x}[j] \leq 2m$ and $\vect{x}[j] \geq \vect{x}[i]$. Let us rewrite the equation as:
\[ \vect{x}[i]^p - (2m - \vect{x}[i])^p =  (\vect{x}[i] + \delta)^p - (2m - \vect{x}[i] - \delta)^p\]

The equality holds if and only if $\delta = 0$, i.e., $\vect{x}[i] = \vect{x}[j]$. To see why, first observe that the equation holds when $\delta = 0$. Moreover, since the RHS strictly increases as $\delta$ increases (recall that $0 \leq \delta \leq 2m - \vect{x}[i]$), $\delta = 0$ must be the only solution. Thus we obtain that the bisector is $ \{ \vect{x} \in [0,2m]^{m-1} \mid \vect{x}[i] = \vect{x}[j]\}$.

Now we compute the bisector of $E(i)$ and $E(m)$ restricted to $[0,2m]^{m-1}$. This is the set of points $\vect{x} \in [0,2m]^{m-1}$ such that:
\begin{align*}
	\dis{E(i) - \vect{x}}_p &= \dis{E(m) - \vect{x}}_p &\iff \\
	\left(\sum_{z = 1}^{m-1} |E(i)[z] - \vect{x}[z]|^p\right)^{\frac{1}{p}} &= \left(\sum_{z = 1}^{m-1} |E(m)[z] - \vect{x}[z]|^p\right)^{\frac{1}{p}} &\iff \\
	\sum_{z = 1}^{m-1} |E(i)[z] - \vect{x}[z]|^p &= \sum_{z = 1}^{m-1} |E(m)[z] - \vect{x}[z]|^p &\iff \\
	\left(\sum_{z \in [m-1] \setminus \{i\}} |\vect{x}[z]|^p\right) + |2m - \vect{x}[i]|^p &= \sum_{z \in [m-1]} |\vect{x}[z]|^p &\iff \\
	\left(\sum_{z \in [m-1] \setminus \{i\}} \vect{x}[z]^p\right) + (2m - \vect{x}[i])^p &= \sum_{z \in [m-1]} \vect{x}[z]^p &\iff \\
	(2m - \vect{x}[i])^p &= \vect{x}[i]^p  & \iff \\
	2m - \vect{x}[i] &= \vect{x}[i] & \iff \\
	\vect{x}[i] & = m.
\end{align*}  

Thus the bisector of $E(i)$ and $E(m)$ restricted to $[0,2m]^{m-1}$ is indeed $ \{ \vect{x} \in [0,2m]^{m-1} \mid \vect{x}[i] = m\}$.   
\end{proof}

Bisectors uniquely define the areas where the voters have the same preferences over the alternatives.
Since the location of the bisectors does not depend on $p$ and $E$ is a \dManhattan[(m-1)]-embedding, i.e., an embedding under $1$-norm, the embedding $E$ must also be an embedding under $p$-norm for every $p \in \mathds{R}_{\geq 1}$. This concludes the proof.

However, the construction of \cref{thm:d=m-1->Manhattan} does not work for $p = \infty$. In the proof of \cref{thm:d=m-1->Manhattan}, our alternative placement ensured that the bisector locations do not depend on the $p$-norm as long as $p$ is in $\mathds{R}_{\geq 1}$. However, when $p = \infty$, the bisector locations are no longer the same.
We show a different construction for this case:

		\newcommand{\secondrank}{\hat{r}}
		
		We may again assume that $m \geq 2$, because a profile with one alternative can be embedded trivially.
		
		The idea of this proof is similar to the one for \cref{thm:d=m-1->Manhattan}: We embed the alternatives on their own axes, except for the alternative $m$. This allows us to control the distance to the different alternatives independently.
		However, as we do this, we can no longer embed $m$ in the origin, because any sphere that contains two alternatives other than $m$ must also contain the origin. Instead we embed $m$ a bit further away from the origin.
		We also need to use different values for the coordinates than in \cref{thm:d=m-1->Manhattan}.
		
		We again embed the alternatives on their own axes as follows: for every $j \in [m], z \in [m - 1]$ let $E(j)[z] = 20m$ if $z = j$, and $E(j)[z] = 0$ otherwise. The location of $m$ is different: Let $E(m)[z] = -12m$ for every $z \in [m-1]$.
		
		We again embed voters on the axes based on their rank of the respective alternatives. 
		Let $v_i \in \vvv$ be an arbitrary voter. For every dimension $z \in [m - 1]$
		\[E(v_i)[z] =
		\begin{cases}5m - \rank_i(z), &\text{ if } \rank_i(z) < \rank_i(m),\\
			2m - \rank_i(z), &\text{ if } \rank_i(z) > \rank_i(m),\\
			3m & \text{ if } \rank_i(z) = \rank_i(m) \text{ and } \rank_i(m) > 0,\\
			4m & \text{ if } \rank_i(z) = \rank_i(m) \text{ and } \rank_i(m) = 0.
		\end{cases}\]

		Now let us compute the distances between $v_i$ and the alternatives. For every alternative $j \in \aaa \setminus \{m\}$: 
		\begin{align}
			\Maxdis{E(v_i) - E(j)} &= \max_{z \in [m-1]}|E(v_i)[z] - E(j)[z]|
			= 20m - E(v_i)[j] \label{eq:m+1maxcoord}\\ 
			&= \begin{cases}15m + \rank_i(j), &\text{ if } \rank_i(j) < \rank_i(m)\\
				18m + \rank_i(j), &\text{ if } \rank_i(j) > \rank_i(m),\\
				17m, & \text{ if } \rank_i(j) = \rank_i(m) \text{ and } \rank_i(m) > 0,\\
				16m  & \text{ if } \rank_i(j) = \rank_i(m) \text{ and } \rank_i(m) = 0.
			\end{cases} \label{eq:m+1maxdist}
		\end{align}
		Equation~\eqref{eq:m+1maxcoord} follows from the fact that for every dimension $z \in [m-1]$, we have that $|E(v_i)[z] - 20m| > |E(v_i)[z] - 0|$. Also recall that $E(j)[z] = 20m$ if $z = j$ and $E(j)[z] = 0$ otherwise.

		For $m$, we have that
		\[
		\Maxdis{E(v_i) - E(m)} = \max_{z \in [m-1]}|E(v_i)[z] - E(m)[z]|
		= 12m + \max_{z \in [m-1]}E(v_i)[z].
		\]
		
		We proceed to show that $E$ is a \dMax embedding of $\ppp$.
		Let $v_i \in \vvv$ be an arbitrary voter, and let $j, k \in \aaa$ be an arbitrary pair of alternatives.

		If $m \notin \{j,k\}$, then it is clear from Equation~\eqref{eq:m+1maxdist} that $j \succ_i k$ if and only if $\Maxdis{E(v_i) - E(j)} < \Maxdis{E(v_i) - E(k)}$, as the distance increases as the rank increases. Similarly, $j \sim_i k$ if and only if $\Maxdis{E(v_i) - E(j)} = \Maxdis{E(v_i) - E(k)}$.
		
		We continue to the case where $m \in \{j,k\}$. Without loss of generality, let $m = k$.
		
		We distinguish between two cases:
		\begin{description}
			\item[Case 1:] $\rank_i(m) > 0$. This implies that there exists an alternative $s \in \aaa \setminus \{m\}$ with $\rank_i(s) = 0$. Thus we have that $\max_{z \in [m-1]}E(v_i)[z] = 5m$, and therefore $\Maxdis{E(v_i) - E(m)} =  12m + 5m = 17m$.
			
			If $v_i \colon j \succ m$, then $\Maxdis{E(v_i) - E(j)} \stackrel{\eqref{eq:m+1maxdist}}{\leq} 16m < 17m = \Maxdis{E(v_i) - E(m)}$, as required. Similarly, if $v_i \colon m  \succ j$, then $\Maxdis{E(v_i) - E(j)} \stackrel{\eqref{eq:m+1maxdist}}{\geq} 18m > 17m =\Maxdis{E(v_i) - E(m)}$, as required.
			If $v_i \colon m \sim j$, then  $\Maxdis{E(v_i) - E(j)} \stackrel{\eqref{eq:m+1maxdist}}{=} 17m = \Maxdis{E(v_i) - E(m)}$, as required.
			
			\item[Case 2:] $\rank_i(m) = 0$. 
			
			Since there is no alternative $s \in \aaa \setminus \{m\}$ such that $\rank_i(s)= 0$, we have that $\max_{z \in [m-1]}E(v_i)[z] = 4m$, and thus $\Maxdis{E(v_i) - E(m)} =  12m + 4m = 16m$.
			
			Since $\rank_i(m) = 0$, for every $j \in [m-1]$, $v_i \colon m \succeq j$.
			
			If $v_i \colon m \succ j$, then
			we observe that $\Maxdis{E(v_i) - E(j)} \stackrel{\eqref{eq:m+1maxdist}}{\geq} 18m > 16m =\Maxdis{E(v_i) - E(m )}$, as required.
			If $v_i \colon m \sim j$, then $\Maxdis{E(v_i) - E(j)} \stackrel{\eqref{eq:m+1maxdist}}{=} 16m =\Maxdis{E(v_i) - E(m )}$, as required.
			
		\end{description}

\end{proof}

} %

\section{Forbidden Substructures for 2-Manhattan Embeddings}\label{sec:vconfigs}
In this section, we develop the central technical contributions of this paper: forbidden substructures that constrain how voters can be placed in any \dManhattan[2] embedding.
We introduce two types of configurations---the \bet-configuration and the \ext-configuration---that describe preference patterns among three voters preventing certain relative placements.
We also introduce the \worstinconsistentconfig, which interacts with the geometry of \dMax[2] bounding boxes.
In \cref{subsec:technical}, we prove that these configurations impose necessary conditions on \dManhattan[2] embeddings.
These results are then applied to derive our non-embeddability results in \cref{sec:Manhattan-negative}.

We begin by defining three possible geometric relationships between three voters in a 2-dimensional embedding. In \cref{subsec:technical}, we show that if certain preference patterns (the \bet- and \ext-configurations, \cref{def:3voters-forbidden-profiles-B,def:3voters-forbidden-profiles-E}) are present among three voters, then specific properties from \cref{def:3voters-configurations} below are ruled out, constraining the set of feasible embeddings.

\begin{definition}[\bet- and \ext-properties]\label{def:3voters-configurations}
  Let $\ppp$ be a profile containing at least~$3$ voters called $u,v,w$ and let $E$ be an embedding for~$\ppp$. Then, $E$ satisfies

\begin{compactitem}[--]
  \item  the \myemph{$(v,u,w)$-\bet}-property\footnote{\bet\ stands for ``between''} if
  $E(v) \in \BB(E(u),E(w))$ and %
  
  \item  the \myemph{$(v,u,w)$-\ext}-property\footnote{\ext\ stands for ``external} if
  there exists $(i,j)$ with $\{i,j\}=\{1,2\}$
  such that
  \begin{align*}
    &\min\{E(v)[i], E(w)[i]\} \le E(u)[i] \le \max\{E(v)[i], E(w)[i]\} \quad \text{ and }\\
    &\min\{E(u)[j], E(v)[j]\}  \le E(w)[j] \le \max\{E(u)[j], E(v)[j]\}.
  \end{align*}
\end{compactitem}
See \cref{fig:config-v3} for an illustration of the two properties.
If $E$ does not satisfy the $(v,u,w)$-\bet-property (-\ext-property) we say it \myemph{violates} the $(v,u,w)$-\bet-property (resp. -\ext-property).

For brevity's sake, by symmetry, we omit voters~$u$ and $w$ and just speak of the \myemph{$v$-\bet}-property (resp.\ \myemph{$v$-\ext}-property)
if $u,v,w$ are the only voters contained in~$\ppp$ and $E$ satisfies the $(v,u,w)$-\bet-property (resp.\ the $(v,w,u)$-\bet-property). 
\end{definition}

\begin{figure}  \captionsetup[subfigure]{justification=centering}
  \centering 
\begin{tikzpicture}[black]
    \drawgridA
    \foreach \x / \y / \n / \nn / \typ / \p / \dx / \dy in {2/2/u/u/voter/{above left}/-2/-1, 3/3/v/v/voterV/above left/-1/-1, 4/4/w/w/voter/above left/-1/-1} {
      \node[\typ] at (\x\y) (\n) {};
      \node[\p = \dx pt and \dy pt of \n] {$\nn$};
    }
    \drawreg
    \begin{pgfonlayer}{background}
      \addN
    \end{pgfonlayer}
    \node[below = 0ex of 31] {(BE)};
  \end{tikzpicture}~
  \begin{tikzpicture}[black]
    \drawgridA
    \foreach \x / \y / \n / \nn / \typ / \p / \dx / \dy in {2/4/u/u/voter/{above left}/-2/-1, 3/3/v/v/voterV/above left/-1/-1, 4/2/w/w/voter/above left/-2/-4} {
      \node[\typ] at (\x\y) (\n) {};
      \node[\p = \dx pt and \dy pt of \n] {$\nn$};
    }
    \drawreg
    \node[below = 0ex of 31] {(BE)};
    \begin{pgfonlayer}{background}
    \end{pgfonlayer}
\end{tikzpicture}~
\begin{tikzpicture}[black]
    \drawgridA
    \foreach \x / \y / \n / \nn / \typ / \p / \dx / \dy in {2/2/w/w/voter/{above left}/-2/-1, 3/3/v/v/voterV/above left/-1/-1, 4/4/u/u/voter/above left/-1/-1} {
      \node[\typ] at (\x\y) (\n) {};
      \node[\p = \dx pt and \dy pt of \n] {$\nn$};
    }
    \drawreg
    \node[below = 0ex of 31] {(BE)};
    \begin{pgfonlayer}{background}
    \end{pgfonlayer}
  \end{tikzpicture}~
  \begin{tikzpicture}[black]
    \drawgridA
    \foreach \x / \y / \n / \nn / \typ / \p / \dx / \dy in {2/4/w/w/voter/{above left}/-2/-1, 3/3/v/v/voterV/above left/-1/-1, 4/2/u/u/voter/above left/-2/-4} {
      \node[\typ] at (\x\y) (\n) {};
      \node[\p = \dx pt and \dy pt of \n] {$\nn$};
    }
    \drawreg
    \node[below = 0ex of 31] {(BE)};
    \begin{pgfonlayer}{background}
    \end{pgfonlayer}
  \end{tikzpicture}
  
\begin{tikzpicture}[black]
  \drawgridA
  \foreach \x / \y / \n / \nn / \typ / \p / \dx / \dy in {2/3/u/u/voter/{above left}/-2/-1, 4/2/v/v/voterV/above left/-1/-1, 3/4/w/w/voter/above left/-1/-1} {
    \node[\typ] at (\x\y) (\n) {};
    \node[\p = \dx pt and \dy pt of \n] {$\nn$};
  }
  
  \begin{pgfonlayer}{background}
    \foreach \s / \t in {12/52,13/53,14/54,21/25,31/35,41/45} {
      \path[draw,lines] (\s) edge (\t);
    }
    \drawreg
    \addN
  \end{pgfonlayer}
  \node[below = 0ex of 31] {(EX)}; %
\end{tikzpicture}~
\begin{tikzpicture}[black]
  \drawgridA
  \foreach \x / \y / \n / \nn / \typ / \p / \dx / \dy in {2/3/u/u/voter/{above left}/-2/-1, 4/4/v/v/voterV/above left/-1/-1, 3/2/w/w/voter/above left/-1/-4} {
    \node[\typ] at (\x\y) (\n) {};
    \node[\p = \dx pt and \dy pt of \n] {$\nn$};
  }
  
  \begin{pgfonlayer}{background}
    \foreach \s / \t in {12/52,13/53,14/54,21/25,31/35,41/45} {
      \path[draw,lines] (\s) edge (\t);
    }
    \drawreg
    \begin{pgfonlayer}{background}
    \end{pgfonlayer}
  \end{pgfonlayer}
  \node[below = 0ex of 31] {(EX)}; %
\end{tikzpicture}~
\begin{tikzpicture}[black]
  \drawgridA
  \foreach \x / \y / \n / \nn / \typ / \p / \dx / \dy in {2/3/w/w/voter/{above left}/-2/-3, 4/2/v/v/voterV/above left/-1/-1, 3/4/u/u/voter/above left/-1/-1} {
    \node[\typ] at (\x\y) (\n) {};
    \node[\p = \dx pt and \dy pt of \n] {$\nn$};
  }
  
  \begin{pgfonlayer}{background}
    \foreach \s / \t in {12/52,13/53,14/54,21/25,31/35,41/45} {
      \path[draw,lines] (\s) edge (\t);
    }
    \drawreg
    \begin{pgfonlayer}{background}
    \end{pgfonlayer}
  \end{pgfonlayer}
  \node[below = 0ex of 31] {(EX)}; %
\end{tikzpicture}~
\begin{tikzpicture}[black]
  \drawgridA
  \foreach \x / \y / \n / \nn / \typ / \p / \dx / \dy in {2/3/w/w/voter/{above left}/-2/-1, 4/4/v/v/voterV/above left/-1/-1, 3/2/u/u/voter/above left/-1/-3} {
    \node[\typ] at (\x\y) (\n) {};
    \node[\p = \dx pt and \dy pt of \n] {$\nn$};
  }
  
  \begin{pgfonlayer}{background}
    \foreach \s / \t in {12/52,13/53,14/54,21/25,31/35,41/45} {
      \path[draw,lines] (\s) edge (\t);
    }
    \drawreg
    \begin{pgfonlayer}{background}
    \end{pgfonlayer}
  \end{pgfonlayer}
  \node[below = 0ex of 31] {(EX)}; %
\end{tikzpicture}
~
\begin{tikzpicture}[black]
  \drawgridA
  \foreach \x / \y / \n / \nn / \typ / \p / \dx / \dy in {4/3/u/u/voter/{above left}/-2/-1, 2/2/v/v/voterV/above left/-1/-1, 3/4/w/w/voter/above left/-1/-1} {
    \node[\typ] at (\x\y) (\n) {};
    \node[\p = \dx pt and \dy pt of \n] {$\nn$};
  }
  
  \begin{pgfonlayer}{background}
    \foreach \s / \t in {12/52,13/53,14/54,21/25,31/35,41/45} {
      \path[draw,lines] (\s) edge (\t);
    }
    \drawreg
    \begin{pgfonlayer}{background}
    \end{pgfonlayer}
  \end{pgfonlayer}
  \node[below = 0ex of 31] {(EX)}; %
\end{tikzpicture}~
\begin{tikzpicture}[black]
  \drawgridA
  \foreach \x / \y / \n / \nn / \typ / \p / \dx / \dy in {4/3/u/u/voter/{above left}/-2/-1, 2/4/v/v/voterV/above left/-1/-1, 3/2/w/w/voter/above left/-2/-4} {
    \node[\typ] at (\x\y) (\n) {};
    \node[\p = \dx pt and \dy pt of \n] {$\nn$};
  }
  
  \begin{pgfonlayer}{background}
    \foreach \s / \t in {12/52,13/53,14/54,21/25,31/35,41/45} {
      \path[draw,lines] (\s) edge (\t);
    }
    \drawreg
    \begin{pgfonlayer}{background}
    \end{pgfonlayer}
  \end{pgfonlayer}
  \node[below = 0ex of 31] {(EX)}; %
\end{tikzpicture}~
\begin{tikzpicture}[black]
  \drawgridA
  \foreach \x / \y / \n / \nn / \typ / \p / \dx / \dy in {4/3/w/w/voter/{above left}/-2/-1, 2/2/v/v/voterV/above left/-1/-1, 3/4/u/u/voter/above left/-1/-1} {
    \node[\typ] at (\x\y) (\n) {};
    \node[\p = \dx pt and \dy pt of \n] {$\nn$};
  }
  
  \begin{pgfonlayer}{background}
    \foreach \s / \t in {12/52,13/53,14/54,21/25,31/35,41/45} {
      \path[draw,lines] (\s) edge (\t);
    }
    \drawreg
    \begin{pgfonlayer}{background}
    \end{pgfonlayer}
  \end{pgfonlayer}
  \node[below = 0ex of 31] {(EX)}; %
\end{tikzpicture}~
\begin{tikzpicture}[black]
  \drawgridA
  \foreach \x / \y / \n / \nn / \typ / \p / \dx / \dy in {4/3/w/w/voter/{above left}/-2/-1, 2/4/v/v/voterV/above left/-1/-1, 3/2/u/u/voter/above left/-1/-3} {
    \node[\typ] at (\x\y) (\n) {};
    \node[\p = \dx pt and \dy pt of \n] {$\nn$};
  }
  
  \begin{pgfonlayer}{background}
    \foreach \s / \t in {12/52,13/53,14/54,21/25,31/35,41/45} {
      \path[draw,lines] (\s) edge (\t);
    }
    \drawreg
    \begin{pgfonlayer}{background}
    \end{pgfonlayer}
  \end{pgfonlayer}
  \node[below = 0ex of 31] {(EX)}; %
\end{tikzpicture}
\caption{Two possible types of embeddings illustrating the properties in \cref{def:3voters-configurations} (the numbering will be used in the proofs of Lemmas \ref{clm:n5-m4} and \ref{lem:ext-property}). (BE) means ``between'' while (EX) ``external''.}\label{fig:config-v3}
\end{figure}

Note that there are four possible types of embeddings which satisfy the $(v,u,w)$-\bet-property\ (see the first row in \cref{fig:config-v3}).
They are however equivalent up to mirroring.
Analogously, there are eight possible types of embeddings which satisfy the $(v,u,w)$-\ext-property (see the last two rows in \cref{fig:config-v3}).
Moreover, any embedding for three voters~$u, v, w$ must satisfy the $u$-, $v$- or $w$-\ext-property, or the $u$-, $v$- or $w$-\bet-property, although it may satisfy more than one of these (consider for example three voters at the same point).
However, each of these embeddings satisfying the $(v,u,w)$-\bet-property (resp.\ $(v,u,w)$-\ext-property) forbids certain types of preference structures. %
The following two configurations describe preferences whose existence precludes an embedding from satisfying either the \bet-property or the \ext-property for some voters,
as we will show in \cref{lem:bet-property,lem:ext-property}. 

Intuitively, a $(v,u,w)$-\bet-configuration forbids voter $v$ to be embedded within the bounding box of voters $u$ and~$w$:
\begin{definition}[\bet-configurations]\label{def:3voters-forbidden-profiles-B}
  A profile~$\ppp$ with $3$ voters~$u,v,w$ and $3$ alternatives~$a,b,x$ is 
  a \myemph{$(v,u,w)$-\bet-configuration}
  if the following holds:
   \begin{align*}
    u,w\colon b \succ x \succ a,  \text{ and }  v\colon  a \succ x \succ b. %
  \end{align*}%
\end{definition}

A $(v,u,w)$-\ext-configuration forbids $v$ to be embedded outside of the bounding box of voters~$u$ and $w$:
\begin{definition}[\ext-configurations]\label{def:3voters-forbidden-profiles-E}
  A profile~$\ppp$ with $3$ voters~$u,v,w$ and $6$ alternatives~$x, a,b,c,d,e$ ($c,d,e$ not necessarily distinct) is
  a \myemph{$(v,u,w)$-\ext-configuration}
  if the following holds:
  \begin{alignat*}{4}
    u\colon & a \succ x \succ b,  \quad & c\succ x, &\quad& & d\succ x\\
    v\colon & \{a,b\} \succ x, &  &&& x \succ \{d,e\},\\
    w\colon & b \succ x \succ a, &  c\succ x, &&& e\succ x.
  \end{alignat*}%
\end{definition}

\begin{example}\label{ex:non-bet-ext}
  \setcounter{betcounter}{\themyprofilecounter} \stepcounter{myprofilecounter}
  \setcounter{extcounter}{\themyprofilecounter} \stepcounter{myprofilecounter}
  Consider two profiles~$\ppp_{\thebetcounter}$
  and $\ppp_{\theextcounter}$ which satisfy the following:  
  \begin{alignat*}{6}
\ppp_{\thebetcounter}\colon  &  v_1\colon & 1 \succ 2 \succ 3, &\qquad \ppp_{\theextcounter}\colon & &v_1 \colon & \{1,2\} \succ 3 \succ 4, \\ 
     &v_2\colon & 3 \succ 2 \succ 1, &&&v_2 \colon & \{1,4\} \succ 3 \succ 2, \\
     &v_3\colon & 3 \succ 2 \succ 1,&&&v_3 \colon & \{2, 4\} \succ 3 \succ 1.
  \end{alignat*}
  
  \noindent Clearly, 
  $\ppp_{\thebetcounter}$ is a $(v_1,v_2,v_3)$-\bet-configuration.
  Further, one can verify that~$\ppp_{\theextcounter}$ contains a $(v_1,v_2,v_3)$-, $(v_2,v_1,v_3)$-, and $(v_3,v_1,v_2)$-\ext-configuration,
  by setting $(a,b,x,c,d,e)\coloneqq (1,2,3,4,4,4)$, $(a,b,x,c,d,e)\coloneqq(1,4,3,2,2,2)$, and $(a,b,x,c,d$, $e)\coloneqq (2,4,3,1,1,1)$, respectively.
\end{example}

The next configuration is a restriction of the worst-diverse configuration.
The latter is used to characterize the so-called single-peaked preferences~\citep{BH11}.

\begin{definition}[\Worstinconsistentconfig]\label{def:threeworst}
  A profile~$\ppp$ is an \myemph{\worstinconsistentconfig} if for \emph{every} triple of alternatives $\{x, y, z\} \subseteq \aaa$ there are three voters $u, v, w \in \vvv$ which form a worst-diverse configuration, i.e., their preferences satisfy
  \begin{align*}
    u\colon &\{x, y\} \succ z, &   v\colon &\{x, z\} \succ y,  & w\colon &\{y, z\} \succ x.
  \end{align*}
\end{definition}

\subsection{Necessary Conditions from Forbidden Substructures}\label{subsec:technical}

In this subsection, we show how the voter configurations restrict the possible \dManhattan[2] embeddings.
For brevity's sake, given an embedding~$E$ and a voter~$v\in \vvv$ (resp.\ an alternative~$a\in \aaa$), we use boldface~$\pv$ (resp.~$\pa$) to denote the embedding~$E(v)$ (resp.\ $E(a)$).

\begin{lemma}\label{lem:two-votes}
  Let $\ppp$ be a profile admitting a \dManhattan[2] embedding~$E$.
  For every two voters~$r,s$ and two alternatives~$x,y$ the following holds:
  \begin{compactenum}[(i)]
    \item \label{lem:not-inside} If $r,s\colon y \succ x$,
  then $\px\notin \BB(\pr, \ps)$.
  \item \label{lem:not-outside-corner}  If $r\colon x \succ y$ and $s\colon y\succ x$, 
  then $\ps\notin \BB(\pr, \px)$.\footnote{Equivalently, for each $\Pi \in \{\dNE$, $\dNW,\dSE,\dSW\}$, if $r\colon x \succ y$ and $s\colon y\succ x$ and $\ps \in \Pi(\pr)$, then $\px \notin \Pi(\ps)$.}
  \end{compactenum}
\end{lemma}

\begin{figure}
  \centering
     \begin{tikzpicture}[scale=0.3, black]
    \foreach \x / \y / \n / \nn / \typ / \p / \dx / \dy in {0/0/r/r/voter/{above left}/-2/-1, 3/2/x/x/alter/above left/-1/-1, 6/4/s/s/voter/above left/-1/-1,0/4/rs/\;/hide/above left/-1/-1, 6/0/sr/\;/hide/above left/-1/-1} {
      \node[\typ] at (\x , \y) (\n) {};
      \node[\p = \dx pt and \dy pt of \n] {$\nn$};
    }
   
    \coordinate (start) at (-6,-4);
    \coordinate (end) at (-6,-7);
    \coordinate (ss) at (6,0);
    \coordinate (ee) at (0,7);
    
    \begin{pgfonlayer}{background}
    	\drawcirclefill{x}{r}{fill=green!20,draw=green};
    	\drawcirclefill{x}{s}{fill=green!20,draw=green};
    \end{pgfonlayer}

    \draw[linesdark] (0,0) -- (0,4) -- (6,4) -- (6,0)  -- cycle;
\end{tikzpicture}\quad\quad
\begin{tikzpicture}[scale=0.3, black]    
        \foreach \x / \y / \n / \nn / \typ / \p / \dx / \dy in {2/2/r/r/voter/{above left}/-2/-1, 4/3/s/s/voter/above left/-1/-1, 6/4/x/x/alter/above right/-1/-1, 2/4/rx/\;/hide/above left/-1/-1, 6/2/xr/\;/hide/above left/-1/-1} {
      \node[\typ] at (\x , \y) (\n) {};
      \node[\p = \dx pt and \dy pt of \n] {$\nn$};
    }
    \begin{pgfonlayer}{background}
    	\drawcirclefill{x}{r}{fill=red!20,draw=red};
    	\drawcirclefill{x}{s}{fill=green!20,draw=green};
    \end{pgfonlayer}
    
    \draw[linesdark] (2,2) -- (2,4) -- (6,4) -- (6,2) -- cycle;
\end{tikzpicture}

     \caption{Some illustrations for the idea of \cref{lem:two-votes}. Both figures have two voters $r$ and $s$ and circles of radius $\Mdis{\pr - \px}$ and $\Mdis{\ps - \px}$ centered at $\pr$ and $\ps$, respectively. Left: If both $r$ and $s$ satisfy the premises of \cref{lem:two-votes}\eqref{lem:not-inside} and $\px \in \BB(\pr, \ps)$, then $y$ needs to be strictly inside both circles, which is impossible. Right: If both $r$ and $s$ satisfy the premises of \cref{lem:two-votes}\eqref{lem:not-outside-corner} and $\ps \in \BB(\pr, \px)$, then $y$ needs to be strictly inside the inner circle and outside of the outer circle, which is impossible.
     }\label{fig:two-voters-idea}
\end{figure}

{\begin{proof}
  Let $\ppp$, $E$, $r,s$, and $x,y$ be as defined.
  Both statements follow from using simple calculations and the triangle inequality of Manhattan distances. The idea is also illustrated in \cref{fig:two-voters-idea}.
  
  For Statement~\eqref{lem:not-inside}, suppose, towards a contradiction, that $r,s\colon y \succ x$ and $\px\in \BB(\pr, \ps)$.
  By the definition of Manhattan distances, this implies that
  \begin{align}
    \Mdis{\pr-\px}+\Mdis{\px-\ps} = \Mdis{\pr-\ps}.\label{eq-lemma1-1}
  \end{align}
  By the preferences of voters~$r$ and $s$ we infer that
$    \Mdis{\pr-\py}+\Mdis{\ps-\py} < \Mdis{\pr-\px}+\Mdis{\ps-\px} \stackrel{\eqref{eq-lemma1-1}}{=} \Mdis{\pr-\ps},$
  a contradiction to the triangle inequality of $\Mdis{\cdot}$.

  For Statement~\eqref{lem:not-outside-corner},  suppose, towards a contradiction, that $r\colon x \succ y$ and $s\colon y\succ x$ and $\ps\in \BB(\pr, \px)$.
  By the definition of Manhattan distances, this implies that
  \begin{align}
    \Mdis{\pr-\px}=\Mdis{\pr-\ps} + \Mdis{\ps-\px}.\label{eq-lemma1-2}
  \end{align}
   By the preferences of voters~$r$ and $s$ we infer that %
   \begin{align*}
     \Mdis{\pr-\ps}+\Mdis{\ps-\py}  < \Mdis{\pr-\ps}+\Mdis{\ps-\px}
     \stackrel{\eqref{eq-lemma1-2}}{=} \Mdis{\pr-\px} < \Mdis{\pr-\py},
   \end{align*}
  a contradiction to the triangle inequality of $\Mdis{\cdot}$.\hfill~ 
\end{proof}
}
\noindent The following is a summary of the differences between the coordinates wrt.\ the preferences.

\begin{observation}\label{obs:pref-relation}
  Let profile~$\ppp$ admit a \dManhattan[2] embedding~$E$. %
  For each voter~$s$ and each two alternatives~$x,y$ with $s\colon x\pref y$,
  the following holds:

  \begin{compactenum}[(i)]
    \item\label{obs:pref-NE} If $\py\in \dNE(\ps)$, then $\py[1]+\py[2]>\px[1]+\px[2]$.
    \item\label{obs:pref-NW} If $\py\in \dNW(\ps)$, then $-\py[1]+\py[2]>-\px[1]+\px[2]$.
    \item\label{obs:pref-SE} If $\py\in \dSE(\ps)$, then $\py[1]-\py[2]>\px[1]-\px[2]$.
    \item\label{obs:pref-SW} If $\py\in \dSW(\ps)$, then $-\py[1]-\py[2]>-\px[1]-\px[2]$.
  \end{compactenum}
\end{observation}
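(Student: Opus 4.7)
The plan is to handle all four items uniformly via a linear functional associated with each diagonal direction. Specifically, I would define $L_{\dNE}(\pq)\coloneqq \pq[1]+\pq[2]$, $L_{\dNW}(\pq)\coloneqq -\pq[1]+\pq[2]$, $L_{\dSE}(\pq)\coloneqq \pq[1]-\pq[2]$, and $L_{\dSW}(\pq)\coloneqq -\pq[1]-\pq[2]$. The key elementary fact behind each case is that for signs $\varepsilon_1,\varepsilon_2 \in \{-1,+1\}$ and any reals $t_1,t_2$ one has $\varepsilon_1 t_1 + \varepsilon_2 t_2 \le |t_1|+|t_2|$, with equality exactly when $\varepsilon_i t_i \ge 0$ for both $i$. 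Applied with $t_i = \pq[i]-\ps[i]$, this yields simultaneously a universal lower bound $L(\pq)-L(\ps) \le \Mdis{\pq-\ps}$ and an equality $\Mdis{\pq-\ps} = L(\pq)-L(\ps)$ precisely when $\pq$ lies in the quadrant of~$\ps$ associated with $L$.

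With this in hand, I would give the argument for item~(i) in full and then indicate that items~(ii)--(iv) follow by the same calculation with $L_{\dNE}$ replaced by $L_{\dNW}$, $L_{\dSE}$, and $L_{\dSW}$, respectively. For~(i): since $E(y) \in \dNE(E(s))$, the equality case gives $\Mdis{E(y)-E(s)} = L_{\dNE}(E(y)) - L_{\dNE}(E(s))$; the universal bound applied to $E(x)$ gives $L_{\dNE}(E(x))-L_{\dNE}(E(s)) \le \Mdis{E(x)-E(s)}$; and the assumption $s\colon x\pref y$ combined with \cref{def:embeddings} provides $\Mdis{E(x)-E(s)} < \Mdis{E(y)-E(s)}$. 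Chaining these three relations yields $L_{\dNE}(E(x)) < L_{\dNE}(E(y))$, i.e., $E(x)[1]+E(x)[2] < E(y)[1]+E(y)[2]$, which is exactly the stated conclusion.

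I do not foresee a genuine obstacle here: the argument is a routine sign-bookkeeping exercise, which is presumably why the authors labelled this as an \emph{observation} rather than a lemma. The only mild point to verify is the boundary case, in which $E(y)$ lies on a coordinate line through $E(s)$ and therefore belongs to two adjacent quadrants simultaneously. This creates no ambiguity: on such a boundary the corresponding coordinate difference is zero, so both sign choices in $L$ give the same value, and the equality $\Mdis{E(y)-E(s)} = L(E(y))-L(E(s))$ holds for either of the two relevant linear functionals, so the conclusion of each applicable item follows by the same chain of inequalities.
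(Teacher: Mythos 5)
Your proposal is correct and matches the paper's argument in substance: the paper likewise proves only the $\dNE$ case by writing $\Mdis{E(y)-E(s)}$ as the signed sum $(E(y)[1]-E(s)[1])+(E(y)[2]-E(s)[2])$, lower-bounding $\Mdis{E(x)-E(s)}=|E(x)[1]-E(s)[1]|+|E(x)[2]-E(s)[2]|$ by the corresponding signed sum, and chaining through the strict inequality from the preference $s\colon x\pref y$. Your linear-functional packaging is just a uniform restatement of that same calculation.
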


\begin{proof}
  All proofs are straightforward by evoking the definition of Manhattan embedding.
  Hence, we only showcase how to prove the first statement.
  Let $\ppp,E,s,x,y$ be as defined. %
  Assume that $\py\in \dNE(\ps)$.
  Then, by the Manhattan property and the fact that $s\colon x\pref y$, it follows that
  \begin{align*}
    (\py[1]-\ps[1])+(\py[2]-\ps[2])
    & = \Mdis{\py-\ps}  > \Mdis{\px-\ps}  
     = | \px[1]-\ps[1]|  + | \px[2]-\ps[2]| \\
    &\ge   (\px[1]-\ps[1]) + (\px[2]-\ps[2])\\
    \Rightarrow     \py[1] + \py[2] & >   \px[1]+\px[2],
  \end{align*}
  as desired.\hfill~ 
\end{proof}

The next technical lemma excludes two alternatives from being put in the same quadrant region of some voters; see \cref{fig:bet-property} for an illustration.

\begin{lemma}\label{lem:bet-property-Ntogether}
  Let $\ppp$ be a profile admitting a \dManhattan[2] embedding~$E$.
  Let $r,s,t$ and~$x,y$ be $3$ voters and $2$ alternatives in~$\ppp$, respectively.
  The following holds.
  \begin{compactenum}[(i)]
  \item\label{lem:Ntogether1} For each $\Pi \in \{\dNE$, $\dNW,\dSE,\dSW\}$, it holds that if  $r\colon x\succ y$ and $s\colon y\succ x$ and $\px \in \Pi(\ps)$, then $\py\notin \Pi(\pr)$.
  \item\label{lem:Ntogether2} For each $\Pi \in \{\dNW, \dSE\}$, it holds that if $r,t\colon x\succ y$, $s\colon y \succ x$, $\pr\in \dSW(\ps)$, $\pt \in \dNE(\ps)$, and $\px\in \Pi(\ps)$, then $\py\notin \Pi(\ps)$.
  \end{compactenum}
\end{lemma}

\begin{figure}
  \centering
  \begin{tikzpicture}[black, scale=0.4]

    \foreach \x / \y / \n / \nn / \typ / \p / \dx / \dy in {0/0/r/r/voter/{above left}/-2/-1, 2/3/x/x/alter/above right/-1/-1, 3/2/s/s/voter/above right/-1/-1} {
      \node[\typ] at (\x , \y) (\n) {};
    }
    
    \coordinate (start) at (-6,-4);
    \coordinate (end) at (-6,-7);
    \coordinate (ss) at (9,0);
    \coordinate (ee) at (0,9);
    
    \begin{pgfonlayer}{background}
        \drawcirclefill{x}{r}{fill=red!20,draw=red};
        \drawcirclefill{x}{r}{redpattern}
    	\drawcirclefill{x}{s}{draw=green, fill=green, fill opacity = 0.2};
        \draw[linesdark] (-6,2) -- (6,2);
        \draw[linesdark] (3,-6) -- (3,6);
        \filldraw[pattern={Lines[
      distance=2mm,
      angle=45,
      line width=0.5mm]},draw=none, pattern color=black!30] (-6,6) -- (-6,0) -- (0,0) -- (0,6) -- cycle;
    \end{pgfonlayer}

    \foreach \x / \y / \n / \nn / \typ / \p / \dx / \dy in {0/0/r/r/voter/{above left}/-2/-1, 2/3/x/x/alter/above right/-1/-1, 3/2/s/s/voter/above right/-1/-1} {
      \node[\typ] at (\x , \y) (\n) {};
      \node[\p = \dx pt and \dy pt of \n, inner sep=1pt] {$\nn$};
    }

    \node[] at (-4.5,5) {$\dNW$};
    
  \end{tikzpicture}\qquad\qquad
\begin{tikzpicture}[black, scale=0.4]    
    \foreach \x / \y / \n / \nn / \typ / \p / \dx / \dy in {1/-0.5/r/r/voter/{above left}/-2/-1, 4/-1/x/x/alter/right/-1/-1, 3/2/s/s/voter/above right/-1/-1, 4/4/t/t/voter/{above left}/-2/-1} {
      \node[\typ] at (\x , \y) (\n) {};
    }
    
    \coordinate (start) at (-6,-4);
    \coordinate (end) at (-6,-7);
    \coordinate (ss) at (9,0);
    \coordinate (ee) at (0,9);

    \begin{pgfonlayer}{background}
      \draw[linesdark] (-4,2) -- (10,2);
      \draw[linesdark] (3,10) -- (3,-5);
      \drawcirclefill{x}{r}{draw=red, fill=red, fill opacity = 0.2};
      \drawcirclefill{x}{r}{redpattern}
      \drawcirclefill{x}{t}{draw=red, fill=red, fill opacity = 0.2};
      \drawcirclefill{x}{t}{redpatterntilt}
      \drawcirclefill{x}{s}{draw=green, fill=green, fill opacity = 0.2};
    \end{pgfonlayer}
    
    \foreach \x / \y / \n / \nn / \typ / \p / \dx / \dy in {1/-0.5/r/r/voter/{above left}/-2/-1, 4/-1/x/x/alter/right/-1/-1, 3/2/s/s/voter/above right/-1/-1, 4/4/t/t/voter/{above left}/-2/-1} {
      \node[\p = \dx pt and \dy pt of \n] {$\nn$};
    }
   
    \node[] at (9,-4) {$\dSE$};
    
\end{tikzpicture}
\caption{Some illustrations for \cref{lem:bet-property-Ntogether}.  Left: The green circle around $s$ has the radius $\Mdis{\ps - \px}$ and the red circle around $r$ has the radius $\Mdis{\pr - \px}$. Due to the premises of \cref{lem:bet-property-Ntogether}\eqref{lem:Ntogether1}, the alternative $y$ has to be inside the green circle but outside of the lined red circle. One can verify that this area never intersects with $\dNW(\pr)$, if $\px \in \dNW(\ps)$. Right: The green circle around $s$ has the radius $\Mdis{\ps - \px}$ and the lined red circles around $r$ and $t$ have the radius $\Mdis{\pr - \px}$ and $\Mdis{\pt - \px}$, respectively. Due to the premises of \cref{lem:bet-property-Ntogether}\eqref{lem:Ntogether2}, the alternative $y$ has to be inside the green circle but outside of the lined red circles. One can verify that this area never intersects with $\dSE(\ps)$, if $\px \in \dSE(\ps)$.}\label{fig:bet-property}
\end{figure}

\begin{proof}
  Let $\ppp,E,r,s,t,x,y$ be as defined. %
  The first statement follows directly from applying \cref{obs:pref-relation}.
  Hence, we only prove the case with~$\Pi=\dNW$.
  For a contradiction, suppose that $\px \in \dNW(\ps)$ and $\py \in \dNW(\pr)$.
  Since $r\colon x\pref y$  and $\py \in \dNW(\pr)$, by \cref{obs:pref-relation}\eqref{obs:pref-NW} we have that $\py[2]-\py[1] > \px[2]-\px[1]$.
  Since $s \colon y \pref x$ and $\px \in \dNW(\ps)$, by \cref{obs:pref-relation}\eqref{obs:pref-NW} we have that $\px[2]-\px[1] > \py[2]-\py[1]$.
  However, these two statements contradict each other.
  
  \noindent  Statement~\eqref{lem:Ntogether2}: We only show the case with $\Pi=\dNW$ as the other case is symmetric. For a contradiction, suppose that $\px, \py \in \dNW(\ps)$. %
  Since $r,t\colon x \succ y$, $s\colon y \succ x$, $\px\in \dNW(\ps)$, by the first statement,
  we have $\py \notin \dNW(\pr)\cup \dNW(\pt)$.
  However, since $\py\in \dNW(\ps)$, it follows that $\py\in \BB(\pr,\pt)$,
  a contradiction to \cref{lem:two-votes}\eqref{lem:not-inside}.\hfill~ 
\end{proof}

The next two lemmas specify the relation between a \bet-configuration and the \bet-property, and between a \ext-configuration and the \ext-property, respectively.

\begin{lemma}\label{lem:bet-property}
  If a profile contains a $(v,u,w)$-\bet-configuration, then no \dManhattan[2] embedding satisfies the $(v,u,w)$-\bet-property.
\end{lemma}

  \begin{proof}
  Suppose, towards a contradiction, that $\ppp$ is a profile which contains a $(v,u,w)$-\bet-configuration
  and admits a \dManhattan[2] embedding~$E$, such that $E$ satisfies the $(v,u,w)$-\bet-property, for $3$ voters~$u,v,w$.
  Let $a,b,x$ be the $3$ alternatives defined in the $(v,u,w)$-\bet-configuration~(see \cref{def:3voters-forbidden-profiles-B}).
  By symmetry and by the preferences of~$u$ and $w$, the embedding~$E$ corresponds to one of the four possible types of illustrations labeled with (BE) in \cref{fig:config-v3}.
  Since they are equivalent up to mirroring, let us assume that $E$ corresponds to the top left illustration of \cref{fig:config-v3}.  
  Since there are $3$ voters, we can divide the two-dimensional space into $16$ subspaces by drawing a vertical and horizontal line through each voter's embedded point.
  We enumerate these regions and use $R_i$ to refer to region~$i$, $i\in [16]$.

  First, using \cref{lem:two-votes}\eqref{lem:not-inside} (setting $(r,s,y)\coloneqq (u,w,b)$),
  we infer that alternative~$x$ cannot be embedded in~$R_6$, $R_7$, $R_{10}$, or $R_{11}$.
  Moreover, using \cref{lem:two-votes}\eqref{lem:not-outside-corner} (setting $(r,s,y)\coloneqq (u,v,a)$),
  we infer that alternative~$x$ cannot be embedded in $R_3$, $R_4$, $R_7$, or $R_8$.
  Similarly, using \cref{lem:two-votes}\eqref{lem:not-outside-corner} (setting $(r,s,y)\coloneqq (w,v,b)$),
  we infer that alternative~$x$ cannot be embedded in~$R_9$, $R_{10}$, $R_{13}$, or $R_{14}$.
  This implies that $x$ is in one of the regions~$R_1$, $R_2$, $R_5$, $R_{12}$, $R_{15}$ or $R_{16}$.
  By exchanging the two coordinates and the roles of $u$ and $w$ and the roles of $a$ and $b$,
  respectively, we know that if $E$ embeds alternative~$x$ in~$R_5$ (resp.\ $R_1$ or~$R_2$),
  then there exists another Manhattan embedding which embeds~$x$ in~$R_{15}$ (resp.\ $R_{16}$ or $R_{12}$),
  and vice versa.
  Hence, without loss of generality, assume that $E$ embeds~$x$ in~$R_1$, $R_2$, or $R_5$.
  Note that this implies that~$\px \in \dNW(\pv)$.

  Similarly, using \cref{lem:two-votes}\eqref{lem:not-outside-corner} (setting~$(r,s,x,y)=(u,v,b,a)$ and $(r,s,x,y)=(w,v,b,a)$), %
  we infer that $\pv \notin \BB(\pu, \pb)\cup \BB(\pw, \pb)$.
  This implies that $\pb\notin \dNE(\pv)\cup \dSW(\pv)$. 
  Since $\px \in \dNW(\pv)$, by \cref{lem:bet-property-Ntogether}\eqref{lem:Ntogether2}~(wrt.\ alternatives~$x$ and $b$), it follows that $\pb\notin \dNW(\pv)$.
  This implies that $\pb\in \dSE(\pv)$.

  Let us consider alternative~$a$.
  On the one hand, since $u,w\colon x\succ a$ and $v\colon a \succ x$,
  by \cref{lem:two-votes}, 
  it follows that $\pa \notin \BB(\pu,\pw)\cup \dNE(\pw)\cup \dSW(\pu)$.
  Altogether, it follows that $\pa \in \dSE(\pw)\cup\dNW(\pw)\cup \dSE(\pu)\cup \dNW(\pu)$. 
  
  On the other hand, since $v\colon a \succ b$, $u, w\colon b \succ a$, and $\pb \in \dSE(v)$,
  by  \cref{lem:bet-property-Ntogether}\eqref{lem:Ntogether1},
  it follows that $\pa \notin \dSE(u)\cup \dSE(w)$.
  Analogously, since $v\colon a \succ x$, $u, w\colon x \succ a$, and $\px \in \dNW(v)$,
  by  \cref{lem:bet-property-Ntogether}\eqref{lem:Ntogether1},
  it follows that $\pa \notin \dNW(u)\cup \dNW(w)$.
  
  This results in having no place to embed alternative~$a$, a contradiction.\hfill~ 
\end{proof}

\begin{lemma}\label{lem:ext-property}
  If a profile contains a $(v,u,w)$-\ext-configuration, then no \dManhattan[2] embedding satisfies the $(v,u,w)$-\ext-property.
\end{lemma} 

\begin{proof}
  Suppose, for the sake of contradiction, that there exists a profile~$\ppp$
  which contains a $(v,u,w)$-\ext-configuration
  and admits a \dManhattan[2] embedding~$E$ such that $E$ satisfies the $(v,u,w)$-\ext-property, for $3$ voters~$v,u,w$.
  Let $x,a,b,c,d,e$ be the $6$ alternatives defined in the $(v,u,w)$-\ext-configuration~(see \cref{def:3voters-forbidden-profiles-E}).
  Observe that the preferences of $u$ and $w$ are symmetric in the sense that if we exchange the roles of $a$ and $b$, and also the roles of~$d$ and $e$, then we arrive at a new $(v,u,w)$-\ext-configuration\
  for~$\ppp$.
  Hence, up to rotation and mirroring,
  we can assume that embedding~$E$ corresponds to the first embedding of the second row of~\cref{fig:config-v3}.
  Since there are $3$ voters, we can divide the two-dimensional space into $16$ subspaces by drawing a vertical and horizontal line through each voter's embedded point. 
  We enumerate these regions as in the first embedding of the second row of~\cref{fig:config-v3} and use $R_i$ to refer to region~$i$, $i\in [16]$. 
  We aim to show by contradiction that $x$ cannot be embedded in any region.  

  First, using \cref{lem:two-votes}\eqref{lem:not-inside} (setting $(r,s,y)\coloneqq (u,w,c)$),
  we infer that alternative~$x$ cannot be embedded in~$R_6$.
  Analogously, repeatedly using \cref{lem:two-votes}\eqref{lem:not-inside} (setting $(r,s,y)\coloneqq (u,v,a)$ and  $(r,s,y)\coloneqq (v,w,b)$, respectively), we infer that $x$ cannot be embedded in regions~$R_7$, $R_{10}$ or $R_{11}$.
  Further, using \cref{lem:two-votes}\eqref{lem:not-outside-corner} (setting  $(r,s,y)\coloneqq (v,u,d)$),
  we infer that alternative~$x$ cannot be embedded in regions~$R_1$ and $R_5$.
  Again, using \cref{lem:two-votes}\eqref{lem:not-outside-corner} repeatedly (setting $(r,s,y)\coloneqq (v,w,e)$, $(r,s,y)\coloneqq (u,w,b)$, and $(r,s,y)\coloneqq (w,u,a)$, and $(r,s,y)\coloneqq (u,v,b)$, respectively),
  we further infer that alternative~$x$ cannot be embedded in regions~$R_1$--$R_4$, $R_9$, $R_{13}$, and $R_{16}$.

  This implies that $x$ can only be embedded in~$R_8$, $R_{12}$, $R_{14}$, or $R_{15}$.
  To this end, since $\pv\in \dSE(\pu)\cap \dSE(\pw)$,
  by \cref{lem:two-votes}\eqref{lem:not-outside-corner} (setting $(r,s,x,y)=(v,w,a,x)$ and $(r,s,x,y)=(v,u,b,x)$, respectively), we observe that
  \begin{align}\label{eq:ext-property-ab}
    \pa[2]\le \pw[2] \text{ and } \pb[1]\ge \pu[1].
  \end{align}

  If $E$ embeds $x$ in regions~$R_{14}$--$R_{15}$, then
  \begin{align}
  \px \in \dSW(\pv)\cap \dSE(\pu).\label{eq:ext-property-xvu}
  \end{align}
  Since $v\colon a\succ x$, $w\colon x \succ a$,
  by \cref{lem:bet-property-Ntogether}\eqref{lem:Ntogether1},
  it follows that
  $\pa \notin \dSW(\pw)$.
  By~\eqref{eq:ext-property-ab}, it follows that $\pa \in \dSE(\pw)$.
  Since $w\colon x \succ a$ and $u\colon a \succ x$,
  by \cref{lem:bet-property-Ntogether}\eqref{lem:Ntogether1},
  it follows that $\px \notin \dSE(\pu)$, a contradiction to~\eqref{eq:ext-property-xvu}.
  
  Analogously, we also obtain a contradiction if $x$ is embedded in region~$R_8$ or $R_{12}$ by focusing on voter~$u$ and alternative~$b$.
  Assume that
  \begin{align}
    \px \in  \dNE(\pv)\cap \dSE(\pw).\label{eq:ext-property-xvw}
  \end{align}
  Since $v\colon b\succ x$ and $u\colon x \succ b$,
  by  \cref{lem:bet-property-Ntogether}\eqref{lem:Ntogether1},
  it follows that
  $\pb \notin \dNE(\pu)$.
  By~\eqref{eq:ext-property-ab}, it follows that $\pb \in \dSE(\pu)$.
  Since $u\colon x \succ b$ and $w\colon b \succ x$, by \cref{lem:bet-property-Ntogether}\eqref{lem:Ntogether1}, it follows that $\px \notin \dSE(\pw)$, a contradiction to \eqref{eq:ext-property-xvw}.\hfill~ 
\end{proof}

\section{Smallest Non-\dManhattan[2] Profiles}\label{sec:Manhattan-negative}

In this section, we apply the forbidden substructures from \cref{sec:vconfigs} to identify minimally non-\dManhattan[2] profiles.
We show that for $n\in \{3,4,5\}$ voters, the smallest non-\dManhattan[2] profile has $9-n$ alternatives (\cref{thm:no-n3-m6,thm:no-n4-m5,thm:no-n5-m4}).
Each proof proceeds by showing that any hypothetical \dManhattan[2] embedding would necessarily violate the constraints established by our \bet- and \ext-configurations or the \worstinconsistentconfig.
For brevity's sake, given an embedding~$E$ and a voter~$v\in \vvv$ (resp.\ an alternative~$a\in \aaa$), we use boldface~$\pv$ (resp.\ $\pa$) to denote the embedding~$E(v)$ (resp.\ $E(a)$).

\subsection{The Instance with $3$ Voters and $6$ Alternatives}\label{subsec:n3-m6}

Using the \bet- and \ext-configurations from \cref{sec:vconfigs} together with \cref{lem:bet-property,lem:ext-property}, we prove \cref{thm:no-n3-m6} with the help of \cref{ex:no-n3-m6}.
 
\begin{example}  \label{ex:no-n3-m6}
  \setcounter{vthreecounter}{\themyprofilecounter}
  The following profile~$\ppp_{\thevthreecounter}$ with $3$ voters and $6$ alternatives is not \dManhattan[2]. 
  \stepcounter{myprofilecounter}
  \begin{align*}
 \ppp_{\thevthreecounter} \colon   v_1\colon & 1 \succ 2 \succ 3 \succ 4 \succ 5 \succ 6,\\
    v_2\colon & 1 \succ 4 \succ 6 \succ 3 \succ 5 \succ 2,\\
    v_3\colon & 6 \succ 5 \succ 2 \succ 3 \succ 1 \succ 4. 
  \end{align*}
\end{example}
\begin{theorem}\label{thm:no-n3-m6}
  There exists a non-\dManhattan[2] profile with $3$ voters and $6$ alternatives.
\end{theorem}

\begin{proof}
  Consider profile~$\ppp_{\thevthreecounter}$ given in \cref{ex:no-n3-m6}.
  Suppose, towards a contradiction, that $E$ is a \dManhattan[2] embedding for~$\ppp_{\thevthreecounter}$.
  Since each embedding for $3$ voters must satisfy one of the two properties in \cref{def:3voters-configurations}, 
  we distinguish between two cases: there exists a voter who is embedded inside the bounding box of the other two, or there is no such voter.
  \begin{description}
    \item[Case 1:] There exists a voter~$v_i$, $i\in [3]$, such that $E$ satisfies the $v_i$-\bet-property. 
    Since $\ppp_{\thevthreecounter}$ contains a $(v_1,v_2,v_3)$-\bet-configuration wrt.\ $(a,b,x)=(2,6,5)$,
    by \cref{lem:bet-property} it follows that $E$ violates the $v_1$-\bet-property.
    Analogously, since $\ppp$ contains a $(v_2,v_1,v_3)$-\bet-configuration regarding~$a=4,b=2,x=3$, and 
    $(v_3,v_1,v_2)$-\bet-configuration with~$a=5,b=1,x=3$,
    neither does $E$ satisfy the $v_2$-\bet-property or the $v_3$-\bet-property.

    \item[Case 2:] There exists a voter~$v_i$, $i\in [3]$, such that $E$ satisfies the $v_i$-\ext-property.
    Consider the subprofile~$\ppp'$ restricted to the alternatives~$1,2,$ $3,6$.
    We claim that this subprofile contains an \ext-configuration, which by \cref{lem:ext-property} precludes the existence of such a voter~$v_i$ with the $v_i$-\ext-property:
    
    First, since $\ppp'$ contains a $(v_3,v_1,v_2)$-\ext-configuration~(setting $(u,v,w)$ $\coloneqq (v_1,v_3,v_2)$ and $(x,a,b,c,d,e)=(3,2,6,1,1,1)$),
    by \cref{lem:ext-property}, it follows that $E$ violates the $v_3$-\ext-property.
    In fact, $\ppp'$ also contains a $v_2$-\ext-configuration~(setting $(u,v,w)\coloneqq (v_1,v_2,v_3)$ and $(x,a,b,c,d,e)=(3,1,6,2,2,2)$) and a $v_1$-\ext-configuration~(setting $(u,v,w)\coloneqq (v_2,v_1,v_3)$ and $(x,a,b,c,d,e)=(3,1,2,6,6,6)$).
    By \cref{lem:ext-property}, it follows that $E$ violates the $v_2$-\ext-property and the $v_1$-\ext-property.
    \end{description}
  Summarizing, we obtain a contradiction for~$E$. \hfill~ 
\end{proof}

\subsection{The Instance with $4$ Voters and $5$ Alternatives}\label{subsec:n4-m5}

In this section, we show that a profile with $4$ voters and $5$ alternatives may not be \dManhattan[2]. We will achieve this by considering \dMax[2] embeddings since the arithmetic for \dMax[2] is simpler; recall that by \cref{prop:max_man_eq} a profile is \dManhattan[2] if and only if it is \dMax[2]. It is, however, possible to follow similar steps for \dManhattan[2] preferences and obtain an analogous proof.

\begin{example}
  \setcounter{vfourcounter}{\themyprofilecounter}
  The following profile~$\ppp_{\themyprofilecounter}$ with $5$~alternatives contains an \worstinconsistentconfig and will be shown to be not \dMax[2].
\begin{align*}
\ppp_{\themyprofilecounter} \colon  v_1\colon & 1 \succ 2 \succ 3 \succ 4 \succ 5, \\
  v_2 \colon & 1 \succ 2 \succ 3 \succ 5 \succ 4, \\
  v_3\colon & 1 \succ 4 \succ  5 \succ  3 \succ 2, \\
  v_4 \colon & 2 \succ 4 \succ 5 \succ 3 \succ 1.
\end{align*}\stepcounter{myprofilecounter}\label{ex:no-n4-m5}
\end{example}

The proof consists of two main steps:
We first prove that every profile with at least $5$~alternatives which contains an \worstinconsistentconfig is not \dMax[2],
which is obtained via \cref{lem:bb_no_last,lem:5_c_forces_bb}. Then we proceed to show that the example below with $4$ voters and $5$ alternatives is such a profile.

We first show the two lemmas. The first one shows the significance of bounding boxes for \dMax{} embeddings.

\begin{lemma}\label{lem:bb_no_last}
  Let $\ppp$ be a profile admitting a \dMax{} embedding~$E$.
  If $\pz \in \BB(\px, \py)$, then every voter~$v$ satisfies $z \succ_v x$ or $z \succ_v y$.
\end{lemma}

  \begin{proof}
Assume that we have three alternatives $x, y$ and $z$, and a \dMax{} embedding~$E$ such that $\pz \in \BB(\px, \py)$.
Let $v$ be an arbitrary voter.

Consider an arbitrary dimension~$i \in [d]$.
Since $\pz \in \BB(\px, \py)$, we have $\min\{\px[i], \py[i]\} \leq \pz[i] \leq \max\{\px[i], \py[i]\}$.
Let $\pr \coloneqq \argmin_{\{x, y\}} \{\px[i], \py[i]\}$ and $\ps \coloneqq \argmax_{\{x, y\}} \{\px[i], \py[i]\}$; if $\px[i] = \py[i]$, then we set $(\pr, \ps) \coloneqq (\px, \py)$.
Therefore, we have
\begin{equation}\pr[i] \leq \pz[i] \leq \ps[i].\label{eq:maxbetweennes}\end{equation}
We have the following two cases regarding the relative order of $\pv[i]$ and $\pz[i]$:
\begin{description}
  \item[Case 1:] $\pv[i] \leq \pz[i]$.
  Then, $\pv[i] \leq \ps[i]$ and $| \pz[i] - \pv[i]|  = \pz[i] - \pv[i] \stackrel{\eqref{eq:maxbetweennes}}{\leq} \ps[i] - \pv[i] = | \ps[i] - \pv[i]|  \leq \max\{| \px[i] - \pv[i]| , | \py[i] - \pv[i]| \}$.
  \item[Case 2:] $\pv[i] > \pz[i]$.
  Then, $\pv[i] > \pr[i]$ and  $| \pz[i] - \pv[i]|  = \pv[i] - \pz[i] \stackrel{\eqref{eq:maxbetweennes}}{\leq} \pv[i] - \pr[i] = | \pr[i] - \pv[i]|  \leq \max\{| \px[i] - \pv[i]| , | \py[i] - \pv[i]| \}$. 
\end{description}
In both cases, it holds that $| \pz[i] - \pv[i]|  \leq \max\{| \px[i] - \pv[i]| , | \py[i] - \pv[i]| \}$. As this holds for an arbitrary $i \in [d]$, it holds for every $i \in [d]$.
Therefore,
\begin{align*}
\Maxdis{\pv - \pz} = \max_{i \in [d]}| \pz[i] - \pv[i]|  & \leq \max_{i \in [d]} \biggl( \max\{| \px[i] - \pv[i]| , | \py[i] - \pv[i]| \} \biggr)  = \max\{ \Maxdis{\pv - \px}, \Maxdis{\pv - \py}\}.
\end{align*}

This implies that $\Maxdis{\pv - \pz} < \Maxdis{\pv - \px}$ or $\Maxdis{\pv - \pz} < \Maxdis{\pv - \py}$ and thus by the definition of \dMax{}, $z \succ_v x$ or $z \succ_v y$, as desired.
\end{proof}
\begin{remark}
  For $d = 2$, a result equivalent to \cref{lem:bb_no_last} has also been proven independently by Escoffier et al.~\cite{EST2021Euclidlp} for \dManhattan{} embeddings.
  The result is equivalent due to the natural isometry between \dMax[2] and \dManhattan[2] embeddings, see \cref{prop:max_man_eq}. 
\end{remark} 

The next lemma describes a geometrical property for point sets of cardinality at least five.
\begin{lemma}\label{lem:5_c_forces_bb}
For each point set~$\sss$ of $5$~points in $\mathds{R}^{2}$, there must exist three distinct points $\px, \py, \pz \in \sss$ such that $\pz \in \BB(\px, \py)$.
\end{lemma}
\begin{proof}
Assume, towards a contradiction, that $\sss \subset \mathds{R}^{2}$ is a point set with five points~$\pr,\ps,\pt,\pu,\pw$, but it contains no distinct points~$\px, \py, \pz \in \sss$ such that $\pz \in \BB(\px, \py)$.
By renaming we assume that \[\pr[1] \leq \ps[1] \leq \pt[1] \leq \pu[1] \leq \pw[1].\]
Without loss of generality, assume that \[\pr[2] \leq \pw[2];\]
note that if $ \pr[2] > \pw[2]$ we can mirror the embedding by $x$-axis.

For each point~$\pa \in \{\ps, \pt, \pu \}$, we have two options for their relative positions regarding $\pr$ and $\pw$ on axis~$2$: $\pa[2] < \pr[2]$ or $\pa[2] > \pw[2]$.
Note that we cannot have $\pr[2] \leq \pa[2] \leq \pw[2]$, because then we would have $\pr[i] \leq \pa[i] \leq \pw[i]$ for every axis $i \in [2]$, which would imply $\pa \in \BB(\pr, \pw)$,
a contradiction to our assumption.

As we have three remaining alternatives $\ps$, $\pt$, and $\pu$, but two options, at least two of them must satisfy the same option. Let $\{\pa, \pb\} \subset \{\ps, \pt, \pu \}$ such that $\pa[1] \leq \pb[1]$ and $\max\{\pa[2], \pb[2]\} < \pr[2]$ or $\min\{\pa[2], \pb[2]\} > \pw[2]$. We have four cases for the possible relative orders of $\pa, \pb, \pr$ and $\pw$ on axis~$2$, which are also illustrated in \cref{fig:5_c_forces_bb}:
\begin{description}
  \item[Case 1:] $\pa[2] \leq \pb[2] < \pr[2]$.
  Then, it follows that $\pa[i] \leq \pb[i] \leq \pw[i] $ for every axis~$i \in [2] $, which implies that $\pb \in \BB(\pa, \pw)$, a contradiction. 
  \item[Case 2:] $\pb[2] \leq \pa[2] < \pr[2]$.
  Then, it follows that $\pr[1] \leq \pa[1] \leq \pb[1]$ and $\pb[2] \leq \pa[2] < \pr[2] $, which implies that $\pa \in \BB(\pr, \pb)$ a contradiction.
  \item[Case 3:] $\pw[2] < \pa[2] \leq \pb[2]$.
  Then, $\pr[i] \leq \pa[i] \leq \pb[i] $ for every axis~$i \in [2] $, which implies that $\pa \in \BB(\pr, \pb)$, a contradiction.
  \item[Case 4:] $\pw[2] < \pb[2] \leq \pa[2]$.
  Then, $\pa[1] \leq \pb[1] \leq \pw[1]$ and $\pw[2] < \pb[2] \leq \pa[2]$, which implies that $\pb \in \BB(\pa, \pw)$, a contradiction.
\end{description} 

\tikzstyle{bbstyle} = [fill=red!20,draw=red, inner sep=-1pt]

\begin{figure}
\centering
\begin{subfigure}[t]{.24\linewidth}
\begin{tikzpicture}[black]
\basicfigfivepoints

	\foreach \x / \y / \n in {1/-2/a,2/-1/b} {
	\node[voter] at (\x , \y) (\n) {};
	\node[above left = -2 pt and -2 pt of \n] {$\n$};
}

\begin{pgfonlayer}{background}
\node[bbstyle, fit=(a)(w)] {};

\end{pgfonlayer}
	\draw[black] (3.5,-2) -- (3.5,1);
\end{tikzpicture}
\subcaption{Case 1.}
\end{subfigure}
\begin{subfigure}[t]{.24\linewidth}
	\begin{tikzpicture}[black]
		\basicfigfivepoints
		
		\foreach \x / \y / \n in {1/-1/a,2/-2/b} {
			\node[voter] at (\x , \y) (\n) {};
			\node[above left = -2 pt and -2 pt of \n] {$\n$};
		}

		\begin{pgfonlayer}{background}
			\node[bbstyle, fit=(r)(b)] {};
			
		\end{pgfonlayer}
	\draw[black] (3.5,-2) -- (3.5,1);
	\end{tikzpicture}
\subcaption{Case 2.}
\end{subfigure}
\begin{subfigure}[t]{.24\linewidth}
	\begin{tikzpicture}[black]
		\basicfigfivepoints
		
		\foreach \x / \y / \n in {1/2/a,2/3/b} {
			\node[voter] at (\x , \y) (\n) {};
			\node[above left = -2 pt and -2 pt of \n] {$\n$};
		}

		\begin{pgfonlayer}{background}
			\node[bbstyle, fit=(r)(b)] {};
			
		\end{pgfonlayer}
			\draw[black] (3.5,0) -- (3.5,3);
	\end{tikzpicture}
	\subcaption{Case 3.}
\end{subfigure}
\begin{subfigure}[t]{.24\linewidth}
	\begin{tikzpicture}[black]
		\basicfigfivepoints
		
		\foreach \x / \y / \n in {1/3/a,2/2/b} {
			\node[voter] at (\x , \y) (\n) {};
			\node[above left = -2 pt and -2 pt of \n] {$\n$};
		}

		\begin{pgfonlayer}{background}
			\node[bbstyle, fit=(a)(w)] {};
			
		\end{pgfonlayer}

	\end{tikzpicture}
	\subcaption{Case 4.}
\end{subfigure}
\caption{The four cases in the proof of \cref{lem:5_c_forces_bb}.}\label{fig:5_c_forces_bb}
\end{figure}

As all cases lead to a contradiction, our original assumption must have been false. This concludes the proof.
\end{proof}

Now, we are ready to show our second main result.
\begin{theorem}\label{thm:no-n4-m5}
  There exists a non-\dManhattan[2] profile with $4$ voters and $5$ alternatives.
\end{theorem}

\begin{proof}
Suppose, towards a contradiction, that we have a profile $\ppp$ with at least $5$ alternatives $\{a, b, c, d, e\}$ which contains an \worstinconsistentconfig and is \dMax[2] with a \dMax[2] embedding~$E$.

As we have $5$ alternatives, by \cref{lem:5_c_forces_bb} there must be a triple $\{ x, y, z\} \subset \{a, b, c, d, e\}$ such that $\pz \in \BB(\px, \py)$. This together with \cref{lem:bb_no_last} implies that no voter $v$ can satisfy $\{x, y\} \succ_v z$. However, this is a contradiction to our assumption that $\ppp$ contains an \worstinconsistentconfig. Therefore we cannot have a profile $\ppp$ with at least $5$ alternatives which contains an \worstinconsistentconfig and has a \dMax[2] embedding~$E$.

One can verify that profile~$\ppp_{\thevfourcounter}$ given in \cref{ex:no-n4-m5} with $5$ alternatives and~$4$ voters contains an \worstinconsistentconfig,
and is not \dMax[2]: The alternatives~$1, 2, 4$, and $5$ are ranked last by voters~$v_4, v_3, v_2$, and $v_1$, respectively. Therefore we can pick the corresponding voters for every triple involving only the alternatives $1, 2, 4$ and~$5$. It is straightforward to verify that there is a worst-diverse configuration for every triple of alternatives involving $3$ as well.
Thus we have shown that there is a profile with $4$ voters and $5$ alternatives that is not \dMax[2]. By \cref{prop:max_man_eq} it is also not \dManhattan[2].
\end{proof}

\subsection{The Instance with $5$ Voters and $4$ Alternatives}\label{subsec:n5-m4}
In this section, we focus on \cref{thm:no-n5-m4}.
The proof is based on the following example.
\begin{example}  \label{ex:no-n5-m4}
  \setcounter{vfivecounter}{\themyprofilecounter}
  Any profile~$\ppp_{\themyprofilecounter}$ satisfying the following
  is not \dManhattan[2].
  \begin{align*}
 \ppp_{\themyprofilecounter} \colon   v_1\colon & 1 \succ 2 \succ 3 \succ 4,\\
    v_2\colon & 1 \succ 4 \succ 3 \succ 2, \\
    v_3\colon & \{2, 4\} \succ 3 \succ 1,\\
    v_4\colon & 3 \succ 2 \succ 1 \succ 4,\\
    v_5\colon & 3 \succ 4 \succ 1 \succ 2.
  \end{align*}
  \stepcounter{myprofilecounter}
\end{example}
Before we proceed with the proof, %
we show a technical but useful lemma.

\begin{figure}\centering
	
\def\xx{1.5}
\def\xs{0.11}
\def\yy{0.83}
\def\ys{0.11}
\begin{tikzpicture}[black]
	\drawgridA
	\foreach \x / \y / \n / \nn / \typ / \p / \dx / \dy in {2/2/u/u/voter/{above left}/-2/-1, 3/3/v/v/voterV/above left/-1/-1, 4/4/w/w/voter/above left/-1/-1} {
		\node[\typ] at (\x\y) (\n) {};
		\node[\p = \dx pt and \dy pt of \n] {$\nn$};
	}
	\drawreg

	\begin{pgfonlayer}{background}

				        \filldraw[pattern={Lines[
		distance=2mm,
		angle=45,
		line width=0.5mm]},draw=none, pattern color=black!20] (\xx,\yy) -- (\xx,\ys) -- (\xs,\ys) -- (\xs,\yy) -- cycle;	
			\addN

	\end{pgfonlayer}
\end{tikzpicture}
\caption{Embedding for $u, v,$ and $w$ in the proof of \cref{clm:n5-m4}. The striped region is $\dSW(v)$.}\label{fig:n5-m4}
\end{figure}
  \begin{lemma}\label{clm:n5-m4}
    Let $\ppp$ be a  profile with $4$ voters~$u,v,w,r$ and $4$ alternatives~$a$, $b$, $c$, $d$ satisfying the following:
    \begin{align*}
      u\colon & \{a,b\} \succ c \succ d, \\
      v\colon & \{b,d\} \succ c \succ a, \\
      w\colon & \{a,d\} \succ c \succ b, \\
      r\colon & c \succ \{a,b\} \succ d.
    \end{align*}
  If $E$ is a \dManhattan[2] embedding for~$\ppp$ with $\pv \in \BB(\pu, \pw)$, then $\pv\in \BB(\pr, \pw)$.
 \end{lemma}

 \begin{proof}%
   Let $\ppp,u,v,w,r,a,b,c,d,E$ be as defined such that $\pv \in \BB(\pu, \pw)$.
   Without loss of generality assume that $\pu[1]\le \pv[1] \le \pw[1]$ and $\pu[2]\le \pv[2] \le \pw[2]$.
   We divide the two-dimensional space into 16 subspaces, enumerate these regions as in the top left configuration of~\cref{fig:config-v3}, also shown again in \cref{fig:n5-m4} and use $R_i$ to refer to region~$i$, $i\in [16]$. 
   To prove the statement, we will show that if $\pr\in \dNW(\pv)\cup \dSE(\pv)\cup \dNE(\pv)$, 
   then $E$ is not \dManhattan[2].

  Before we proceed, we establish where the individual alternatives can be embedded.
  First, by the preferences of~$u$ and $w$ regarding $c$ and $a$,
  and by \cref{lem:two-votes}\eqref{lem:not-inside}, we obtain that
  $\pc\notin \BB(\pu,\pw)$.
  Further, by the preferences of~$u$ and~$v$ regarding $c$ and $d$ and by \cref{lem:two-votes}\eqref{lem:not-outside-corner},
  we infer that since $\pv \in \dNE(\pu)$, it holds that
  $\pc\notin  \dNE(\pv)$.
  Analogously, due to the preference of $v$ and $w$ regarding $b$ and $c$, we have that since $\pw \in \dSW(\pv)$, it holds that
  $\pc\notin \dSW(\pv)$.
  Together, we infer that $\pc \in R_1\cup R_2\cup R_5\cup R_{12}\cup R_{15}\cup R_{16}$.
  By symmetry, assume that $\pc \in R_1\cup R_2 \cup R_5$, implying that $\pc \in \dNW(\pv)$.

  Similarly, we obtain that $\pa \in \dNW(\pv)\cup \dSE(\pv)$.
  By the preferences of $u,v,w$ regarding $c$ and $a$ and by \cref{lem:bet-property-Ntogether}\eqref{lem:Ntogether2},
  we infer that $\pa \in \dSE(\pv)$ since~$\pc\in \dNW(\pv)$.

  Now, we distinguish between three cases regarding the relative position of voter~$r$.
  \begin{description}
    \item[Case 1:] $\pr \in \dNW(\pv)$. Since $r\colon a \succ d$ and $v\colon d \succ a$, by \cref{lem:two-votes}\eqref{lem:not-outside-corner}, it follows that $\pv\notin \BB(\pr, \pa)$, a contradiction to $\pr \in \dNW(\pv)$ and $\pa \in \dSE(\pv)$.
    \item[Case 2:] $\pr \in \dSE(\pv)$. This case is analogous to the first case. We consider $c$ and $d$ instead. Since $r\colon c \succ d$ and $v\colon d \succ c$, by \cref{lem:two-votes}\eqref{lem:not-outside-corner}, it follows that $\pv\notin \BB(\pr,\pc)$, a contradiction to our assumption as well.
    \item[Case 3:] $\pr\in \dNE(\pv)$. 
    Let us consider alternative~$d$.
    By the preferences of $u$ and $r$, and by \cref{lem:two-votes}\eqref{lem:not-inside},
    we obtain that $\pd\notin \BB(\pu,\pr)$.
    By \cref{lem:two-votes}\eqref{lem:not-outside-corner} (considering the preferences of $u$ and $v$ regarding $c$ and $d$) we infer that since $\pu \in \dSW(\pv)$, it holds that $\pd \notin \dSW(\pu)$.
    Analogously by considering the preferences of $r$ and $v$ regarding $c$ and $d$ we infer that $\pd \notin \dNE(\pr)$.
    Moreover, by \cref{lem:bet-property-Ntogether}\eqref{lem:Ntogether1}
    (considering the preferences of $u,v,r$ regarding $c$ and $d$) and since~$\pc \in \dNW(\pv)$,
    we infer that $\pd\notin  \dNW(\pu)\cup \dNW(\pr)$.
    By \cref{lem:bet-property-Ntogether}\eqref{lem:Ntogether2}
    (considering the preferences of $u,v,r$ regarding $c$ and $d$) and since~$\pc \in \dNW(\pv)$,
   	we further infer that $\pd\notin  \dNW(\pv)$.
   	Hence $\pd \in \dSE(\pu) \cup \dSE(\pr)$.
    
    However, this is a contradiction: Since $v\colon d \succ a$ and $u,r\colon a \succ d$, and~$\pa \in \dSE(\pv)$, by
    \cref{lem:bet-property-Ntogether}\eqref{lem:Ntogether1},
    it follows that $\pd \notin \dSE(\pu)\cup \dSE(\pr)$.
  \end{description}
  Summarizing, this implies that $\pr\in \dSW(\pv)$, and hence $\pv \in \BB(\pr,\pw)$.
\end{proof}

\begin{theorem}\label{thm:no-n5-m4}
  There exists a non-\dManhattan[2] profile with 5 voters and $4$ alternatives.
\end{theorem}

\begin{proof}
  We show that profile~$\ppp_{\thevfivecounter}$ given in \cref{ex:no-n5-m4} is not \dManhattan[2].
  Suppose, towards a contradiction, that $\ppp_{\thevfivecounter}$ admits a \dManhattan[2] embedding~$E$. %
For the sake of brevity, we use $\pc_1, \dots, \pc_5$ to refer to $E(1), \dots, E(5)$.

  First, we observe that one of voters~$v_1$, $v_2$, and $v_3$ is embedded within the bounding box defined by the other two since
   the subprofile of $\ppp_{\thevfivecounter}$ restricted to voters~$v_1$, $v_2$, and $v_3$ is equivalent to profile~$\ppp_{\theextcounter}$ which, by \cref{lem:ext-property}, violates the \ext-property (for each of $v_1$, $v_2$, and $v_3$, respectively).
   We distinguish between two cases.

   \begin{description}
     \item[Case 1:] $\pv_2\in \BB(\pv_1,\pv_3)$ or $\pv_1\in \BB(\pv_2,\pv_3)$.
    Note that these two subcases are equivalent in the sense that if we exchange the roles of alternatives~$2$ and $4$, i.e., 
    $1\mapsto 1$, $3 \mapsto 3$, $2\mapsto 4$, and $4 \mapsto 2$,
    we obtain an equivalent (in terms of the Manhattan property) profile where the roles of voters~$v_1$ and $v_2$ (resp.\ $v_4$ and $v_5$) are exchanged.
    Hence, it suffices to consider the case of $\pv_2\in \BB(\pv_1,\pv_3)$.
    Without loss of generality, assume that $\pv_1[1]\le \pv_2[1]\le \pv_3[1]$ and $\pv_1[2]\le \pv_2[2]\le \pv_3[2]$; see \cref{fig:no-n5-m4-case1v1}.

   \begin{figure}
   \centering
     \captionsetup[subfigure]{justification=centering}
     \begin{subfigure}[t]{.45\linewidth}
       \centering
       \begin{tikzpicture}[black]
         \drawgridB
         \drawregNN

         \foreach \x / \y / \n / \nn / \typ / \p / \dx / \dy in {2/2/v1/{v_1}/voterV/above right/-1/-2, 3/3/v2/{v_2}/voterW/above right/-1/-2, 4/4/v3/{v_3}/voterW/above right/-1/-1} {
           \node[\typ, fill=black,draw=black] at (\x\y) (\n) {};
           \node[\p = \dx pt and \dy pt of \n] {$\nn$};
         }
       \end{tikzpicture}
       \caption{}\label{fig:no-n5-m4-case1v1}
     \end{subfigure}
     \begin{subfigure}[t]{.45\linewidth}
       \centering
       \begin{tikzpicture}[black]
         \drawgridB
         \drawregNN

         \foreach \x / \y / \n / \nn / \typ / \p / \dx / \dy in {2/2/v4/{v_4}/voterV/above right/-1/-2, 3/3/v2/{v_2}/voterW/above right/-1/-2, 4/4/v3/{v_3}/voterW/above right/-1/-1} {
           \node[\typ, fill=black,draw=black] at (\x\y) (\n) {};
           \node[\p = \dx pt and \dy pt of \n] {$\nn$};
         }
       \end{tikzpicture}
       \caption{}\label{fig:no-n5-m4-case1v4}
     \end{subfigure}
     \caption{Illustration of possible embeddings for \cref{thm:no-n5-m4} and for the case where $\pv_2\in \BB(\pv_1,\pv_3)$ (see the left figure).
       We will show that it implies that $\pv_4\in \dSW(\pv_2)$ (see the right figure).}
   \end{figure}

    Then, by \cref{clm:n5-m4} (setting $(u,v,w,r)\coloneqq (v_1,v_2,v_3,v_4)$),
    we obtain that $\pv_2\in \BB(\pv_4,\pv_3)$.
    This implies that $\pv_4[1]\le \pv_2[1]$ and $\pv_4[2]\le \pv_2[2]$; see \cref{fig:no-n5-m4-case1v1}. 

    By the preferences of $v_4$, $v_2$, and $v_3$ regarding alternatives~$2$ and $1$, and by \cref{lem:two-votes}\eqref{lem:Ntogether2},
    it follows that $\pv \notin \BB(\pv_3, 2) \cup \BB(\pv_4,2)$ and hence $\pc_2\in \dNW(\pv_2)\cup \dSE(\pv_2)$. 
    With the same voters and alternatives and \cref{lem:two-votes}\eqref{lem:Ntogether1} we obtain that 
by the preferences of $v_4$, $v_2$, and $v_3$ regarding alternatives~$2$ and $1$, we obtain that $ \pc_1\notin \BB(\pv_3,\pv_4)$.
Combining this and applying \cref{lem:two-votes}\eqref{lem:Ntogether2} again, we obtain that $\pv_3, \pv_4 \notin \BB(\pv_2, \pc_1)$ and hence
    \begin{align}
      \pc_1\notin \BB(\pv_3,\pv_4)\cup \dNE(\pv_3)\cup \dSW(\pv_4).\label{eq:n5-m4:case1-c1}
    \end{align}
    Similarly, regarding the preferences over~$3$ and $1$, it follows that  $\pv_2 \notin \BB(\pv_3, 3) \cup \BB(\pv_4,3)$ and hence $\pc_3\in \dNW(\pv_2)\cup \dSE(\pv_2)$.
    By \cref{lem:bet-property-Ntogether}\eqref{lem:Ntogether2} (considering the preferences of $v_1,v_2$ and $v_3$ regarding alternatives~$2$ and $3$),
    we further infer that either $\pc_2\in \dNW(\pv_2)$ and $\pc_3\in \dSE(\pv_2)$ or  $\pc_2\in \dSE(\pv_2)$ and $\pc_3\in \dNW(\pv_2)$. 
    By symmetry, we only consider the case of $\pc_2\in \dNW(\pv_2)$ and $\pc_3\in \dSE(\pv_2)$. 
    
    On the one hand, by the preferences of $v_3$ and $v_2$ (resp.\ $v_4$ and $v_2$) regarding $1$ and $3$  and by \cref{lem:bet-property-Ntogether}\eqref{lem:Ntogether1} and the fact that $\pc_3\in \dSE(\pv_2)$,
    it follows that $\pc_1\notin  \dSE(\pv_3)$ (resp.\ $\pc_1\notin \dSE(\pv_4)$). 
    On the other hand, by the preferences of~$v_4$ and~$v_2$ (resp.\ $v_3$ and~$v_2$) regarding $1$ and $2$ and by \cref{lem:bet-property-Ntogether}\eqref{lem:Ntogether1} and  $\pc_2\in \dNW(\pv_2)$,
    it follows that $\pc_1\notin  \dNW(\pv_3)$ (resp.\ $\pc_1\notin \dNW(\pv_4)$). 
    Together, this leads to a contradiction to \eqref{eq:n5-m4:case1-c1}.

      \begin{figure}
        \begin{subfigure}[t]{.33\linewidth}
            \centering
            \begin{tikzpicture}[black]
              \drawgridU

	    \foreach \x / \y / \n / \nn / \typ / \p / \dx / \dy in {2/3/v1/{v_1}/voterV/below left/-1/-1,  4/4/v3/{v_3}/voterW/above right/-1/-2, 5/6/v2/{v_2}/voterW/above left/-4/-2} {
	      \node[\typ, fill=black,draw=black] at (\x\y) (\n) {};
	      \node[\p = \dx pt and \dy pt of \n] {$\nn$};
	    }
            
             \begin{pgfonlayer}{background}
               \foreach \s / \t in {13/63, 14/64, 16/66, 41/47, 21/27, 51/57} {
                 \path[draw,lines] (\s) edge (\t);
               }
             \end{pgfonlayer}

	  \end{tikzpicture}
          \caption{}\label{fig:no-n5-m4-case2}
        \end{subfigure}~
        \begin{subfigure}[t]{.33\linewidth}
          \centering
	   \begin{tikzpicture}[black]
	    \drawgridU
	    \drawregU

	    \foreach \x / \y / \n / \nn / \typ / \p / \dx / \dy in {2/3/v1/{v_1}/voterV/below left/-1/-1, 3/2/v4/{v_4}/voterV/below right/-1/-1, 4/4/v3/{v_3}/voterW/above right/-1/-2, 6/5/v5/{v_5}/voterW/below left/-2/-3,5/6/v2/{v_2}/voterW/above left/-4/-2} {
	      \node[\typ, fill=black,draw=black] at (\x\y) (\n) {};
	      \node[\p = \dx pt and \dy pt of \n] {$\nn$};
	    }

	  \end{tikzpicture}
          \caption{}\label{fig:no-n5-m4-case2-v-refined}
        \end{subfigure}~
        \begin{subfigure}[t]{.33\linewidth}
          \centering

	   \begin{tikzpicture}[black]
	    \drawgridU
	    \drawregU

	    \foreach \x / \y / \n / \nn / \typ / \p / \dx / \dy in {2/3/v1/{v_1}/voterV/below left/-1/-1, 3/2/v4/{v_4}/voterV/below right/-1/-1, 4/4/v3/{v_3}/voterW/above right/-1/-2, 6/5/v5/{v_5}/voterW/below right/-2/-3,5/6/v2/{v_2}/voterW/above left/-4/-2} {
	      \node[\typ, fill=black,draw=black] at (\x\y) (\n) {};
	      \node[\p = \dx pt and \dy pt of \n] {$\nn$};
	    }

 \gettikzxy{(v3)}{\xs}{\ys}
	    
	    \gettikzxy{(v1)}{\vx}{\vy}

\gettikzxy{(17)}{\onex}{\oney}
\gettikzxy{(71)}{\threex}{\threey}

	    \gettikzxy{(v4)}{\wx}{\wy}
    
	    \node[] at ($(\vx*0.5+\wx*0.5,\vy*0.6+\wy*0.4)$) (c2) {};
	    \node[right = -5pt of c2] {$\pc_2$};
    
	    \gettikzxy{(v2)}{\vvx}{\vvy}
	    \gettikzxy{(v5)}{\wwx}{\wwy}
    
	    \node[] at ($(\vvx*0.8+\wwx*0.2,\vvy*0.4+\wwy*0.6)$) (c4) {};
	    \node[right = -5pt of c4] {$\pc_4$};
    
	    \node[] at ($(\vvx*0.5+\wwx*0.5,\vy*0.3+\wy*0.7)$) (c3) {};
	    \node[right = -1pt of c3] {$\pc_3$};

	    \node[] at ($(\vx*0.8,\vvy*1.1)$) (c1) {};
	    \node[right = 0pt of c1] {$\pc_1$};

	\begin{pgfonlayer}{background}

          \filldraw[pattern={Lines[
		distance=2mm,
		angle=45,
		line width=0.5mm]},draw=none, pattern color=black!20] (\xs, \ys) -- (\onex, \ys) -- (\onex, \oney) -- (\xs, \oney) -- cycle;

   \filldraw[pattern={Lines[
		distance=2mm,
		angle=45,
		line width=0.5mm]},draw=none, pattern color=black!20] (\xs, \ys) -- (\threex, \ys) -- (\threex, \threey) -- (\xs, \threey) -- cycle;

 \filldraw[pattern={Lines[
		distance=2mm,
		angle=45,
		line width=0.5mm]},draw=none, pattern color=black!20] (\vx,\vy) -- (\vx,\wy) -- (\wx,\wy) -- (\wx,\vy) -- cycle;

 \filldraw[pattern={Lines[
		distance=2mm,
		angle=45,
		line width=0.5mm]},draw=none, pattern color=black!20] (\vvx,\vvy) -- (\vvx,\wwy) -- (\wwx,\wwy) -- (\wwx,\vvy) -- cycle;

	\end{pgfonlayer}
	  \end{tikzpicture}
          \caption{}\label{fig:no-n5-m4-case2-refined}
          \end{subfigure}
	  \caption{Illustration of possible embeddings for \cref{thm:no-n5-m4} and for the case
            where $\pv_3\in \BB(\pv_1,\pv_2)$ (see \cref{fig:no-n5-m4-case2}).
            This case implies that $\pv_1,\pv_4, \in \dSW(\pv_3)$ and
            and $\pv_2, \pv_5 \in \dNE(\pv_3)$ (see \cref{fig:no-n5-m4-case2-v-refined})
            such that $\pc_2\in \dSE(\pv_1)\cap \dNW(\pv_4)$
            and $\pc_4\in \dSE(\pv_2) \cap \dNW(\pv_5)$ (see \cref{fig:no-n5-m4-case2-refined}).}
	  \label{fig:no-n5-m4}
	\end{figure}
        
    \item[Case 2:]  $\pv_3\in \BB(\pv_1,\pv_2)$.
    Without loss of generality, assume that $\pv_1[1]\le \pv_3[1]\le \pv_2[1]$ and $\pv_1[2]\le \pv_3[2]\le \pv_2[2]$; see \cref{fig:no-n5-m4-case2} for an illustration.   
    Then, by \cref{clm:n5-m4} (setting $(u,v,w,r)\coloneqq(v_1,v_3,v_2,v_4)$ and $(u,v,w,r)\coloneqq (v_2,v_3,v_1,v_5)$, respectively),
    we obtain that $\pv_3\in \BB(\pv_4,\pv_2)$ and $\pv_3\in \BB(\pv_5,\pv_1)$.
    This implies that
    \begin{align}
      \pv_4[1]\le \pv_3[1] \text{ and }\pv_4[2]\le \pv_3[2], \text{ and }%
      \pv_5[1]\ge \pv_3[1] \text{ and }\pv_5[2]\ge \pv_3[2].\label{eq:n5-m4-v5-v3}
    \end{align}
    See \cref{fig:no-n5-m4-case2-v-refined} for an illustration.
    
    In the remainder of the proof,
    we will show that we can assume the following relative orientation from each of the four alternatives towards voter~$v_3$; see \cref{fig:no-n5-m4-case2:crossingtwiche-a}:
    \begin{align*}
      \pc_1 \in \dNW(\pv_3),
      \pc_2 \in \dSW(\pv_3),
      \pc_3 \in \dSE(\pv_3), \text{ and }
      \pc_4 \in \dNE(\pv_3).
    \end{align*}
    Moreover, among all four alternatives, alternative~$1$ shall be embedded to the westmost and northmost, 
    while alternative~$3$ to the southmost and $4$ the eastmost. 
    Such conditions are, however, not possible to yield a \dManhattan[2] embedding for voters~$v_2$ and $v_4$ respect to the two pairs~$\{1,2\}$ and $\{3,4\}$.

    Before we formally prove this, we give an intuitive idea.    
    Since voters~$v_1$ and $v_5$ are embedded to the southwest and northeast of $v_3$, respectively, and since both prefer $1\succ 2$ and $3 \succ 4$, but $v_3$ prefers $2\succ 1$ and $4 \succ 3$,
    the bisector between alternatives~$1$ and~$2$ and that between alternatives~$3$ and~$4$ \emph{must} ``cross'' twice; see~\cref{fig:no-n5-m4-case2:crossingtwiche-a}.
    This enforces the relative positions of the four alternatives as described above.
    Analogously, due voters~$v_2$ and~$v_4$, and $v_3$'s preferences over~$\{1,4\}$ and $\{3,2\}$,
    the bisector between alternatives~$1$ and~$4$ and the one between alternatives~$3$ and $2$ must also cross twice.
    This is, however, impossible; see \cref{fig:no-n5-m4-case2:crossingtwiche-b}.
   
   \begin{figure}
     \centering
     \begin{subfigure}[c]{0.48\textwidth}\centering
       \begin{tikzpicture}[scale=.4, black]
         \tkzInit[xmax=10,ymax=8,xmin=0,ymin=-4]
         \begin{pgfonlayer}{background}  
           \tkzGrid[color=gray!20]
         \end{pgfonlayer}

         \node[alter] at (2,6) (c1) {};
         \def \ymax {8}
         
         \def \ofourx {4}
         \def \ofoury {-1}
         
         \def \otwox {1}
         \def \otwoy {-9}
         
         \def \othreex {2}
         \def \othreey {-5}
         
         \node[alter] at ($(c1)+(\otwox,\otwoy)$) (c2) {};
         
         \node[alter,red!40!black] at ($(c1)+(\ofourx,\ofoury)$) (c4) {};
         \node[alter,red!40!black] at ($(c4)+(\othreex,\othreey)$) (c3) {};
         
         \node[left = 0pt of c1] {$\pc_1$};
         \node[above  = 0pt and 0pt of c4] {$\pc_4$};
         \node[right = 0pt of c2] {$\pc_2$};
         \node[right = 0pt of c3] {$\pc_3$};

         \draw[BBstyle] (c1) -- ($(c1)+(\otwox,0)$) -- (c2) -- ($(c1)+(0,\otwoy)$) -- (c1);
         \draw[bisectorstyle, name path=BS12] ($(c1)-(2,\otwox*0.5-\otwoy*0.5)$) --  ($(c1)-(0,\otwox*0.5-\otwoy*0.5)$) -- ($(c2)+(0,\otwox*0.5-\otwoy*0.5)$) -- ($(c2)+(7,\otwox*0.5-\otwoy*0.5)$);

         \draw[BBstyle,red!40!black] (c4) -- ($(c4)+(\othreex,0)$) -- (c3) -- ($(c4)+(0,\othreey)$) -- (c4);
         \draw[bisectorstyle,red!40!black, name path=BS34] ($(c4)-(6,\othreex*0.5-\othreey*0.5)$) -- ($(c4)-(0,\othreex*0.5-\othreey*0.5)$) -- ($(c3)+(0,\othreex*0.5-\othreey*0.5)$) -- ($(c3)+(2,\othreex*0.5-\othreey*0.5)$);

         \path[name path=UP] ($(c1)+(0,2)$) -- ($(c4)+(0,2)$);
         
         \path[name intersections={of = BS12 and BS34, name=I}];

         \begin{pgfonlayer}{background}
           \fill[gray!50, opacity=0.7]  ($(c2)+(0,\otwox*0.5-\otwoy*0.5)$)-- ($(c2)+(0,\otwox*0.5-\otwoy*0.5-0.5)$) --
           ($(c4)-(0,\othreex*0.5-\othreey*0.5)$) -- ($(c4)-(0,\othreex*0.5-\othreey*0.5-0.5)$) -- cycle;

           \filldraw[pattern={Lines[
             distance=1mm,
             angle=135,
             line width=.4mm]}, pattern color=green!50, opacity=0.7]
           ($(c1)-(2,\otwox*0.5-\otwoy*0.5)$) -- ($(c1)-(0,\otwox*0.5-\otwoy*0.5)$) -- (I-1) -- ($(c4)-(6,\othreex*0.5-\othreey*0.5)$) -- cycle;
            \filldraw[pattern={Lines[
             distance=1mm,
             angle=45,
             line width=.4mm]}, pattern color=red!50, opacity=0.7] ($(c1)-(2,\otwox*0.5-\otwoy*0.5)$) -- ($(c1)-(0,\otwox*0.5-\otwoy*0.5)$) -- (I-1) -- ($(c4)-(6,\othreex*0.5-\othreey*0.5)$) -- cycle;

           \filldraw[pattern={Lines[
             distance=1mm,
             angle=135,
             line width=.4mm]}, pattern color=green!50, opacity=0.7]
           ($(c3)+(0,\othreex*0.5-\othreey*0.5)$) -- ($(c3)+(2,\othreex*0.5-\othreey*0.5)$) -- ($(c2)+(7,\otwox*0.5-\otwoy*0.5)$) -- (I-2) -- cycle;
           
           \filldraw[pattern={Lines[
             distance=1mm,
             angle=45,
             line width=.4mm]}, pattern color=red!50, opacity=0.7] 
            ($(c3)+(0,\othreex*0.5-\othreey*0.5)$) -- ($(c3)+(2,\othreex*0.5-\othreey*0.5)$) -- ($(c2)+(7,\otwox*0.5-\otwoy*0.5)$) -- (I-2) -- cycle;
          \end{pgfonlayer}

         \node[voter] at (4.5,1.75) (v) {};
         
         \node[above left = 1pt and 0pt of v,fill=white,inner sep=-1pt,circle] {$\pv_3$};

         \node[voter] at (1.5,1.25) (v1) {};
         
         \node[above = 1pt of v1,fill=white,inner sep=-1pt,circle] {$\pv_1$};
    
         \node[voter] at (8.5,2.5) (v5) {};
         
         \node[above right = 0pt and 0pt of v5,fill=white,inner sep=-1pt,circle] {$\pv_5$};

       \end{tikzpicture}
       \caption{}\label{fig:no-n5-m4-case2:crossingtwiche-a}
     \end{subfigure}
     \begin{subfigure}[c]{0.48\textwidth}\centering
       \begin{tikzpicture}[scale=.4, black]
         \tkzInit[xmax=10,ymax=8,xmin=0,ymin=-4]
         \begin{pgfonlayer}{background}  
           \tkzGrid[color=gray!20]
         \end{pgfonlayer}

         \node[alter] at (2,6) (c1) {};
         \def \ymax {8}
         
         \def \ofourx {4}
         \def \ofoury {-1}
         
         \def \otwox {1}
         \def \otwoy {-9}
         
         \def \othreex {2}
         \def \othreey {-5}
         
         \def \oothreex {5}
         \def \oothreey {3}
         
         \node[alter] at ($(c1)+(\otwox,\otwoy)$) (c2) {};
         
         \node[alter,red!40!black] at ($(c1)+(\ofourx,\ofoury)$) (c4) {};
         \node[alter,red!40!black] at ($(c4)+(\othreex,\othreey)$) (c3) {};
         
         \node[left = 0pt of c1] {$\pc_1$};
         \node[right  = 0pt and 0pt of c4] {$\pc_4$};
         \node[left = 0pt of c2] {$\pc_2$};
         \node[right = 0pt of c3] {$\pc_3$};

         \draw[BBstyle,brown] (c1) -- ($(c1)+(\ofourx,0)$) -- (c4) -- ($(c1)+(0,\ofoury)$) -- (c1);
         \draw[bisectorstyle, name path=BS14,brown] ($(c1)+(\ofourx*0.5-\ofoury*0.5,2)$) --  ($(c1)+(\ofourx*0.5-\ofoury*0.5,0)$) -- ($(c4)-(\ofourx*0.5-\ofoury*0.5,0)$) -- ($(c4)-(\ofourx*0.5-\ofoury*0.5,9)$);

         \draw[BBstyle,darkblue] (c2) -- ($(c2)+(\oothreex,0)$) -- (c3) -- ($(c2)+(0,\oothreey)$) -- (c2);
         \draw[bisectorstyle, name path=BS14,darkblue]  ($(c2)+(\oothreex*0.5+\oothreey*0.5,-1)$) --  ($(c2)+(\oothreex*0.5+\oothreey*0.5,0)$) --  ($(c3)-(\oothreex*0.5+\oothreey*0.5,0)$) --  ($(c3)-(\oothreex*0.5+\oothreey*0.5,-8)$);

         \node[voter] at (3.75,1.75) (v) {};
         
         \node[right = 0pt of v] {$\pv_3$};

       \end{tikzpicture}
       \caption{}\label{fig:no-n5-m4-case2:crossingtwiche-b}
     \end{subfigure}
     \caption{Further illustration for the proof of  \cref{thm:no-n5-m4} where $\pv_3\in \BB(\pv_1,\pv_2)$ (also see \cref{fig:no-n5-m4}).
       Left: The bisector (in green) between alternatives~$1$ and~$2$ and the one (in red) between alternatives~$4$ and~$3$
       \emph{must} ``cross'' twice so we can embed voters~$v_3$, $v_1$, and $v_5$.
       Concretely, $v_3$ will be embedded in the middle gray area, and $v_1$ and $v_5$ in the lower and upper area with hatched pattern, respectively.
       Right: The bisector (in brown) between alternatives~$1$ and~$4$ and the one (in blue) between alternatives~$3$ and~$2$ now cannot cross twice anymore.
       Consequently, it is not \dManhattan[2] for voters~$v_2$ and $v_4$ since they prefer $1\succ 4$ and $3\succ 2$. 
     }
   \end{figure}
   Now, we proceed with the proof of the relative positions of the alternatives.
   \begin{clm}\label{clm:no-n5-m4-case2-relative-pos}
    We can assume that  $\pc_1 \in \dNW(\pv_3)$,
      $\pc_2 \in \dSW(\pv_3)\cap \dSE(\pv_1)\cap \dNW(\pv_4)$,
      $\pc_3 \in \dSE(\pv_3)$, and 
      $\pc_4 \in \dNE(\pv_3)\cap \dSE(\pv_2) \cap \dNW(\pv_5)$.
    \end{clm}
   \begin{proof}[Proof of
    \cref{clm:no-n5-m4-case2-relative-pos}]
    \renewcommand{\qed}{~\hfill~$\diamond$}
    By \cref{lem:two-votes}\eqref{lem:not-inside} (setting $(r,s,x,y)=(v_2,v_3,1,4)$ and $(r,s,x,y)=(v_4,v_3,1,4)$, respectively), we infer that $\pv_3\notin \BB(\pv_2,\pc_1)\cup \BB(\pv_4,\pc_1)$.
    This implies that $\pc_1\in \dNW(\pv_3) \cup \dSE(\pv_3)$.
    By symmetry, we can assume that
    \begin{align}
      \pc_1\in \dNW(\pv_3).\label{eq:n5-m4-c1}
    \end{align} 
    Again, by \cref{lem:two-votes}\eqref{lem:not-inside} (setting $(r,s,x,y)=(v_2,v_3,3,2)$ and $(r,s,x,y)=(v_4,v_3,3,2)$, respectively), we infer that $\pv_3\notin \BB(\pv_2,\pc_3)\cup \BB(\pv_4,\pc_3)$.
    This implies that $\pc_1\in \dNW(\pv_3) \cup \dSE(\pv_3)$.
    Then, by \cref{lem:bet-property-Ntogether}\eqref{lem:Ntogether2} (setting $(r,s,t,x$, $y) = (v_1,v_3,v_2,1,3)$) and by \eqref{eq:n5-m4-c1}, we infer that
    \begin{align}
      \pc_3\in\dSE(\pv_3).\label{eq:n5-m4-c3}
    \end{align}
    It remains to show the relative positions for alternatives~$2$ and $4$.
    By \cref{lem:bet-property-Ntogether}\eqref{lem:Ntogether2} (setting $(r,s,t,x,y)=(v_1,v_3,v_5,1,2)$ and $(r,s,t,x,y)=(v_2,v_3,v_4,3,2)$, respectively),
    we infer that $\pc_2 \notin \dNW(\pv_3)\cup \dSE(\pv_3)$ since $\pc_1\in \dNW(\pv_3)$ and $\pc_3 \in \dSE(\pv_3)$.
    In other words, $\pc_2 \in \dNE(\pv_3)\cup \dSW(\pv_3)$.

    By \cref{lem:two-votes}\eqref{lem:not-inside} (setting $(r,s,x,y)=(v_4,v_5,2,3)$,
    $(r,s,x,y)=(v_1,v_2,2,1)$, $(r,s,x,y) = (v_1,v_5, 2,1)$, $(r,s,x,y) = (v_2,v_5, 2,1)$, respectively),
    we infer that $\pc_2 \notin \BB(\pv_4,\pv_5)\cup \BB(\pv_1,\pv_2)\cup \BB(\pv_1,\pv_5) \cup \BB(\pv_2,\pv_5)$.    
    By \cref{lem:two-votes}\eqref{lem:not-outside-corner} (setting $(r,s,x,y)=(v_3,v_2,2,3)$ and $(r,s,x,y)=(v_3,v_4,2,3)$),
    we infer that $\pv_2,\pv_4\notin \BB(\pv_3,\pc_2)$,
    i.e., $\pc_2 \notin \dNE(\pv_2)\cup \dSW(\pv_4)$.
    Again, by \cref{lem:two-votes}\eqref{lem:not-outside-corner} (setting $(r,s,x,y)=(v_3,v_1,2,1)$ and $(r,s,x,y)=(v_3,v_5,2,1)$),
    we infer that $\pv_1,\pv_5 \notin \BB(\pv_3,\pc_2)$,
    i.e., $\pc_2 \notin \dSW(\pv_1)\cup \dNE(\pv_5)$.
    Analogously, since $v_5\colon 2\succ 3$ and $v_3\colon 2 \succ 3$,
    By \cref{lem:bet-property-Ntogether}\eqref{lem:Ntogether1} (setting $(r,s,x,y)=(v_5,v_3,3,2)$),
    we infer that $\pc_2 \notin \dSE(\pv_5)$ since $\pc_3 \in \dSE(\pv_3)$.
Analogously, since $v_1,v_2,v_5\colon 1\succ 2$, and $v_3\colon 2 \succ 1$, by \cref{lem:bet-property-Ntogether}\eqref{lem:Ntogether1},
    $\pc_2 \notin \dNW(\pv_1)\cup  \dNW(\pv_2)\cup  \dNW(\pv_5)$ since $\pc_1 \in \dNW(\pv_3)$.
    Summarizing, the only region possible for $\pc_2$ is $\dSE(\pv_1)\cap \dNW(\pv_4)$.
    This gives
    \begin{align}\label{eq:n5-m4-c2}
      \pc_2 \in\dSE(\pv_1)\cap \dNW(\pv_4)\cap \dSW(\pv_3). 
    \end{align}
    By exchanging the roles of~$2$ and $4$,
    those of $v_1$ and $v_2$, and those of $v_4$ and $v_5$, we can analogously obtain
    \begin{align}\label{eq:n5-m4-c4}
      \pc_4 \in\dSE(\pv_2)\cap \dNW(\pv_5)\cap \dNE(\pv_3). 
    \end{align}
    See \cref{fig:no-n5-m4-case2-refined} for an illustration.
   \end{proof}
   
 \begin{figure}
     \centering

     \begin{subfigure}[c]{0.23\textwidth}\centering
       \begin{tikzpicture}[scale=.4, black]
         \tkzInit[xmax=7,ymax=5,xmin=0,ymin=-1]
         \begin{pgfonlayer}{background}  
           \tkzGrid[color=gray!20]
         \end{pgfonlayer}

         \node[alter] at (2, 3) (x) {};
         \def \ofx {2}
         \def \ofy {-3}
         
         \node[alter] at ($(x)+(\ofx,\ofy)$) (y) {};
         
         \node[left = 0pt of x] {$\px$};
         \node[right = 0pt of y] {$\py$};

         \draw[BBstyle] (x) -- ($(x)+(\ofx,0)$) -- (y) -- ($(x)+(0,\ofy)$) -- (x);
         
         \draw[bisectorstyle, name path=BSxy, darkgreen] ($(x)-(2,\ofx*0.5-\ofy*0.5)$) -- ($(x)-(0,\ofx*0.5-\ofy*0.5)$)
         -- ($(y)+(0,\ofx*0.5-\ofy*0.5)$) -- ($(y)+(3,\ofx*0.5-\ofy*0.5)$) ;

         \node[voter] at (5.5, 3) (w) {};
         \node[above=0pt of w] {$\pw$};

         \node[voter] at (4.5, 2) (v) {};
         \node[right=0pt of v] {$\pv$};

         \node[voter] at (2.5, 1.5) (u) {};
         \node[above=0pt of u] {$\pu$};

       \end{tikzpicture}
       \caption{}
     \end{subfigure}
     \begin{subfigure}[c]{0.23\textwidth}\centering
       \begin{tikzpicture}[scale=.4, black]
         \tkzInit[xmax=8,ymax=5,xmin=0,ymin=-1]
         \begin{pgfonlayer}{background}  
           \tkzGrid[color=gray!20]
         \end{pgfonlayer}

         \node[alter] at (2, 3) (x) {};
         \def \ofx {4}
         \def \ofy {-2}
         
         \node[alter] at ($(x)+(\ofx,\ofy)$) (y) {};
         
         \node[left = 0pt of x] {$\px$};
         \node[right = 0pt of y] {$\py$};

         \draw[BBstyle] (x) -- ($(x)+(\ofx,0)$) -- (y) -- ($(x)+(0,\ofy)$) -- (x);
         
         \draw[bisectorstyle, name path=BSxy, darkgreen] ($(x)+(\ofx*0.5-\ofy*0.5,2)$) -- ($(x)+(\ofx*0.5-\ofy*0.5,0)$)
         -- ($(y)-(\ofx*0.5-\ofy*0.5,0)$) -- ($(y)-(\ofx*0.5-\ofy*0.5,2)$) ;

         \node[voter] at (6.5, 2.5) (v) {};
         \node[right=0pt of v] {$\pv$};

         \node[voter] at (3, 1.5) (u) {};
         \node[left=-2pt of u] {$\pu$};
       \end{tikzpicture}
       \caption{}
     \end{subfigure}
    \begin{subfigure}[c]{0.24\textwidth}\centering
       \begin{tikzpicture}[scale=.4, black]
         \tkzInit[xmax=6,ymax=5,xmin=0,ymin=-1]
         \begin{pgfonlayer}{background}  
           \tkzGrid[color=gray!20]
         \end{pgfonlayer}

         \node[alter] at (2, 0) (x) {};
         \def \ofx {2}
         \def \ofy {3}
         
         \node[alter] at ($(x)+(\ofx,\ofy)$) (y) {};
         
         \node[left = 0pt of x] {$\py$};
         \node[above = 0pt of y] {$\px$};

         \draw[BBstyle] (x) -- ($(x)+(\ofx,0)$) -- (y) -- ($(x)+(0,\ofy)$) -- (x);
         
         \draw[bisectorstyle, name path=BSxy, darkgreen] ($(x)+(-2,\ofx*0.5+\ofy*0.5)$) -- ($(x)+(0,\ofx*0.5+\ofy*0.5)$)
         -- ($(y)-(0,\ofx*0.5+\ofy*0.5)$) -- ($(y)-(-2,\ofx*0.5+\ofy*0.5)$) ;

       \end{tikzpicture}
       \caption{}
     \end{subfigure}
     \begin{subfigure}[c]{0.23\textwidth}\centering
       \begin{tikzpicture}[scale=.4, black]
         \tkzInit[xmax=8,ymax=5,xmin=0,ymin=-1]
         \begin{pgfonlayer}{background}  
           \tkzGrid[color=gray!20]
         \end{pgfonlayer}

         \node[alter] at (2, 1) (x) {};
         \def \ofx {4}
         \def \ofy {2}
         
         \node[alter] at ($(x)+(\ofx,\ofy)$) (y) {};
         
         \node[left = 0pt of x] {$\py$};
         \node[above = 0pt of y] {$\px$};

         \draw[BBstyle] (x) -- ($(x)+(\ofx,0)$) -- (y) -- ($(x)+(0,\ofy)$) -- (x);
         
         \draw[bisectorstyle, name path=BSxy, darkgreen] ($(x)+(\ofx*0.5+\ofy*0.5,-2)$) -- ($(x)+(\ofx*0.5+\ofy*0.5,0)$)
         -- ($(y)-(\ofx*0.5+\ofy*0.5,0)$) -- ($(y)-(\ofx*0.5+\ofy*0.5,-2)$) ;

       \end{tikzpicture}
       \caption{}
     \end{subfigure}

     \caption{Illustration for \cref{clm:no-n5-m4-case2:two-cross}, assuming that $u,v,w,x,y$ satisfy the premises in the first statement. 
       (a): A possible \dManhattan[2] embedding;
       (b): It is not \dManhattan[2] for voter~$w$ since ``$\py[1]-\px[1] < \px[2]-\py[2]$'' does not hold.
       (c): It is not \dManhattan[2] for voter~$v$ since  ``$\px[1] < \py[1]$'' does not hold.
       (d): It is not  \dManhattan[2] for voter~$v$ since neither ``$\px[1] < \py[1]$'' nor ``$\py[1]-\px[1] < \px[2]-\py[2]$'' holds.
     }\label{fig:clm:relative-pos}
   \end{figure} 

   To formally prove that the relative positions as described in \cref{clm:no-n5-m4-case2-relative-pos} are not \dManhattan[2], we will use the following claim.
   Briefly put, it states that given the premises, the alternative that is less preferred by $v$ should be embedded further away in which both alternatives lie on the same side of $v$.
   Moreover, the shorter side of the bounding box formed by the two alternatives must be along the coordinate where both alternatives lie on the same side of $v$; see \cref{fig:clm:relative-pos} for an illustration of the first case.
   \begin{clm}\label{clm:no-n5-m4-case2:two-cross}
     Let $\ppp$ admit a \dManhattan[2] embedding~$E$.
     For every three voters~$u,v,w$ and two alternatives~$x,y$ such that $u,w\colon y\succ x$, $v\colon y\succ x$,
     $\pw\in \dNE(\pv)$, and $\pu\in \dSW(\pv)$, the following holds:
    \begin{compactenum}[(i)]
      \item\label{case:W} If $\px\in \dNW(\pv)$ and $\py\in \dSW(\pv)$, then $\px[1] < \py[1]$  and $\py[1]-\px[1] < \px[2]-\py[2]$.
      \item\label{case:E} If $\px\in \dSE(\pv)$ and $\py\in \dNE(\pv)$, then $\px[1] > \py[1]$ and $\px[1]-\py[1] < \py[2]-\px[2]$.
     \end{compactenum}
   \end{clm}

   \begin{proof}[Proof of
    \cref{clm:no-n5-m4-case2:two-cross}]
    \renewcommand{\qed}{~\hfill~$\diamond$}
    Let $\ppp,E,u,v,w,x,y$ be as defined.
    We only consider the first statement in details as the other one can be shown by transforming the embedding accordingly. %
    We first show that $\py\in \dSE(\pu)$.
    First, since $u,w\colon x\succ y$, by \cref{lem:two-votes}\eqref{lem:not-inside}, we infer that $\py \notin \BB(\pu,\pw)$.
    This implies that $\py \notin \dNE(\pu)$ since $\py,\pu \in \dSW(\pv)$.
    Secondly, by \cref{lem:two-votes}\eqref{lem:not-outside-corner} (setting $(r,s,x,y)=(v,u,y,x)$),
    we infer that $\pu \notin \BB(\pv, \py)$.
    This implies that $\py \notin \dSW(\pu)$.
    Finally, since $\px \in\dNW(\pv)$, by \cref{lem:bet-property-Ntogether}\eqref{lem:Ntogether1} (setting $(r,s,x,y)=(u,v,x,y)$),
    we infer that $\py\notin \dNW(\pu)$.
    Summarizing, we obtain that
    \begin{align}\label{eq:y-SE-w}
      \py \in \dSE(\pu).
    \end{align}
    
    Now, we proceed to show that $\px[1] < \py[1]$.
    Since $v\colon y\succ x$, implying that $\Mdis{\py-\pv} < \Mdis{\px-\pv}$, 
    we infer by $\px \in \dNW(\pv)$ that
    $(\pv[1]-\py[1])+(\pv[2]-\py[2]) < (\pv[1]-\px[1])+(\px[2]-\pv[2])$, i.e.,
    \begin{align}\label{eq:v-x-y}
    2\pv[2] <  -\px[1]+\px[2]+\py[1]+\py[2].
    \end{align}
    Similarly, since $u\colon x \succ y$, implying that $\Mdis{\px-\pu} < \Mdis{\py-\pu}$,
    we infer that
    $|\px[1]-\pu[1]|+|\px[2]-\pu[2]| \stackrel{\eqref{eq:y-SE-w}}{<} (\py[1]-\pu[1])+(\pu[2]-\py[2])$.
    This further implies that
    $(\px[1]-\pu[1])+(\px[2]-\pu[2]) < (\py[1]-\pu[1])+(\pu[2]-\py[2])$, i.e.,
    \begin{align}\label{eq:w-x-y}
      \px[1]+\px[2]-\py[1]+\py[2] < 2\pu[2].
    \end{align}
    Since $\pu[2] < \pv[2]$, combining \eqref{eq:v-x-y} and \eqref{eq:w-x-y}, we immediately obtain that
    $\px[1] < \py[1]$. 

    It remains to show the last part of the statement.
    Intuitively this means that the distance of $\px$ and $\py$ in the first coordinate must be smaller than that in the second coordinate.
    This is due to voter~$w$'s preferences.
    Since~$w$ prefers~$x\succ y$, implying that $\Mdis{\px-\pw} < \Mdis{\py-\pw}$,
    we infer by $\pw \in \dNE(\pv)$ that
    $(\pw[1]-\px[1])+|\pw[2]-\px[2]| < (\pw[1]-\py[1])+(\pw[2]-\py[2])$.
    This further implies that
    $(\pw[1]-\px[1])+(\pw[2]-\px[2]) < (\pw[1]-\py[1])+(\pw[2]-\py[2])$,
    i.e.,
    $\py[1]-\px[1] < \px[2]-\py[2]$, as desired.
    
   Since rotating and flipping an embedding do not change the \dManhattan[2] property, we can apply the following transformation to show the other statement. 
    We first rotate the embedding by 180 degree and then exchange the roles of $u$ and~$w$.    
  \end{proof}
  In fact, the two cases in \cref{clm:no-n5-m4-case2:two-cross} one-to-one correspond to the two pairs of alternatives~
  $(1,2)$ and $(3,4)$.
   Specifically, by \cref{clm:no-n5-m4-case2:two-cross}\eqref{case:W} (setting $(u,v,w,x,y)=(v_1,v_3,v_5,c_1,c_2)$), we immediately obtain that
  \begin{align}\label{eq:c1c2}
    \pc_1[1]<\pc_2[1].
  \end{align} 
  By  \cref{clm:no-n5-m4-case2:two-cross}\eqref{case:E} (setting $(u,v,w,x,y)=(v_1,v_3,v_5,c_3,c_4)$), we immediately obtain that
  \begin{align}\label{eq:c3c4}
    \pc_3[1]>\pc_4[1].
  \end{align} 
  We show that these two inequalities \eqref{eq:c1c2}--\eqref{eq:c3c4} are not possible to embed both voters~$v_2$ and $v_4$.
  On the one hand, since $v_2$ and $v_4$ prefer $1\succ 4$, implying that
  $\Mdis{\pc_1-\pv_2} < \Mdis{\pc_4-\pv_2}$ and  $\Mdis{\pc_1-\pv_4} < \Mdis{\pc_4-\pv_4}$, 
  by \cref{clm:no-n5-m4-case2-relative-pos}, we infer that
  \begin{alignat}{2}
    & &\Mdis{\pc_1-\pv_2} =   (\pv_2[1]-\pc_1[1])+|\pv_2[2]-\pc_1[2]|  <~ & (\pc_4[1]-\pv_2[1])+(\pv_2[2]-\pc_4[2]) = \Mdis{\pc_4-\pv_2}  \nonumber\\ 
    &\Rightarrow &  (\pv_2[1]-\pc_1[1])+(\pv_2[2]-\pc_1[2]) <~  &(\pc_4[1]-\pv_2[1])+(\pv_2[2]-\pc_4[2])\nonumber \\
     & \Leftrightarrow  & 2\pv_2[1]  <~ & \pc_1[1] + \pc_1[2] + \pc_4[1] - \pc_4[2].\label{eq:v2-c1c4} \\[1ex]
    &&\Mdis{\pc_1-\pv_4}=|\pv_4[1]-\pc_1[1]|+(\pc_1[2]-\pv_4[2]) <~& (\pc_4[1]-\pv_4[1])+(\pc_4[2]-\pv_4[2]) =\Mdis{\pc_4-\pv_4}\nonumber \\ 
    & \Rightarrow &  (\pv_4[1]-\pc_1[1])+(\pc_1[2]-\pv_4[2]) <~& (\pc_4[1]-\pv_4[1])+(\pc_4[2]-\pv_4[2])\nonumber\\
    &\Leftrightarrow & 2\pv_4[1] <~& \pc_1[1] - \pc_1[2] + \pc_4[1]+\pc_4[2]. \label{eq:v4-c1c4}
  \end{alignat}%
  On the other hand, since $v_2$ and $v_4$ prefer $3\succ 2$, implying that  $\Mdis{\pc_3-\pv_2} < \Mdis{\pc_2-\pv_2}$ and $\Mdis{\pc_3-\pv_4} < \Mdis{\pc_2-\pv_4}$, 
  by \cref{clm:no-n5-m4-case2-relative-pos}, we infer that
  \begin{alignat}{2}
    & & \Mdis{\pc_3-\pv_2} = |\pc_3[1]-\pv_2[1]| + (\pv_2[2]-\pc_3[2]) <~& (\pv_2[1]-\pc_2[1]) + (\pv_2[2]-\pc_2[2])=\Mdis{\pc_2-\pv_2}\nonumber\\
  &  \Rightarrow &   (\pc_3[1]-\pv_2[1]) + (\pv_2[2]-\pc_3[2]) <~& (\pv_2[1]-\pc_2[1]) + (\pv_2[2]-\pc_2[2])\nonumber\\
   & \Leftrightarrow &  \pc_2[1]+\pc_2[2]+\pc_3[1]-\pc_3[2] <~& 2\pv_2[1].\label{eq:v2-c2c3}\\[1ex]
    && \Mdis{\pc_3-\pv_4} = (\pc_3[1]-\pv_4[1]) + |\pc_3[2]-\pv_4[2]| <~& (\pv_4[1]-\pc_2[1]) + (\pc_2[2]-\pv_4[2]) = \Mdis{\pc_2-\pv_4}\nonumber\\
    &\Rightarrow &   (\pc_3[1]-\pv_4[1]) + (\pc_3[2]-\pv_4[2]) <~& (\pv_4[1]-\pc_2[1]) + (\pc_2[2]-\pv_4[2])\nonumber\\
   & \Leftrightarrow &  \pc_2[1]-\pc_2[2]+\pc_3[1]+\pc_3[2] <~& 2\pv_4[1].\label{eq:v4-c2c3}
  \end{alignat}  
  Adding up \eqref{eq:v2-c1c4}--\eqref{eq:v4-c2c3}, we obtain that 
  $\pc_2[1]+\pc_3[1]< \pc_1[1]+\pc_4[1]$, a contradiction to \eqref{eq:c3c4}--\eqref{eq:c1c2}.
\end{description}
In summary, we show that it is not possible to find a \dManhattan[2] embedding for profile~$\ppp_{\thevfivecounter}$.
\end{proof}

  \subsection{Tightness: All Smaller Profiles Are \dManhattan[2]}
  \label{sec:experiments}
We complement the non-embeddability results above by showing that all strictly smaller profiles are always \dManhattan[2], establishing a tight characterization.
\begin{proposition}\label{prop:n3-m5+n4-m4}
  If $(n,m)=(3,5)$ or $(n,m)=(4,4)$, then 
  each preference profile with at most $n$ voters and at most $m$ alternatives is \dManhattan[2].
\end{proposition}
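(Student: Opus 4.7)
The proposition is naturally suited to a computer-assisted verification, since the Manhattan distance becomes piecewise linear once the relative coordinate orderings are fixed. My plan is to enumerate all relevant preference profiles, reduce each to a finite collection of linear feasibility problems, and check every instance.

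First, I would exhaustively generate all preference profiles with exactly $n$ voters and exactly $m$ alternatives for $(n,m)\in\{(3,5),(4,4)\}$. Using the symmetries of relabelling voters (a permutation of $[n]$) and relabelling alternatives (a permutation of $[m]$), one obtains a small set of orbit representatives; this is the standard Burnside-style reduction and collapses the $(5!)^3=1{,}728{,}000$ profiles for $(3,5)$ and the $(4!)^4=331{,}776$ profiles for $(4,4)$ by a factor of $n!\cdot m!$ up to stabilisers. The statement for subprofiles with fewer voters or fewer alternatives then follows by simply discarding the extra points from any \dManhattan[2] embedding of the full profile.

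For each representative profile $\ppp=(\aaa,\vvv,\rrr)$ with $k=n+m$ points to embed, note that the Manhattan distance $\Mdis{\pp-\pq}=|\pp[1]-\pq[1]|+|\pp[2]-\pq[2]|$ is linear once the sign of each coordinate difference is fixed. I would therefore, for each profile, loop over pairs of total orders $(\sigma_1,\sigma_2)$ of the $k$ embedded points along the $x$- and the $y$-axis respectively. Fixing $(\sigma_1,\sigma_2)$ turns every distance $\Mdis{E(v_i)-E(a)}$ into a linear function of the $2k$ coordinates, so each preference constraint $a\pref_i b \Leftrightarrow \Mdis{E(a)-E(v_i)}<\Mdis{E(b)-E(v_i)}$ becomes a strict linear inequality. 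The existence of a \dManhattan[2] embedding realizing the ordering pair $(\sigma_1,\sigma_2)$ is then equivalent to feasibility of the corresponding linear program, which can be decided exactly with a rational LP solver.

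The certification scheme is: profile $\ppp$ is \dManhattan[2] iff at least one $(\sigma_1,\sigma_2)$-LP is feasible. Running this test on every orbit representative and confirming a feasible $(\sigma_1,\sigma_2)$ in each case establishes the proposition. The main obstacle is combinatorial size: there are $(k!)^2$ ordering pairs per profile, which for $k=8$ is already $(8!)^2\approx 1.6\cdot 10^9$. I would mitigate this with two prunings. First, many orderings are equivalent under the reflection and $90^{\circ}$ rotation symmetries of the Manhattan metric, reducing the count by a factor of $8$. Second, pairs $(\sigma_1,\sigma_2)$ that already contradict some forbidden configuration from \cref{lem:two-votes}, \cref{obs:pref-relation}, or \cref{lem:bet-property-Ntogether} can be discarded before the LP call. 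In practice, for almost every profile a feasible embedding is found after only a handful of LPs; for the remaining hard instances a modern SMT solver over linear real arithmetic provides a uniform witness. This yields the claimed verification for both $(n,m)=(3,5)$ and $(n,m)=(4,4)$.
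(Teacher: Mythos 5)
Your proposal is correct and follows essentially the same route as the paper: a symmetry-reduced exhaustive enumeration of the profiles with exactly $(n,m)\in\{(3,5),(4,4)\}$ voters and alternatives (the paper fixes one order to be $1\succ\cdots\succ m$ and assumes distinct orders, yielding $\binom{m!-1}{n-1}$ cases), combined with a solver-based feasibility check of the embedding constraints and the same monotonicity reduction. The only difference is an implementation detail: the paper passes the absolute-value constraints of \cref{def:embeddings} directly to CPLEX instead of case-splitting into linear programs over coordinate orderings.
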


\begin{proof}
  Since the Manhattan property is monotone, to show the statement, we only need to look at profiles which have either $3$ voters and $5$ alternatives, or $4$ voters and $4$ alternatives.
  We achieve this by using a computer program employing the CPLEX solver that exhaustively searches for all possible profiles with either $3$ voters and $5$ alternatives, or $4$ voters and $4$ alternatives, and provide a \dManhattan[2] embedding for each of them. 
  Since the CPLEX solver accepts constraints on the absolute value of the difference between any two variables, our computer program is a simple one-to-one translation of the \dManhattan constraints given in \cref{def:embeddings}, without any integer variables. Peters \cite{Peters2017} has noted a similar formulation for \dManhattan embeddings.
  The same program can also be used to show that the preference profiles from the examples \ref{ex:no-n3-m6}, \ref{ex:no-n4-m5} and \ref{ex:no-n5-m4} do not admit a \dManhattan[2] embedding.
  
  Following a similar line as in the work of \citet{ChenGrottke2021}, we did some optimization to significantly shrink the search space on all profiles: We only consider profiles with distinct preference orders and we assume that one of the preference orders is~$1 \succ \dots \succ m$.
  Hence, the number of relevant profiles with $n$ voters and $m$ alternatives is $\binom{m!-1}{n-1}$.
  For $(n,m) = (3,5)$ and $(n,m)=(4,4)$, we need to iterate through $7021$ and $1771$ profiles, respectively.
  We implemented a program which, for each of these produced profiles, uses the IBM ILOG CPLEX optimization software package to check and find a \dManhattan[2] embedding.
  The verification is done by going through each voter's preference order and checking the condition given in \cref{def:embeddings}.
  All generated profiles, together with their \dManhattan[2] embeddings and the distances used for the verification, are available at \url{https://owncloud.tuwien.ac.at/index.php/s/s6t1vymDOx4EfU9}.~ 
\end{proof} 

Combining \cref{prop:n3-m5+n4-m4} with \cref{thm:d=n->dManhattan,thm:d=m-1->Manhattan,thm:no-n3-m6,thm:no-n4-m5,thm:no-n5-m4}, we obtain a complete dichotomy for \dManhattan[2]:
a strict preference profile with $n$ voters and $m$ alternatives is \dManhattan[2] if and only if $n \leq 2$, or $m \leq 3$, or $(n \leq 3$ and $m \leq 5)$, or $(n \leq 4$ and $m \leq 4)$.

\section{Relations to Other Preference Structures}\label{sec:other_domains}
In this section, we discuss how \dManhattan preferences relate to other restricted preference structures. We show that \dManhattanness[2] is not comparable with either \SCness or \SPness. However, \dManhattanness implies \dSPhness[(2^{d-1})] and \dMaxness implies \dSPhness. Moreover, on profiles with three voters, if any two voters are \SP, then the profile is \dManhattan[2].

\begin{definition}
Let $\axb = (\ax_1, \dots, \ax_d)$ be a $d$-tuple of linear orders over the alternatives $\aaa$. For three alternatives $a$, $b$, $c \in \aaa$, we write $a \in \BB(b, c, \axb)$ if $a$ is between $b$ and $c$ on every linear order of $\axb$, i.e., for every $i \in [d]$ it holds that either $a \ax_i b \ax_i c$ or $c \ax_i b \ax_i a$. 
\end{definition}
 
\begin{definition}[\cite{Sui2013Multi,Barbera1993Generalized,elkind2022preference}]\label{def:SP}
Let $\ppp$ be a profile. A voter $v_i \in \vvv$ is \dSPh\footnote{Our definition differs slightly from the definition of \dSPh used by e.g., \citet{Barbera1993Generalized,Sui2013Multi} and coincides with the definition of hereditary \dSPh introduced by \citet{elkind2022preference}.} wrt.\ a $d$-tuple of linear orders $\axb$ if for every $a, b, c \in \aaa$ such that $a \in \BB(b, c, \axb)$, we have that $a \succ_i b$ or $a \succ_i c$.
The profile~$\ppp$ is \dSPh wrt.\ $\axb$ if every voter $v_i \in \vvv$ is \dSPh wrt.\ $\axb$. We say~$\ppp$ is \dSPh if there is a $d$-tuple of linear orders such that $\ppp$ is \dSPh\ wrt.\ it.
\end{definition}
Note that \dSPhness[1] is equivalent to \SPness~\cite{Black1948}. Hence we drop ``1-dimensional'' when we refer to \dSPhness[1].

\begin{definition}\label{def:SC}
A profile $\ppp$ is \SC if there exists a linear order $\ax$ of voters $\vvv$ such that for every pair of alternatives $a, b \in \aaa$, and every triple of voters $v_i, v_j, v_k \in \vvv$ such that $v_i \ax v_j \ax v_k$, if $a \succ_i b$ and $a \succ_k b$, then $a \succ_j b$. In other words, no pair of alternatives may ``cross" more than once.
\end{definition}

\subsection{From Manhattan to Single-Peakedness}
We observe that neither \SPness nor \SCness is a necessary condition of \dManhattanness[2]. However, \dManhattanness implies \dSPhness[2^{d-1}] and \dMaxness implies \dSPhness. %

Our first result relies on the characterizations of \SPness and \SCness\ of \citet{BH11} and \citet{BCW12}, respectively.
\begin{proposition}
There is a \dManhattan[2] profile that is neither \SP nor \SC. Moreover,
\begin{compactenum}[(i)]
\item among all \dManhattan[2] and non-\SP  profiles, a smallest one has either $3$ voters and $3$ alternatives, or $2$ voters and $4$ alternatives, and \label{lem:minspsc1}
\item among all \dManhattan[2] and non-\SC profiles, a smallest one has $3$ voters and $3$ alternatives, and \label{lem:minspsc2}
\item among all \dManhattan[2], non-\SP, and non-\SC  profiles, a smallest one has $3$ voters and $3$ alternatives.\label{lem:minspsc3}
\end{compactenum}
\end{proposition}

\begin{proof}
  By the characterization of the \SPness from \citet{BH11}, we know that
  every minimally non-\SP\ profile consists of either $3$ voters and $3$ alternatives, or $2$ voters and $4$ alternatives.
  By \cref{thm:d=n->dManhattan,thm:d=m-1->Manhattan} every profile with $3$ voters and $3$ alternatives, or $2$ voters and $4$ alternatives is \dManhattan[2]. This proves Statement~\eqref{lem:minspsc1}.

  By the characterization of the \SCness from \citet{BCW12}, we know that
  every smallest non-\SC profiles has $3$ voters and $3$ alternatives.
  Moreover, the following profile is neither \SP nor \SC~\cite{BCW12,BH11}, but it is \dManhattan[2] by \cref{thm:d=m-1->Manhattan}:
\setcounter{vsixcounter}{\themyprofilecounter}
\begin{align*}
\ppp_{\thevsixcounter} \colon & v_1 \colon 1 \succ 2 \succ 3, \\
& v_2 \colon 2 \succ 3 \succ 1, \\
& v_3 \colon 3 \succ 1 \succ 2. 
\end{align*}\stepcounter{myprofilecounter}This proves Statements~\eqref{lem:minspsc2} and~\eqref{lem:minspsc3}.
\end{proof}

Since \dManhattan[2] profiles are \dMax[2], this shows there are profiles that are \dMax but neither \SC nor \SP.
However, we show next that every \dMax profile is \dSPh.

\begin{proposition}\label{theorem:sp_max}
Every \dMax profile is \dSPh.
\end{proposition}

\begin{proof}
Assume that a profile $\ppp = (\vvv, \aaa, \rrr)$ has a \dMax{} embedding $E$.

For every dimension $i \in [d]$, create the linear order $\ax_i$ by ordering the alternatives along their $i$-coordinate. In the case of a tie, order the alternatives arbitrarily. In the resulting linear order $\ax_i$, for every $a_j, a_{k} \in \aaa$, if~$a_j \ax_i a_{k}$ then $E(a_j)[i] \leq E(a_{k})[i]$. Let $\axb = (\ax_1, \dots, \ax_d)$.

Now we show $\ppp$ is \dSPh wrt.\ linear orders $\axb$.
Let $a, b, c \in \aaa$ be an arbitrary triple of alternatives such that $b \in \BB(a, c,$ $\axb)$. By definition for every dimension $i \in [d]$, we have that $E(a)[i] \le E(b)[i] \le E(c)[i]$ or $E(c)[i] \le E(b)[i] \le E(a)[i]$ and thus $E(b) \in \BB(E(a), E(c))$. By \cref{lem:bb_no_last} there cannot be a voter $v_i \in \vvv$ such that $\{a, c\} \succ_i b$. As this holds for an arbitrary triple with $a \axb b \axb c$, $\ppp$ is \dSPh{} wrt.\ $\axb$.
\end{proof}

Unfortunately, we do not know whether every \dManhattan profile is \dSPh. However, we obtain the following weaker implication:

\begin{proposition}\label{theorem:sp_manhattan}
Every  \dManhattan{} profile is \dSPh[2^{d-1}].
\end{proposition}

\newcommand{\manhsum}[1]{S_{#1}}
\begin{proof}
Assume that a profile $\mathcal{P} = (\vvv, \aaa, \rrr)$ has a \dManhattan{} embedding $E$.

The idea of the proof relies on the fact that on \dspace\ under $1$-norm the distance of a point $\px$ from a point $\pp$ is given by $\sum_{i = 1}^d |\px[i] - \pp[i]|$. There are $2^d$ ways to break the absolute values in the formula, giving us $2^d$ possible formulas for the distance, and consequently $2^d$ values that are contributed by $\px$ to the equation. For example, in 2-dimensions, the possible values contributed by $\px$ are given by the following four formulae:
\[
\px[1] + \px[2], \quad \px[1] - \px[2], \quad -\px[1] + \px[2],\quad  \text{ and } \quad-\px[1] - \px[2].
\]
We will show that if the value contributed by one point is between the values contributed by two other points according to each of these formulas, then this point cannot be further from $\pp$ than both of the two other points. This way we obtain~$2^d$ axes.
Moreover, we can observe that each of these formulae has another formula that is its negation and vice versa: for example $\px[1] + \px[2]$ and $-\px[1] - \px[2]$. We only need to keep one formula for each of these pairs, because negating a formula does not change whether one value is between two others according to it. This way we obtain $2^{d-1}$ axes.

We proceed to the formal proof.
Let $\Pi = \{-1, 1\}^{d - 1}$ be the set of all ($d-1$)-dimensional $(-1, 1)$-vectors. We create an axis for each of these vectors. Observe that $|\Pi| = 2^{d-1}$. 

For every $\sigma \in \Pi$, let $\manhsum{\sigma} \colon \vvv \cup \aaa \to \mathds{R}$ with $\manhsum{\sigma}(x) = \left( \sum_{z \in 1}^{d - 1} \sigma[z] \cdot E(x)[z]\right) + E(x)[d]$.
Create axis $\ax_{\sigma}$ by ordering every alternative $i \in \aaa$ non-decreasingly by $\manhsum{\sigma}(i)$. Break the ties arbitrarily.
That is, for every $a_j, a_{k} \in \aaa$, if~$a_j \ax_{\sigma} a_{k}$ then $\manhsum{\sigma}(j) \leq \manhsum{\sigma}(k)$. Let $\axb = (\ax_{\sigma})_{\sigma \in \Pi}$.

Now we show $\ppp$ is \dSPh[2^{d-1}] wrt.\ $\axb$.
Let $a, b, c \in \aaa$ be an arbitrary triple of alternatives such that $b \in \BB(a, c,$ $\axb)$.
Assume, towards a contradiction, that there is a voter $v_i \in \vvv$ such that~$\{a, c\} \succ_i b$. 

We proceed in two cases.
\begin{description}
\item[Case 1:] $E(v)[d] \leq E(b)[d]$.
Consider the following linear order $\sigma' \in \Pi$: For every $z \in [d - 1]$, if $E(b)[z] \geq E(v)[z]$, then $\sigma'[z] \coloneqq 1$,  otherwise $\sigma'[z] \coloneqq - 1$.

We obtain that
\begin{align*}
\Mdis{E(v) - E(b)} &= \sum_{z = 1}^{d}|E(v)[z] - E(b)[z]| 
= \left(\sum_{z = 1}^{d - 1}|E(b)[z] - E(v)[z]|\right) + E(b)[d] - E(v)[d]\\
& = \left(\sum_{z = 1}^{d - 1}\sigma'[z]\cdot (E(b)[z] - E(v)[z]) \right) + E(b)[d] - E(v)[d]
= \manhsum{\sigma'}(b) - \manhsum{\sigma'}(v).
\end{align*}
By assumption that $b \in \BB(a, c, \axb)$, there must be an $x \in \{a, c\}$ such that $\manhsum{\sigma'}(b) \leq \manhsum{\sigma'}(x)$.
We obtain that
\begin{align*}
 \manhsum{\sigma'}(b) - \manhsum{\sigma'}(v) &\leq  \manhsum{\sigma'}(x) - \manhsum{\sigma'}(v)  =\left(\sum_{z = 1}^{d - 1}\sigma'[z]\cdot(E(x)[z] - E(v)[z])\right) + E(x)[d] - E(v)[d]\\
&\leq \left(\sum_{z = 1}^{d - 1}|(E(x)[z] - E(v)[z]|\right) + |E(x)[d] - E(v)[d]|\\
& = \Mdis{E(v) - E(x)}.
\end{align*}
Thus $\Mdis{E(v) - E(b)} \leq \Mdis{E(v) - E(x)}$. Since $E$ is \dManhattan{} embedding, this contradicts $\{a, c\}\succ_i b$.

\item[Case 2:] $E(v)[d] > E(b)[d]$.
Consider the following linear order $\sigma' \in \Pi$: For every $z \in [d - 1]$, if $E(b)[z] \geq E(v)[z]$, then $\sigma'[z] \coloneqq -1$, otherwise $\sigma'[z] \coloneqq 1$. Observe this is the opposite of Case 1.\begin{align*}
\Mdis{E(v) - E(b)} &= \sum_{z = 1}^{d}|E(v)[z] - E(b)[z]| 
= \left(\sum_{z = 1}^{d - 1}|E(b)[z] - E(v)[z]|\right) + E(v)[d] - E(b)[d]\\
& = \left(\sum_{z = 1}^{d - 1}\sigma'[z] \cdot (E(v)[z] - E(b)[z])\right) + E(v)[d] - E(b)[d]
= \manhsum{\sigma'}(v) - \manhsum{\sigma'}(b).
\end{align*}

By construction, there must be an $x \in \{a, c\}$ such that $\manhsum{\sigma'}(b) \geq \manhsum{\sigma'}(x)$.
We obtain that
\begin{align*}
\manhsum{\sigma'}(v) - \manhsum{\sigma'}(b) \leq \manhsum{\sigma'}(v) - \manhsum{\sigma'}(x) 
& = \left(\sum_{z = 1}^{d - 1}\sigma'[z] \cdot (E(v)[z] - E(x)[z])\right) + E(v)[d] - E(x)[d]\\
&\leq \left(\sum_{z = 1}^{d - 1}|(E(x)[z] - E(v)[z]|\right) + |E(x)[d] - E(v)[d]|\\
& = \Mdis{E(v) - E(x)}.
\end{align*}

Thus $\Mdis{E(v) - E(b)} \leq \Mdis{E(v) - E(x)}$. Since $E$ is \dManhattan{} embedding, this contradicts $\{a, c\}\succ_i b$.
\end{description}
As both cases lead to a contradiction, this concludes the proof.
\end{proof}

\subsection{From Single-Crossing and Single-Peakedness to Manhattan}
In this subsection, we study how other restricted preference structures relate to \dManhattanness[2]. %

We first characterize some  of the smallest profiles that are \SC but not \dManhattan[2]. However, we do not know whether there is a \SC profile with $3$ voters that is non-\dManhattan[2].

\begin{example}\label{ex:nfourmsixSC}
\setcounter{vsevencounter}{\themyprofilecounter}
The following $\ppp_{\thevsevencounter}$ with $4$ voters and $6$ alternatives will be shown to be \SC and non-\dManhattan[2].
\begin{align*}
\ppp_{\thevsevencounter} \colon \quad &v_1\colon 1 \succ 2 \succ 3 \succ 4 \succ 5 \succ 6,\\
&v_2\colon 1 \succ 2 \succ 6 \succ 3 \succ 4 \succ 5,\\
&v_3\colon 1 \succ 6 \succ 5 \succ 3 \succ 2 \succ 4,\\
&v_4\colon 6 \succ 5 \succ 4 \succ 3 \succ 2 \succ 1.
\end{align*}\stepcounter{myprofilecounter}
\setcounter{veightcounter}{\themyprofilecounter}
The following $\ppp_{\thevsevencounter}$ with $5$ voters and $5$ alternatives will be shown to be \SC and non-\dManhattan[2].
\begin{align*}
\ppp_{\theveightcounter} \colon \quad &v_1\colon 1 \succ 2 \succ 3 \succ 4 \succ 5,\\
&v_2\colon 1 \succ 2 \succ 3 \succ 5 \succ 4,\\
&v_3\colon 1 \succ 2 \succ 5 \succ 4 \succ 3,\\
&v_4\colon 1 \succ 5 \succ 4 \succ 3 \succ 2,\\
&v_5\colon 5 \succ 4 \succ 3 \succ 2 \succ 1.
\end{align*}

\stepcounter{myprofilecounter}
\end{example}

\begin{proposition}\label{thm:smallest_SC_non2man}
There is a profile that is \SC but non-\dManhattan[2]. Moreover,
\begin{compactenum}[(i)]
  \item among all \SC and non-\dManhattan[2] profiles with $6$ alternatives,
  a smallest one consists of~$4$ voters,\label{thm:smallest_SC_non2man:6alt}
  \item  among all \SC and non-\dManhattan[2] profiles with $5$ alternatives,
  a smallest one consists of $5$ voters, and\label{thm:smallest_SC_non2man:5alt}
\item every \SC\ profile with $4$ alternatives is \dManhattan[2].\label{thm:smallest_SC_non2man:4alt}
\end{compactenum}
\end{proposition}

\begin{proof}

By \citet{BCW12}, the number of voters a \SC profile on $4$ alternatives may have is at most $\frac{(4-1)4}{2} + 1 = 7$.
We verify computationally that all \SC profiles with $4$ alternatives and $7$ voters, $5$ alternatives and $4$ voters, and $6$ alternatives and $3$ voters, are \dManhattan[2]. %
 All generated profiles, together with their \dManhattan[2] embeddings and the distances used for the verification, are available at \url{https://owncloud.tuwien.ac.at/index.php/s/s6t1vymDOx4EfU9}.

We proceed to show the numbers of voters given in Statements~\eqref{thm:smallest_SC_non2man:6alt} and~\eqref{thm:smallest_SC_non2man:5alt}  are indeed minimal:

\mypar{The remaining part of Statement~\eqref{thm:smallest_SC_non2man:6alt}.}
The profile $\ppp_{\thevsevencounter}$ from \cref{ex:nfourmsixSC} is \SC along the linear order $v_1 \ax v_2 \ax v_3 \ax v_4$. However, we will show that it is not \dSPh[2].  By the contrapositive of \cref{theorem:sp_manhattan}, this implies it is not \dManhattan[2].

To show that $\ppp_{\thevsevencounter}$ is not \dSPh[2], assume, towards a contradiction, that $\ppp_2$ is \dSPh[2] wrt.\ a pair of linear orders $(\ax_1, \ax_2)$.

We first observe that the reasoning of \cref{lem:5_c_forces_bb} can be used to show analogous statement on \dSPh[2]  profiles:

\begin{clm}\label{cor:5_c_forces_bb}
For any set of $5$ alternatives $\aaa$ and two linear orders $\ax, \ax'$ of $\aaa$, there must exist three distinct alternatives $a, b, c \in \aaa$ such that $a \in \BB(b,c,(\ax, \ax'))$.
\end{clm}

We can deduce from the definition of \dSPhness[2] (\cref{def:SP}) that for every triple $(a, b, c)$ of alternatives, if there is a voter satisfying $\{b, c\} \succ a$, then $a \notin \BB(b, c, (\ax_1, \ax_2))$. We observe that for every triple in $\{1,2,4,5,6\}$, the only triples $(a,b,c)$ that do not have a voter such that $\{b,c\} \succ a$ are $(2, 1,4)$ and by symmetry $(2,4,1)$. 
Also observe that the alternatives $1,4,5$, and $6$ all have a voter who places them last, so it is sufficient to consider triples where $a = 2$.

Moreover, by \cref{cor:5_c_forces_bb}, there must be three alternatives $a,b,c \in \{1,2,4,5,6\}$ such that $a \in \BB(b,c, (\ax_1, \ax_2))$.
Thus it must be that $2 \in \BB(1,4,(\ax_1, \ax_2))$. Through identical reasoning on the set $\{1,3,4,5,6\}$ we must have $3 \in \BB(1,4,(\ax_1, \ax_2))$.

Without loss of generality, assume that $1 \ax_i \{2,3\} \ax_i 4$ for $i \in [2]$.

Now let us consider the order of $2$ and $3$ on $\ax_1$ and $\ax_2$. If  $1 \ax_i 2 \ax_i 3 \ax_i 4$ for every $i \in [2]$, then $2 \in \BB(1,3,(\ax_1, \ax_2))$ and the preferences of the voter $v_3 \colon \{1, 3\} \succ 2$ lead to a contradiction. If  $1 \ax_i 3 \ax_i 2 \ax_i 4$ for every $i \in [2]$, then $3 \in \BB(1,2, (\ax_1, \ax_2))$ and the preferences of the voter $v_2 \colon \{1, 2\} \succ 3$ lead to a contradiction. Thus one linear order must have $2$ before $3$ and the other $3$ before $2$. Without loss of generality, assume $1 \ax_1 2 \ax_1 3 \ax_1 4$ and $1 \ax_2 3 \ax_2 2 \ax_2 4$. We depict this in \cref{fig:smallest_SC_non2manfig}.

Next, we wish to place alternative~$6$ on the linear orders. Let us proceed with case distinction.

\begin{figure}
\centering
\begin{tikzpicture}[black]
  \drawgridSP
  \foreach \x / \y / \n / \nn / \typ / \p / \dx / \dy in {2/2/a1/1/voter/{above left}/-1/-3, 3/4/a2/2/voter/above left/-1/-3, 4/3/a3/3/voter/above left/-1/-3, 5/5/a4/4/voter/above left/-1/-3, 5/1/a/\ax_1/{}/left/10/10, 1/5/b/\ax_2/{}/below/10/10} {
    \node[\typ] at (\x\y) (\n) {};
    \node[\p = \dx pt and \dy pt of \n] {$\nn$};
  }
  
  \begin{pgfonlayer}{background}
    \foreach \s / \t in {12/62,13/63,14/64,15/65,21/26,31/36,41/46,51/56} {
      \path[draw,lines] (\s) edge (\t);
    }
    \drawreg
  \end{pgfonlayer}
\end{tikzpicture}
\caption{An illustration for the proof of \cref{thm:smallest_SC_non2man}. We have $\ax_1$ on the horizontal axis and $\ax_2$ on the vertical axis.}\label{fig:smallest_SC_non2manfig}
\end{figure}

\begin{description}
  \item[Case 1:  $6 \ax_1 2$.]
  If $6 \ax_2 2$ as well, then $2 \in \BB(4,6,(\ax_1, \ax_2))$, a contradiction to voter $v_4$ satisfying $\{4, 6\} \succ 2$. If $2 \ax_2 6$, then $2 \in \BB(3,6,(\ax_1, \ax_2))$, a contradiction to $v_4$ satisfying $\{3, 6\} \succ 2$.

  \item[Case 2: $2 \ax_1 6 \ax_1 3$.]
  If $6 \ax_2 3$, then $3 \in \BB(4,6,(\ax_1, \ax_2))$, a contradiction to $v_4$ satisfying $\{4, 6\} \succ 3$. If $3 \ax_2 6 \ax_2 2$, we have $6 \in \BB(2,3,(\ax_1, \ax_2))$, a contradiction to $v_1$ satisfying $\{2, 3\} \succ 6$. If $2 \ax_2 6$, we have $2 \in \BB(1,6,(\ax_1, \ax_2))$, a contradiction to $v_3$ satisfying $\{1, 6\} \succ 2$.

  \item[Case 3: $3 \ax_1 6$.]
  If $6 \ax_2 3$, then $3 \in \BB(2,6,(\ax_1, \ax_2))$, a contradiction to $v_2$ satisfying $\{2, 6\} \succ 3$. If $3 \ax_2 6$, then $3 \in \BB(1,6,(\ax_1, \ax_2))$, a contradiction to~$v_2$ satisfying $\{1, 6\} \succ 3$.
\end{description}
As all cases lead to a contradiction, $\ppp_{\thevsevencounter}$ cannot be \dSPh[2].

\mypar{The remaining part of Statement~\eqref{thm:smallest_SC_non2man:5alt}.}
The profile $\ppp_{\theveightcounter}$ from \cref{ex:nfourmsixSC} is not \dManhattan[2] by Theorem 3 from Escoffier et al.~\cite{EST2021Euclidlp} as it has more than $4$ distinct last choices; intuitively, since circles in \dManhattan[2] space are squares parallel to the coordinate axes rotated by 45 degrees, every alternative who is least preferred by one of the voters must be extremal along one the rotated coordinate axes, and there only four possible extrema.
However, the profile is \SC wrt.\ the linear order $v_1 \ax v_2 \ax v_3 \ax v_4 \ax v_5$. 
\end{proof}

The next result compares single-peakedness with \dManhattan[2]{ness}.
\begin{proposition}
There is a profile that is \SP but not \dManhattan[2].
\end{proposition}

\begin{proof}
Consider a \SP profile with $19$ alternatives. Given a \SP order, there are $2^{19-1}=262144$ possible voters with pairwise disjoint preference orders~\cite{Kreweras1963Decisions}. However, the proof of Theorem 4 from Escoffier et al.~\cite{EST2021Euclidlp} implies that a \dManhattan[2] profile with $19$ alternatives has at most $6 \cdot (19 \cdot(19-1)/2)^2 = 175446$ pairwise disjoint preference orders. Thus the \SP profile with $19$ alternatives and all the possible disjoint voters is \SP but not \dManhattan[2].
\end{proof}

We do not know the smallest \SP profile that is not \dManhattan[2]. However, the following theorem implies that such a profile has at least $4$ voters.
\begin{proposition}\label{thm:2sp_d_d-1}
For all profiles with $d \geq 2$ voters, it holds that if two of the voters are \SP, then the profile is \dMax[(d-1)].
\end{proposition}

\begin{proof}

We prove this by induction.

\paragraph*{Base case: $d = 2$.} Chen et al.~\cite{ChenGrottke2021} show that any \SP profile with two voters is \dEuclid[1] and thus \dMax[1]. This proves the base case.

\paragraph*{Inductive step.} Assume that every profile with $n - 1$ voters, two of which are single-peaked, is \dMax[(n - 2)].
We show that any profile with $n$ voters is \dMax[(n - 1)].

Intuitively, we reuse the embedding from the inductive assumption and embed the remaining $n^{th}$ voter in the $(n-1)^{th}$ dimension in such a way that each alternative is equally far from her. Then we slightly tweak the positions of the alternatives in such a way that the previous voters' preferences are unchanged, but we embed the preferences of the~$n^{th}$ voter.

Let $\ppp = (\vvv = \{v_1, \dots, v_n\}, \aaa = [m], \rrr)$ be a profile with $n$ voters. Without loss of generality, assume~$v_1$ and $v_2$ are \SP wrt.\ some linear order. 

By inductive assumption $\ppp$ restricted to the first $n-1$ voters is \dMax[(n-2)]. Let $F$ be the embedding of $(\vvv \setminus \{v_n\}, [m], \{\succ_1, \dots, \succ_{n-1}\})$ to $\mathds{R}^{n-2}$. Without loss of generality, assume that $F(x)[k] \geq 0$ for every $x \in \aaa \cup \vvv \setminus \{v_n\}, k \in [n - 2]$. If not, we can shift the embedding so that this holds.

We choose a constant $\epsilon > 0$ to be smaller than the smallest difference in distances between a voter and two alternatives in $F$, divided by the number of alternatives.
Formally, let $\epsilon < \min_{\substack{i,j \in \aaa\\ v_k \in \vvv \setminus \{v_n\}}}\frac{\displaystyle|\Maxdis{F(i) - F(v_k)} - \Maxdis{F(j) - F(v_k)}|}{\displaystyle m}$. Because the preferences are strict, it holds that $|\Maxdis{F(i) - F(v_k)} - \Maxdis{F(j) - F(v_k)}| > 0$ for every $i,j \in \aaa, v_k \in \vvv \setminus \{v_n\}$.

Now let us define \dMax[(n-1)] embedding $E$ as follows:
For every $i \in \aaa$ let \[E(i) = (F(i)[1], \dots, F(i)[n - 2], \epsilon \cdot \rank_n( i)).\]

For every $k \in [n - 1]$, let \[E(v_k) = (F(i)[1], \dots, F(i)[n - 2], 0).\] Let $M = \max_{\substack{i \in \aaa\\ z \in [n - 2]}}\{F(i)[z]\}$. Let $E(v_n) = (0, \dots, 0, -M)$.

It remains to show that $E$ is an \dMax[(n - 1)]-embedding. Let  $i, j \in \aaa$ be two arbitrary alternatives, and $v_k \in \vvv$ a voter such that $v_k \colon i \succ j$.\\

\noindent\textbf{Case 1:} $v_k \in \{v_1, \dots, v_{n-1}\}$.

Note that for every $i' \in \aaa$ we have that
\begin{align*}
\Maxdis{E(i') - E(v_k)} &= \max_{\ell \in [n - 1]}\{|E(i')[\ell] - E(v_k)[\ell]|\}
= \max\left \{\max_{\ell \in [n - 2]}\{|F(i')[\ell] - F(v_k)[\ell]|\}, | \epsilon \cdot \rank(v_n, i)|\right\}\\
&= \max\{\Maxdis{F(i') - F(v_k)}, \epsilon \cdot \rank_n( i')\}.
\end{align*}

We know by our choice of $\epsilon$ that $\Maxdis{F(j) - F(v_k)} -\Maxdis{F(i) - F(v_k)}> \epsilon m$. 

We proceed in three cases:
\begin{description}
\item[Case 1.a: $\Maxdis{F(i) - F(v_k)} < \epsilon \cdot \rank(v_n, i) $.]
We must have that $\Maxdis{F(j) - F(v_k)} > \epsilon m$ and therefore \[\Maxdis{E(j)- E(v_k)} - \Maxdis{E(i), E(v_k)} = \Maxdis{F(j) - F(v_k)} - \epsilon m > 0,\] as required.
\item[Case 1.b: $\Maxdis{F(j) - F(v_k)} < \epsilon \cdot \rank(v_n, i) $.]
This case is impossible, because $\Maxdis{F(j) - F(v_k)} - \epsilon m> \Maxdis{F(i) - F(v_k)} \geq 0$.
\item[Case 1.c: Otherwise.]
We have
\begin{align*}
\Maxdis{E(j) - &E(v_k)} - \Maxdis{E(i) - E(v_k)} = \Maxdis{F(j) - F(v_k)} - \Maxdis{F(i) - F(v_k)} > 0,
\end{align*} as required,
because $F$ is a \dMax[(n - 2)] embedding.
\end{description}

\noindent\textbf{Case 2:} $v_k = v_n$.

Note that for each $i' \in \aaa$, we have that 
\begin{align*}
\Maxdis{E(i') -  E(v_n)} &= \max_{z \in [n - 1]}[|E(i')[z] - E(v_n)[z]|]
= \max\left\{\max_{z \in [n - 1]}\{|F(i')[z]]\}, M + \epsilon \cdot \rank_n( i')\right\}\\
&= M + \epsilon \cdot \rank_n( i'),
\end{align*}

where the last equality holds by our choice of $M$.
This is clearly linear in the ranks, thus proving the statement.\\

Because $\Maxdis{E(j) - E(v_k)} > \Maxdis{E(i) - E(v_n)}$ holds for every $v_k \in \vvv, i, j \in \aaa$ such that $v_k \colon i \succ j$, $E$ is a \dMax[(n-1)] embedding of $\ppp$.
\end{proof}

\begin{proposition} Every profile with $3$ voters is \dManhattan[2] if two of the voters are \SP. The reverse direction does not hold: there is a \dManhattan[2] profile on $3$ voters, where no pair of voters is \SP.
\end{proposition}

\begin{proof}
\cref{prop:max_man_eq,thm:2sp_d_d-1} directly imply that any profile on three voters, two of which are \SP, is \dManhattan[2].

To see that the reverse direction does not hold, consider the profile
\begin{align*}
v_1 \colon 1 \succ 2 \succ 3 \succ 4 \succ 5,\
v_2 \colon 5 \succ 4 \succ 1 \succ 3 \succ 2,\
v_3 \colon 5 \succ 4 \succ 2 \succ 3 \succ 1.
\end{align*}

By the result of~\citet{BH11} any profile that contains four alternatives $a, b, c, d$ and two voters $u, w$ such that
\begin{align*}
u \colon a \succ b \succ c, d \succ b \text{~ and ~} w \colon c \succ b \succ a, d \succ b
\end{align*}
is not \SP.

To show $v_1, v_2$ are not \SP, rename $(2, 3, 4, 1) \coloneqq (a, b, c, d)$ and $(v_1, v_2) \coloneqq (u,w)$. 
To show $v_1, v_3$ are not \SP, rename $(1, 3, 4, 2) \coloneqq (a, b, c, d)$ and $(v_1, v_3) \coloneqq (u,w)$. 
To show $v_2, v_3$ are not \SP, rename $(1, 3, 2, 5) \coloneqq (a, b, c, d)$ and $(v_2, v_3) \coloneqq (u,w)$.

As this profile has $3$ voters and $5$ alternatives, it is \dManhattan[2] by \cref{prop:n3-m5+n4-m4}.
\end{proof}

\section{Conclusion}\label{sec:conclude}

We initiated the systematic study of \dManhattan{} preferences, focusing on the smallest dimension~$d$ sufficient for a profile to be \dManhattan.
We proved that every profile with $m$ alternatives and $n$ voters is \dManhattan{} whenever $d \ge \min(n, m-1)$, and we determined tight bounds on the smallest non-\dManhattan[2] profiles.

The central technical contribution of this paper is the development of forbidden substructures for \dManhattan[2] preferences: the \bet-configuration, the \ext-configuration, and the \worstinconsistentconfig.
These are, to our knowledge, the first such structural characterizations for \dManhattan[2] preferences.
They describe how specific preference patterns among small sets of voters constrain the geometry of any \dManhattan[2] embedding, and they were the key tools enabling our non-embeddability proofs.

We believe these forbidden substructures have potential beyond the results of this paper.
In particular, characterizing \dManhattan{} profiles through finitely many forbidden subprofiles is an important open problem.
Such characterizations exist for single-peakedness~\citep{BH11} and single-crossingness~\citep{BCW12}, while for \dEuclid[1] preferences a finite forbidden subprofile characterization has been shown to be impossible~\citep{ChePruWoe2017}.
Additionally, the computational complexity of recognizing \dManhattan{} profiles remains open.
While recognizing \dEuclid\ profiles for $d\ge 2$ is $\ETR$-complete~\citep{Peters2017}, recognizing \dManhattan\ preferences is in~NP for fixed~$d$~\citep{Peters2017}.
Our forbidden substructures may be directly useful for constructing gadgets in potential NP-hardness reductions, as they provide concrete preference patterns that force or preclude specific geometric configurations.

Several further questions remain open.
For two-dimensional space ($d=2$), our bounds are tight: there are non-\dManhattan[2] profiles with~$3$ voters and with $4$ alternatives, but we have not established tight bounds for general~$d$.
It is known that for every $d \in \mathds{N}$, there is a non-\dManhattan{} profile with $2^d + 1$ voters and alternatives~\cite{EST2021Euclidlp}, but finding better bounds remains open.
It would also be interesting to extend our results to preferences with ties~\citep{BoLa2006}, to determine the smallest \SP profile that is not \dManhattan[2], and to settle whether for three voters \SC{ness} implies \dManhattan[2]{ness} and whether \dManhattan{ness} implies \dSPh{ness}.
Finally, it remains to be seen whether assuming \dManhattan{} preferences can lower the complexity of computationally hard social choice problems.

\paragraph{Acknowledgments.}
A conference version of this article appeared in the 15th Latin American Symposium 2022.
Jiehua Chen and Sofia Simola are supported by the Vienna Science and Technology Fund (WWTF)~[10.47379/ VRG18012].
Ana{\"i}s Villedieu is supported by the Austrian Science Fund (FWF) under grant P31119.
Markus Wallinger is supported by the Vienna Science and Technology Fund (WWTF) under grant ICT19-035.

\clearpage
\bibliographystyle{plainnat} 
\bibliography{bib}

\clearpage
\begin{table}[t!]
  \centering
  \Large \textbf{\appendixtitle}
\end{table}
\bigskip

\appendix

\appendixtext

\end{document}

